\newtheorem{theorem}{Theorem}
\newtheorem{lemma}{Lemma} 
\newtheorem{proposition}{Proposition}
\newtheorem{assumption}{Assumption}
\newtheorem{corollary}{Corollary}
\newtheorem{condition}{Condition}
\newcommand{\biggg}{\bBigg@{1.2}}
\def\exp{{\mathrm{exp}}}
\def\DTEone{{\scriptstyle{dte,1}}} 
\def\DTE{{\scriptstyle{dte,d}}} 
\def\dte{{\scriptstyle{dte}}} 
\def\DTEzero{{\scriptstyle{dte,0}}}  
\def\STE{{\scriptstyle{ste,d}}} 
\def\STEone{{\scriptstyle{ste,1}}} 
\def\ste{{\scriptstyle{ste}}}
\def\minK{{\scriptscriptstyle{K}}}
\def\STEzero{{\scriptstyle{ste,0}}} 
\def\ite{{\scriptstyle{ite}}} 
\def\ITE{{\scriptstyle{ite,d}}} 
\def\ITEone{{\scriptstyle{ite,1}}} 
\def\ITEzero{{\scriptstyle{ite,0}}}  
\def\eff{{\scriptstyle{\mathrm{eff}}}} 
\def\np{{\scriptstyle{{np}}}}
\def\T{{ \mathrm{\scriptscriptstyle T} }}
\newcommand*{\indep}{%
 \mathbin{%
  \mathpalette{\@indep}{}%
 }%
}
\newcommand*{\nindep}{%
 \mathbin{
  \mathpalette{\@indep}{\not}
 }%
}
\newcommand*{\@indep}[2]{%
 \sbox0{$#1\perp\m@th$}
 \sbox2{$#1=$}
 \sbox4{$#1\vcenter{}$}
 \rlap{\copy0}
 \dimen@=\dimexpr\ht2-\ht4-.2pt\relax
 \kern\dimen@
 {#2}%
 \kern\dimen@
 \copy0 
}
\newcommand{\V}[1]{\boldsymbol{#1}} 
\def \P{\mathbb{P}} 
\def \V {\mathbb{V}}
\def \E {E}
\def\pr{\textnormal{pr}}
\def\pr{\textnormal{pr}}
\def\T{{ \mathrm{\scriptscriptstyle T} }}
\newcommand*{\addFileDependency}[1]{
  \typeout{(#1)}
  \@addtofilelist{#1}
  \IfFileExists{#1}{}{\typeout{No file #1.}}
}
\begin{document} 

\hypersetup{linkcolor=black}
\title{\bf \LARGE  Identification and estimation of causal peer
effects using instrumental variables}
\author{ Shanshan Luo\textsuperscript{1\dag}, Kang Shuai\textsuperscript{2\dag},     Yechi Zhang\textsuperscript{3}, 
 Wei Li\textsuperscript{3\thanks{Corresponding author}}, and Yangbo He\textsuperscript{2}   \\\\ 	\textsuperscript{1} School of Mathematics and Statistics, Beijing Technology and Business University\\\\\textsuperscript{2} School of Mathematical Sciences, Peking University\\\\ \textsuperscript{3} Center for Applied Statistics and School of Statistics,  Renmin University of China}

\date{} 
\maketitle  
\renewcommand{\thefootnote}{\dag}
\footnotetext{Equal contribution}
\hypersetup{linkcolor=blue}
\begin{abstract}
In social science researches, causal inference regarding peer effects often faces significant challenges due to homophily bias and contextual confounding. For example, unmeasured health conditions (e.g., influenza) and psychological states (e.g., happiness, loneliness) can spread among closely connected individuals, such as couples or siblings. To address these issues, we define four effect estimands for dyadic data to characterize {{direct effects and}}  spillover effects. We employ dual instrumental variables to achieve nonparametric identification of {{these causal estimands}} in the presence of unobserved confounding. We then derive the efficient influence functions for {{these estimands}} under the nonparametric model. Additionally, we develop a triply robust and locally efficient estimator that remains consistent even under partial misspecification of the observed data model.  The proposed robust estimators can be easily adapted to flexible approaches such as machine learning estimation methods,   provided that certain rate conditions are satisfied. Finally, we illustrate our approach through simulations and an empirical application evaluating the peer effects of retirement on fluid cognitive perception among couples.

\end{abstract}  
\section{Introduction}



Social and biomedical scientists have long been interested in how ego's behavior is affected by the behavior of its peers  \citep{meng2017impact}. There are many practical scenarios including group randomized experiments in school \citep{hong2006evaluating,li2019randomization}, twin studies or dyadic data in biology \citep{boomsma2002classical,voracek2007genetics,egami2023identification}, network experiments in economics \citep{manski1993identification, gao2023causal}, vaccine trials in epidemiology \citep{halloran1995causal,perez2014assessing,Park2023JASA} and policing assignments in city governance \citep{blattman2021place,savje2023causal}. Historically, causal inference has been primarily developed under potential outcome framework assuming the ego’s potential outcomes do not depend
on peer units’ treatments. Recently, causal inference researchers have focused on extending the classical potential outcome framework to accommodate interference \citep{hudgens2008toward,tchetgen2012causal,liu2014large}.



Although this issue is crucial, identifying and estimating causal peer effects face some challenges. It is often difficult to identify causal peer effects in observational studies due to unobserved network confounding factors, such as homophily bias and contextual confounding  \citep{manski1993identification,liu2022regression}. Homophily bias refers to the tendency for individuals to connect based on unobserved characteristics. Contextual confounding occurs when peers share some unobserved contextual factors. Concerns about these potential biases have led influential papers across interdisciplinary fields to criticize previous analyses of peer effects from observational studies  \citep{angrist2014perils}. \citet{shalizi2011homophily} discussed that it is nearly impossible to reliably estimate causal peer effects in observational studies using direct confounding adjustment methods (e.g., regression-based adjustment) due to pervasive concerns about unobserved network confounding.

In observational studies, two main approaches have been proposed to address concerns about uncontrolled network confounding. The first approach involves the use of negative control variables, which helps to detect and mitigate the impact of unmeasured confounding in network data. For instance, \citet{elwert2008wives} used negative control exposure to detect homophily bias in dyadic data but did not provide a formal method for leveraging negative control outcomes to fully account for this bias. Furthermore, \citet{egami2018identification} and \citet{liu2022regression} developed unbiased estimators for the average causal peer effect by establishing sufficient conditions for detecting unmeasured confounding using a single negative control variable. More recently, \citet{egami2023identification} explored the nonparametric identification of causal peer effects under network confounding using a pair of negative control variables for dyadic data.

The second approach involves instrumental variables (IVs), which mainly focuses on intention-to-treat (ITT) analysis and aims to estimate causal effects for complier subgroups rather than the entire population. Several studies have contributed to this approach. For instance, \citet{Vazquez2022spillover} analyzed the identification of direct effects and spillover effects under one-sided noncompliance, demonstrating that causal effects for certain subpopulations can be recovered using two-stage least square  estimators. Other works, such as those by \citet{kang2016peer}, \citet{kang2018spillover}, \citet{imai2021causal}, \citet{ditraglia2023identifying}, and \citet{hoshinoa2023causal}, also utilized IVs to estimate causal peer effects in the presence of confounding based on the ITT analysis. Additionally, \citet{bramoulle2009identification} and~\citet{o2014estimating} established the parametric identification of causal peer effects using an IV approach within linear models. But until now, there are still no researches on nonparametric identification of causal peer effect from the IV perspective.

In this paper, we formally define the average causal direct effect, average causal spillover effect, and average causal interaction effect within the dyadic context and IV framework, accounting for both interference and unmeasured network confounding. Specifically, we define the effect of the ego's treatment on the ego's outcome as the direct effect, and the effect of peer's treatment on ego's outcome as the spillover effect \citep{Vazquez2022spillover}. By applying a modified unmeasured common effect modifier assumption \citep{wang2018bounded,cui2024semiparametric}, we establish the nonparametric identification of the causal parameters of interest using dual IVs. We calculate the efficient influence functions (EIFs) for the causal parameters of interest, constructing triply robust estimators that are consistent across the union of three different observed data models and are locally semiparametric efficient. 



The rest of this paper is organized as follows. In Section \ref{sec2}, we introduce the IV framework for partial interference setting and define various causal estimands of interest. The sufficient conditions for establishing the identifiability of these effects are also provided. In Section \ref{sec:estimation}, we derive the EIF for these estimators and construct corresponding efficient estimators with triple robustness. We also discuss the flexible estimation of nuisance parameters, including machine learning  and  nonparametric estimation method. Simulation experiments and a real world data illustration are performed in Sections \ref{simulation} and \ref{sec:app}, respectively. We also consider an extension to 
the average interaction effect in Section \ref{sec:exten-AIE}. We conclude with a brief discussion in Section \ref{sec:discussion}. Proofs of all theoretical results are relegated to the Supplementary Material.

\section{Framework and notation}\label{sec2}

Suppose we observe $n$ independent and identically distributed dyadic samples (e.g., couples or twins), indexed by $i \in \{1, \ldots, n\}$.  For each pair of individuals in the sample, we denote $D_{i,k}$ as the treatment received by individual $k$, and let $Y_{i,k}$ denote the corresponding outcome variable, where $k \in \{1, 2\}$.  To clarify, we index $k=1$ as the ego unit and $k=2$ as the peer unit to assess the causal effects of interest. In our practical example, $D_{ i,1}$ and $D_{ i,2}$ represent whether the male and female are retired, respectively. In practice, identifying peer effects is often challenging due to the presence of homophily bias and contextual confounding, with $D_{i,1}$, $D_{i,2}$, $Y_{i,1}$, and $Y_{i,2}$ all subject to observed covariates $X_i$ and unobserved confounders $U_i$.   For simplicity, we omit the subscript $i$.
 \begin{figure}[t!]
\centering 
  \vspace{0.5cm} 
\begin{minipage}{0.3\textwidth}
\centering 
    \begin{tikzpicture}[
      node distance=1.5cm,
      every node/.style={circle, draw, minimum size=0.6cm},
      every path/.style={-stealth, thick}
    ]
      \node (D1) {$D_1$};
      \node (Y2) at ([yshift=-1cm] D1.south) {$Y_1$}; 
      \node (D2) at ([xshift=2cm] D1.east) {$D_2$}; 
      \node (Y1) at ([yshift=-1cm] D2.south) {$Y_2$}; 
      \node (U) at ([yshift=-1.25cm]$(Y1)!0.5!(Y2)$) [rectangle , minimum size=0.5cm] {$S=1$}; 
      \node (U1) at ([xshift=-1.5cm] U.west) [circle, draw, fill=gray!10, minimum size=0.5cm]  {$U_1$}; 
      \node (U2) at ([xshift=1.5cm] U.east)[circle, draw, fill=gray!10, minimum size=0.6cm] {$U_2$}; 
      
      \draw[line width=0.5pt] (D1) -- (Y1);
      \draw[line width=1.25pt] (D1) -- (Y2); 
      \draw[line width=0.5pt] (D2) -- (Y1);
      \draw[line width=1.25pt] (D2) -- (Y2); 
      \draw[line width=0.5pt] (U1) -- (Y2); 
      \draw[line width=0.5pt] (U2) -- (Y1); 
  \draw[line width=0.5pt] (U1) to [bend left=30] (D1);  
      \draw[line width=0.5pt] (U1) -- (U);
      \draw[line width=0.5pt] (U2) -- (U);
  \draw[line width=0.5pt] (U2) to [bend right=30] (D2);  
    \end{tikzpicture}  
    \centering (a)
\end{minipage} 
\hspace{0.1\textwidth}
\begin{minipage}{0.3\textwidth}
\centering
     \begin{tikzpicture}[
      node distance=1.5cm,
      every node/.style={circle, draw, minimum size=0.6cm},
      every path/.style={-stealth, thick}
    ]
      \node (D1) {$D_1$};
      \node (Y2) at ([yshift=-1cm] D1.south) {$Y_1$}; 
      \node (D2) at ([xshift=3cm] D1.east) {$D_2$}; 
      \node (Y1) at ([yshift=-1cm] D2.south) {$Y_2$}; 
  \node (U) at ([yshift=-1.25cm]$(Y1)!0.5!(Y2)$) [circle, draw, fill=gray!10, minimum size=0.6cm] {$U$}; 
      \draw[line width=0.5pt] (D1) -- (Y1);
      \draw[line width=1.25pt] (D1) -- (Y2); 
      \draw[line width=0.5pt] (D2) -- (Y1);
      \draw[line width=1.25pt] (D2) -- (Y2);
      \draw[line width=0.5pt] (U) -- (D1);
      \draw[line width=0.5pt] (U) -- (Y2);
      \draw[line width=0.5pt] (U) -- (D2);
      \draw[line width=0.5pt] (U) -- (Y1);
    \end{tikzpicture} 
    \centering (b)
\end{minipage}  
\caption{Directed acyclic graphs for dyadic data in the presence of unmeasured network confounding. (a) Unmeasured homophily. (b) Contextual confounding. The thick arrows from $(D_1,D_2)$ to  $Y_1$ indicate the causal effects of interest. The shaded nodes denote the unobserved confounders. For simplicity, we suppressed observed covariates $X$. In panel (a), the square box around $S = 1$ represents that we observe dyads conditional on $S = 1$.}
\label{fig:figures-11}
\end{figure}
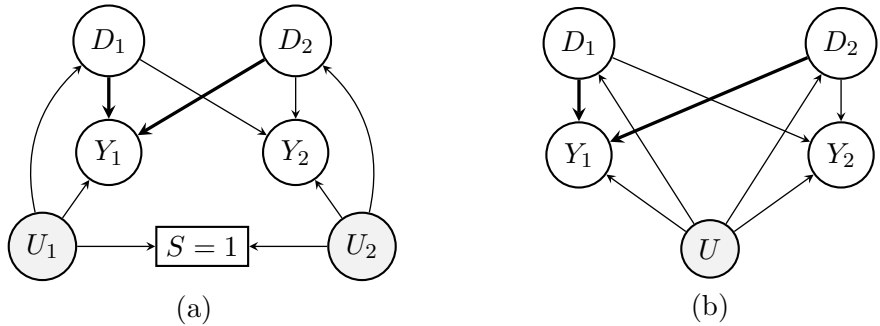

Figure \ref{fig:figures-11} presents two directed acyclic graphs (DAGs) illustrating potential sources of unmeasured confounding in dyadic data. In the left panel (a), we use a binary variable $S$ to denote whether there is a tie for the two units to characterize the homophily bias, where $U=(U_1,U_2)$ denotes the unobserved features. Since the bond relation $S$ is affected by $U_1$ and $U_2$, it is essentially a collider in graph-theoretic terms \citep{Pearl:2011}. When we condition on $S$, the spurious association {{among $\{ D_1,D_2,Y_1,Y_2\}$}} may lead to biased estimates and invalid conclusions{{~\citep{elwert2014endogenous}}}. In the right panel (b), the causal DAG illustrates unmeasured contextual confounders $U$, which simultaneously  affects both the ego and the peer through the pathways $Y_1 \leftarrow U \rightarrow D_1$ and  $Y_1 \leftarrow U \rightarrow D_2$.


Suppose the ego and the peer unit each have one binary IV, $Z_1 \in \{0,1\}$ and $Z_2 \in \{0,1\}$, where we assume the condition $Z_1 \indep Z_2 \mid X$. This requirement can be satisfied through randomized trials or experimental designs.  We next focus on causal estimands for the ego unit (i.e., unit 1) and the corresponding outcome $Y_1$, as the indices of the two units are essentially exchangeable. Let $O =(X ,Z_1,Z_2,D_1,D_2,Y_1)$ denote the vector of all observed variables. We first impose the following assumptions for two IVs $Z_1$ and $Z_2$.
  \begin{assumption}[Instrumental variables] 
 \label{assump:iv}
 Suppose the following assumptions hold:
  
   (a) Independence of the instruments: $ (Z_1,Z_2)\indep U\mid X $;

  (b) No instrument spillover: $(Z_1,D_1)\indep (Z_2,D_2) \mid (U,X)$;
 
    (c) Outcome exclusion restriction:  $(Z_1,Z_2)\indep Y_1\mid (D_1,D_2,U,X)$;

    (d) Relevance of the instruments:  $Z_1\nindep D_1\mid X $ and $Z_2\nindep D_2\mid X$.       
 \end{assumption}
Assumption \ref{assump:iv}(a) ensures that the two IVs $Z_1$ and $Z_2$ are unconfounded given $X$, which naturally holds under the randomized instrument settings. 
 Assumption \ref{assump:iv}(b) requires that $Z_1$ and $Z_2$ are both single-treatment IV, meaning that no causal effect of $Z_1$ on $D_2$ or $Z_2$ on $D_1$ exists. 
Assumption \ref{assump:iv}(c) implies that the two IVs $(Z_1,Z_2)$  do not directly affect the ego's outcome $Y_1$, so the potential outcomes are functions of treatment status only \citep{Vazquez2022spillover}. Let $Y_1(z_1, z_2, d_1, d_2)$ denote the potential value of outcome $Y_1$ when $Z_1,Z_2,D_1,D_2$ were respectively set to be $z_1,z_2,d_1,d_2$. Assumption \ref{assump:iv}(c) implies that $Y_1(z_1, z_2, d_1, d_2) = Y_1(d_1, d_2)$, thereby ensuring that two IVs $Z_1$ and $Z_2$ have no direct effect on $Y_1$. Assumption \ref{assump:iv}(d) is commonly adopted in instrumental variable analysis and can be verified with observed data \citep{Angrist:1996}. Figure \ref{fig:figures} completes Figure \ref{fig:figures-11} by incorporating the case where additional two IVs satisfying Assumption \ref{assump:iv} are present. Based on the simplified potential outcome notation from Assumption~\ref{assump:iv}, the following assumptions ensure the confounding sufficiency of $(X,U)$ and the positivity of two treatments and IVs. 

 \begin{assumption} 
 \label{assump:latent_overlap}
 Suppose the following assumptions hold:
 
   (a) {{Latent ignorability}}: $ (Z_1,Z_2,D_1,D_2)\indep 
   Y_1(d_1,d_2) \mid (X,U) $;

  (b) {{Overlap}}:   $0<\mathrm{pr}(D_1 =d_1, D_2=d_2 , Z_1 =z_1, Z_2 =z_2\mid X )<1$ for $z_1,z_2,d_1,d_2\in \{0,1\}$.  
 \end{assumption}
  Assumption~\ref{assump:latent_overlap}(a) implies that all potential confoundings are captured by the unmeasured confounders $ U $ and the baseline covariates $ X $. This assumption generalizes the standard ignorability condition \citep{rosenbaum1983central} to interference settings with unmeasured network confoundings.  Assumption~\ref{assump:latent_overlap}(b) implies that, for each level of measured  covariates 
 $X$, every possible combination of the two treatment variables and the two IVs has positive probability.  
 
Given the peer's treatment status $ d_2 $, the individual direct effect is defined as the effect of one's ego treatment on its ego outcome, namely, $ Y_{1}(1, d_2) - Y_{1}(0, d_2) $. Given one's ego treatment status $ d_1 $, the individual peer effect or spillover effect is defined as the effect of the peer's treatment on one's ego outcome, namely, $ Y_{1}(d_1, 1) - Y_{1}(d_1, 0) $. In this paper, we mainly focus on the following two estimands: the average direct effect $ \Delta_{\DTE}=\E\{Y_{1}(1, d) - Y_{1}(0, d)\} $ given the peer's treatment status $ d $, and the average spillover effect $ \Delta_{\STE}=\E\{Y_{1}(d, 1) - Y_{1}(d, 0)\} $ given the individual's ego treatment status $ d $.   In the following, the subscripts ``{\it dte}" and ``{\it ste}" are frequently used to refer to expressions related to the average direct treatment effect and the average spillover treatment effect, respectively. We now propose some novel conditions to ensure the nonparametric identification of the average causal direct and spillover effects. 
These conditions are motivated by researches using IVs to address unmeasured confounding  \citep{wang2018bounded,cui2021semiparametric}.

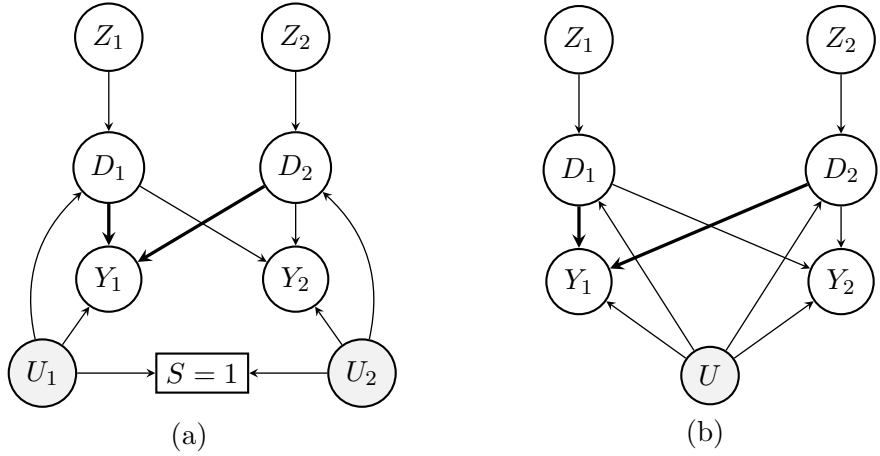
\begin{figure}[t]
\centering 
  \vspace{0.5cm} 
\begin{minipage}{0.3\textwidth}
\centering
    \begin{tikzpicture}[
      node distance=1.5cm,
      every node/.style={circle, draw, minimum size=0.6cm},
      every path/.style={-stealth, thick}
    ]
      \node (D1) {$D_1$};
      \node (Y2) at ([yshift=-1cm] D1.south) {$Y_1$}; 
      \node (D2) at ([xshift=2cm] D1.east) {$D_2$}; 
      \node (Y1) at ([yshift=-1cm] D2.south) {$Y_2$}; 
      \node (U) at ([yshift=-1.25cm]$(Y1)!0.5!(Y2)$) [rectangle , minimum size=0.5cm] {$S=1$}; 
      \node (U1) at ([xshift=-1.5cm] U.west) [circle, draw, fill=gray!10, minimum size=0.6cm]  {$U_1$}; 
      \node (U2) at ([xshift=1.5cm] U.east) [circle, draw, fill=gray!10, minimum size=0.6cm] {$U_2$}; 
 
      \draw[line width=0.5pt] (D1) -- (Y1);
      \draw[line width=1.25pt] (D1) -- (Y2); 
      \draw[line width=0.5pt] (D2) -- (Y1);
      \draw[line width=1.25pt] (D2) -- (Y2); 
      \draw[line width=0.5pt] (U1) -- (Y2); 
      \draw[line width=0.5pt] (U2) -- (Y1); 
      \node (Z1) at ([yshift=1.25cm] D1.north)   {$Z_1$}; 
      \node (Z2) at ([yshift=1.25cm] D2.north)   {$Z_2$};
      \draw[line width=0.5pt] (Z1) -- (D1); 
      \draw[line width=0.5pt] (Z2) -- (D2);  
      \draw[line width=0.5pt] (U1) to [bend left=30] (D1);  
      \draw[line width=0.5pt] (U1) -- (U);
      \draw[line width=0.5pt] (U2) -- (U);
      \draw[line width=0.5pt] (U2) to [bend right=30] (D2);  
    \end{tikzpicture}
    \centering (a)
    \label{fig:first}
\end{minipage} 
\hspace{0.1\textwidth}
\begin{minipage}{0.3\textwidth}
\centering
     \begin{tikzpicture}[
      node distance=1.5cm,
      every node/.style={circle, draw, minimum size=0.6cm},
      every path/.style={-stealth, thick}
    ]
      \node (D1) {$D_1$};
      \node (Y2) at ([yshift=-1cm] D1.south) {$Y_1$}; 
      \node (D2) at ([xshift=3cm] D1.east) {$D_2$}; 
      \node (Y1) at ([yshift=-1cm] D2.south) {$Y_2$}; 
      \node (U) at ([yshift=-1.25cm]$(Y1)!0.5!(Y2)$) [circle, draw, fill=gray!10, minimum size=0.6cm] {$U$}; 
      \node (Z1) at ([yshift=1.25cm] D1.north)   {$Z_1$}; 
      \node (Z2) at ([yshift=1.25cm] D2.north)   {$Z_2$};
      \draw[line width=0.5pt] (Z1) -- (D1); 
      \draw[line width=0.5pt] (Z2) -- (D2);  
      \draw[line width=0.5pt] (D1) -- (Y1);
      \draw[line width=1.25pt] (D1) -- (Y2); 
      \draw[line width=0.5pt] (D2) -- (Y1);
      \draw[line width=1.25pt] (D2) -- (Y2);
      \draw[line width=0.5pt] (U) -- (D1);
      \draw[line width=0.5pt] (U) -- (Y2);
      \draw[line width=0.5pt] (U) -- (D2);
      \draw[line width=0.5pt] (U) -- (Y1);
    \end{tikzpicture}
    \centering (b)
    \label{fig:second}
\end{minipage} 
\caption{Directed acyclic graphs for dyadic data in the presence of unmeasured network confounding. (a) Unmeasured homophily. (b) Contextual confounding. The arrows from $(D_1,D_2)$ to $Y_1$ indicate the causal effects of interest. We use shaded nodes to denote unobserved confounders.}
\label{fig:figures}
\end{figure}

\begin{assumption}[No unmeasured common effect modifier]
\label{assump:no-interaction} We consider the following conditions: 
 
(a) $ \operatorname{cov}\{\Gamma_{\DTE} (X, U), \delta_{ \DTE } (Z_2,X, U) \mid X\}=0,~~\text{for}~~d\in\{0,1\}, $

(b)  $\operatorname{cov}\{\Gamma_{\STE} (X, U), \delta_{ \STE } (Z_1,X, U) \mid X\}=0,~~\text{for}~~d\in\{0,1\},$\\
where 
{\small  \begin{gather*}
    \Gamma_{\DTE}  \left( X,U \right) = E\left\{Y_1\left( 1,d\right)-Y_1\left(0,d\right)\mid X,U\right\} ,~~\Gamma_{\STE}  \left(X,U\right) = E\left\{Y_1\left(  d,1\right)-Y_1\left(d,0\right)\mid X,U\right\}   ,\\ \delta_{{\DTE }} \left(Z_2, X,U\right) = E\left\{D_{1}\mathbb{I}(D_2=d) \mid Z_1=1 , Z_2,X,U\right\}  - E\left\{D_{1}\mathbb{I}(D_2=d)\mid Z_1=0 ,Z_2,X,U \right\} ,\\\delta_{{\STE}} \left( Z_1,X,U \right) = E\left\{\mathbb{I}(D_1=d)D_{2} \mid Z_1 , Z_2=1,X,U\right\}  -  E\left\{\mathbb{I}(D_1=d)D_{2} \mid Z_1 , Z_2=0,X,U\right\}  ,
\end{gather*}}
\end{assumption}
Due to the symmetric form of Assumption \ref{assump:no-interaction}(a) and \ref{assump:no-interaction}(b), we only provide some specific illustrations for Assumption \ref{assump:no-interaction}(a). Assumption \ref{assump:no-interaction}(a) essentially states that, conditioning on measured covariates $ X $, there is no common effect modifier by {{an}} unmeasured confounder for both (1) the additive effect of one's ego treatment on the ego outcome and (2) the additive effect of the peer's IV on the combination of two treatments $  D_1 $  and $ D_2 $. Taking $  d=1$  as an example, a more stringent version of Assumption \ref{assump:no-interaction}(a) can be stated as follows:  
\begin{gather*}
     ~~\Gamma_{\DTEone} (X, U) = E\left\{Y_1\left( 1,1\right)-Y_1\left(0,1\right)\mid X \right\} ,\\ \text{or }
~~\delta_{\DTEone} \left( Z_2,X, U\right) = E\left(D_{1}D_2 \mid Z_1=1 , Z_2,X\right) - E\left(D_{1}D_2 \mid Z_1=0 ,Z_2,X \right). 
\end{gather*}  
The first equality states that, when conditioning on measured covariates $ X $, unmeasured confounders $ U $ do not modify the causal effect of $ D_1 $ on $ Y_1 $ on the additive scale.  The second condition implies that, conditional on $ Z_2 $ and $ X $, unmeasured confounders $ U $ do not modify the causal effect of $ Z_1 $ on   $ D_1D_2 $ on the additive scale. Furthermore, in the absence of interference,  that is,   when $(Z_1, D_1, Y_1) \indep (Z_2, D_2) \mid (X,U)$, the potential outcomes reduce to $Y_1(d_1, d_2) = Y_1(d_1)$ for $d_1, d_2 \in \{0,1\}$. Consequently, the causal estimands involved in Assumption \ref{assump:no-interaction}  reduce to:
\begin{gather*} 
\Gamma_{\DTE} (X,U)  = E\left\{Y_1\left( 1 \right)-Y_1\left(0 \right)\mid X,U\right\} ,\\
 \Gamma_{\STE}  \left(X,U \right) = E\left\{Y_1\left(  d \right)-Y_1\left(d \right)\mid X,U\right\}   =0,\\\delta_{{\DTE}} \left( Z_2,X,U\right) =\left\{ E\left(D_{1}  \mid Z_1=1 , X,U\right) - E\left(D_{1}  \mid Z_1=0 , X,U \right)\right\}\pr(D_2=d\mid Z_2,X) ,\\\delta_{{\STE}} \left(Z_1,X,U\right) = \left\{E\left( D_2 \mid Z_2=1,X \right) - E\left( D_2 \mid  Z_2=0,X \right) \right\}\pr(D_1=d\mid Z_1,X,U). 
\end{gather*} 
 As a result, Assumption \ref{assump:no-interaction}(a) becomes $$\mathrm{cov} \big[ E\{ Y_1(1) - Y_1(0) \mid X,U \}, \{ E(D_1 \mid Z_1=1,X,U) - E(D_1 \mid Z_1 =0,X,U)  \} \mid X\big] = 0, $$ and Assumption \ref{assump:no-interaction}(b) naturally holds due to $\Gamma_{\STE}(Z_1,X,U)=0$. Therefore, this assumption can be considered as an extension of the unmeasured common effect modifier assumption by \citet{wang2018bounded}, adapted to the context of interference.

\begin{table}[t]
	\centering
  \vspace{0.5cm} 
	\caption{Summary of notations for direct effects used in this paper.}
	\label{tab:notations_direct}
 \resizebox{0.93\textwidth}{!}{
	\begin{tabular}{ccc}
		\toprule
    Notation & Definition  & Description   \\
        \midrule
		 $\pi_{\dte}(Z_1,X)$ & $\pr(Z_1\mid X)$    & Instrumental propensity score   \\
		$\delta_{\DTE}^j(X)$	& $\pr(D_{1}=1,D_2=d\mid Z_{{1}}=j,X)$    &  Conditional treatment probability  \\
		 $\delta_{\DTE}(X)$  & $\delta_{\DTE}^1(X) - \delta_{\DTE}^0(X)$   &  Treatment probability difference  \\
         $\mu_{\DTE}(X)$ & $E\{ D_{1} \mathbb{I}(D_{2}=d)\mid Z_1=0 ,X\}$   &  Conditional treatment expectation  \\
		 $\eta_{\DTE}(X)$ & $E\{ Y_1\mathbb{I}(D_{2}=d)\mid Z_1=0,X\}$  &  Conditional outcome expectation \\
       {{$\omega_{\DTE}(X)$}}  &  $E\{  Y_1(1,d) - Y_1(0,d) \mid X \}$  & Conditional effect expectation\\
		\bottomrule
	\end{tabular}%
 	}
\end{table}%

  To simplify the exposition, we introduce notations for the average direct effect in Table~\ref{tab:notations_direct}.  
Notations for the spillover effects are analogous and can be obtained by interchanging the indices of the two treatments and two instrumental variables.  
For example, \( \pi_{\ste}(Z_2, X) = \pr( Z_2 \mid X) \) represents the instrumental propensity score used for identifying spillover effects. These notations  for the spillover effects  are provided in Section~\ref{notation_spillover} of the Supplementary Material.  
As shown below, different identification formulas for the average direct effects can be derived from distinct functionals of the observed data.   


    \begin{theorem}
    \label{thm:iden}
    Under Assumptions \ref{assump:iv},\ref{assump:latent_overlap} and \ref{assump:no-interaction}(a), the average direct effect is identified by $  \Delta_{\DTE}  =E\left\{\omega_{\DTE}(X)\right\}$, where   \begin{align}
 \label{eq1}
 \omega_{\DTE} \left( X \right) = \frac{E\{ Y_1 \mathbb{I}(D_2 =d ) \mid Z_1 =1,X\} - E\{ Y_1 \mathbb{I}(D_2 =d)\mid Z_1 =0,X \} }{ E\{ D_1 \mathbb{I}(D_2 =d ) \mid Z_1 =1,X\} - E\{ D_1 \mathbb{I}(D_2 =d)\mid Z_1 =0,X \} }. 
 \end{align}
 Similarly, under Assumptions \ref{assump:iv},\ref{assump:latent_overlap} and \ref{assump:no-interaction}(b),  the average spillover effect   is identified by  $\Delta_{\STE} = E \left\{\omega_{\STE}(X) \right\}$, where  \begin{align}
 \label{eq2}
 \omega_{\STE} \left( X \right) = \frac{E\{ Y_1 \mathbb{I}(D_1 =d ) \mid Z_2 =1,X\} - E\{ Y_1 \mathbb{I}(D_1 =d)\mid Z_2 =0,X \} }{ E\{ D_2 \mathbb{I}(D_1 =d ) \mid Z_2 =1,X\} - E\{ D_2 \mathbb{I}(D_1 =d)\mid Z_2 =0,X \} }.
 \end{align}
\end{theorem}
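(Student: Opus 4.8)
The plan is to prove the identification formula \eqref{eq1} for $\omega_{\DTE}(X)$; the spillover case \eqref{eq2} then follows verbatim by interchanging the roles of the indices $1$ and $2$ on the instruments and treatments, as noted in the statement. The strategy is the standard IV-ratio argument adapted to the interference setting: I will show separately that, conditional on $X$, the numerator of \eqref{eq1} equals $\Gamma_{\DTE}(X,U)$ averaged appropriately times the treatment-probability difference, and the denominator equals exactly that same treatment-probability difference, so that the ratio returns $E\{Y_1(1,d)-Y_1(0,d)\mid X\}=\omega_{\DTE}(X)$ (using the Table~\ref{tab:notations_direct} notation). Taking a further expectation over $X$ gives $\Delta_{\DTE}$.

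First I would expand the numerator. Write $E\{Y_1\mathbb{I}(D_2=d)\mid Z_1=z_1,X\}$ by iterating expectations over $U$ and then over $(D_1,D_2)$ given $(Z_1,Z_2,X,U)$. Using the exclusion restriction in Assumption~\ref{assump:iv}(c) together with latent ignorability in Assumption~\ref{assump:latent_overlap}(a), I can replace $Y_1$ inside the conditional expectation given $(D_1=d_1,D_2=d_2,Z_1,Z_2,X,U)$ by $E\{Y_1(d_1,d_2)\mid X,U\}$ (consistency plus the fact that $(Z_1,Z_2)$ drop out of the conditioning set for the potential outcome). This turns $E\{Y_1\mathbb{I}(D_2=d)\mid Z_1=z_1,Z_2,X,U\}$ into $\sum_{d_1} E\{Y_1(d_1,d)\mid X,U\}\,\pr(D_1=d_1,D_2=d\mid Z_1=z_1,Z_2,X,U)$. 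Differencing over $z_1\in\{1,0\}$, the term $d_1=0$ telescopes with part of the $d_1=1$ term, and after rearranging I get the difference equal to $\Gamma_{\DTE}(X,U)\cdot\delta_{\DTE}(Z_2,X,U)$ plus a leftover $E\{Y_1(0,d)\mid X,U\}$ multiplied by $\{\pr(D_2=d\mid Z_1=1,Z_2,X,U)-\pr(D_2=d\mid Z_1=0,Z_2,X,U)\}$, which vanishes because Assumption~\ref{assump:iv}(b) makes $D_2$ independent of $Z_1$ given $(Z_2,X,U)$. So the inner difference is exactly $\Gamma_{\DTE}(X,U)\,\delta_{\DTE}(Z_2,X,U)$.

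Next I would integrate out $(Z_2,U)$ given $X$. Here Assumption~\ref{assump:iv}(a) (instrument independence $(Z_1,Z_2)\indep U\mid X$) is used to argue the distribution of $U$ given $(Z_1=z_1,Z_2,X)$ does not depend on $(z_1,Z_2)$, so the numerator of \eqref{eq1} equals $E\{\Gamma_{\DTE}(X,U)\,\delta_{\DTE}(Z_2,X,U)\mid X\}$. Now the key no-unmeasured-common-effect-modifier condition, Assumption~\ref{assump:no-interaction}(a), says $\operatorname{cov}\{\Gamma_{\DTE}(X,U),\delta_{\DTE}(Z_2,X,U)\mid X\}=0$, which lets me factor this as $E\{\Gamma_{\DTE}(X,U)\mid X\}\cdot E\{\delta_{\DTE}(Z_2,X,U)\mid X\} = \omega_{\DTE}(X)\cdot E\{\delta_{\DTE}(Z_2,X,U)\mid X\}$. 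Running the identical (but simpler, since $Y_1$ is replaced by $D_1$ and no common-modifier assumption is needed) computation on the denominator of \eqref{eq1} gives $E\{\delta_{\DTE}(Z_2,X,U)\mid X\}$; by Assumption~\ref{assump:latent_overlap}(b) this quantity is nonzero (the relevance Assumption~\ref{assump:iv}(d) guarantees it is not identically zero, but one actually needs the pointwise nonvanishing, which I would note follows from positivity together with the structure of $\delta_{\DTE}$ — or restrict attention to the $X$-set where it holds). Dividing, the common factor cancels and leaves $\omega_{\DTE}(X)$, and $E\{\omega_{\DTE}(X)\}=E[E\{Y_1(1,d)-Y_1(0,d)\mid X\}]=\Delta_{\DTE}$.

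The main obstacle I anticipate is bookkeeping the conditioning sets carefully when I collapse $E\{Y_1\mathbb{I}(D_2=d)\mid Z_1,Z_2,X,U\}$: I must invoke Assumption~\ref{assump:iv}(b) to kill the residual $Z_1$-dependence of the $D_2$-marginal and Assumption~\ref{assump:iv}(a) to make the $U$-law free of the instruments, and it is easy to conflate these or to drop a cross term. A secondary subtlety is ensuring the denominator does not vanish on a positive-probability set of $X$; relevance (Assumption~\ref{assump:iv}(d)) only gives a nonzero average, so strictly the identification formula \eqref{eq1} is meaningful on $\{X : E\{\delta_{\DTE}(Z_2,X,U)\mid X\}\neq 0\}$, and I would state it that way (or assume pointwise relevance). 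Everything else is routine iterated-expectation manipulation.
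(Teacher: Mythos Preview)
Your proposal is correct and follows essentially the same route as the paper's proof: iterate expectations to bring in $(Z_2,U)$, use $Z_1\indep(Z_2,U)\mid X$ to make the outer law free of $z_1$, use consistency plus latent ignorability (Assumption~\ref{assump:latent_overlap}(a)) to replace $Y_1$ by $E\{Y_1(d_1,d)\mid X,U\}$, invoke Assumption~\ref{assump:iv}(b) to kill the leftover $Z_1$-dependence in the $D_2$-marginal, and then apply Assumption~\ref{assump:no-interaction}(a) to factor the covariance. Your bookkeeping of the ``leftover'' term and your flag on pointwise nonvanishing of the denominator are accurate (the paper is silent on the latter point), and the spillover case is indeed obtained by symmetry.
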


Theorem \ref{thm:iden} presents our key identification result for $ \Delta_{\DTE} $ and $ \Delta_{\STE} $. Specifically, the nonparametric identification of the direct effect relies on the IV $Z_1$, while the identification of the spillover effect relies on the IV $Z_2$. The derived expressions take a ratio form of conditional expectations' difference, similar to the well-known Wald estimand in instrumental variable analysis, except that the outcome is multiplied by an indicator function. Interestingly, when there is indeed no spillover effect, $Z_2$ does not directly influence the term $Y_1 \mathbb{I}(D_1 = d)$, resulting in the second term \eqref{eq2} being equal to zero.     It is straightforward to observe that the identification expression of $\omega_{\DTE}(X)$ in~\eqref{eq1} of Theorem \ref{thm:iden} essentially corresponds to the conditional Wald estimand within the subset data with $D_2 =d$, namely,
     \begin{align*} 
 \omega_{\DTE} \left( X \right) = \frac{E( Y_1  \mid Z_1 =1,D_2 =d,X) - E( Y_1  \mid Z_1 =0,D_2 =d,X ) }{ E( D_1   \mid Z_1 =1,D_2 =d,X) - E( D_1  \mid Z_1 =0,D_2 =d,X) }. 
 \end{align*} Such identification expressions based on subset adjustment have also been discussed in many previous studies \citep{blackwell2017instrumental,Vazquez2022spillover,Blackwell03042023}. Specifically, the above expression shares similarities with a closely related paper in the interference context by~\citet{Vazquez2022spillover}, whose Proposition 1 similarly considers a Wald estimand on the subset   $Z_2 = z $ for $z\in\{0,1\}$, focusing on the local average treatment effect, while our goal is to recover the average effect over the entire population.  Additionally, in a related paper involving multiple treatment–instrument pairs, \citet{blackwell2017instrumental}  defined conditional effects and causal interaction effects within different principal strata, and their Theorem 1 similarly considers identifying these causal estimands using the subset where $ D_2 = d $ for $d\in\{0,1\}$.

From an initial perspective, Theorem~\ref{thm:iden} may appear to allow the identification of the average direct effect based on the subset where $D_2=d$, using some standard IV analyses similar to those established  in~\citet{wang2018bounded}. However, the identifiability result in that paper cannot be directly applied to the dyadic setting with unmeasured confounding, for the following three reasons.
First, \citet{wang2018bounded} did not account for interference in their framework, and thus their assumptions would need to be fully reformulated conditional on $D_2=d$ in order to be applicable.  
Second, even under their set of assumptions, their identification strategy can only establish the causal effect of $D_1$ on $Y_1$ within the subpopulation defined by $(D_2=d, X)$. Specifically, their method identifies quantities such as $\mathbb{E}\left\{ Y_1(D_1=1) - Y_1(D_1=0) \mid D_2=d, X \right\}$, where the potential outcome $Y_1(d_1)$ represents the value of $Y_1$ under treatment assignment $D_1=d_1$.  
Finally, although the conditional expression identified on the subset $(D_2=d, X)$ matches the form $\omega_{{\DTE}}(X)$ in our framework, after integrating over $X$, their approach can only recover the marginal causal estimand $\mathbb{E}\left\{ Y_1(D_1=1) - Y_1(D_1=0) \mid D_2=d \right\}$, which is not the target parameter of our analysis. As a result, the methods proposed in~\citet{wang2018bounded} cannot be directly applied, and it is necessary to develop new estimation or modeling approaches, as demonstrated later in our paper.

From now on, we mainly discuss the theoretical results regarding the direct effect estimand $ \Delta_{{\DTEone}}$, as similar derivations can be easily applied to other effect estimands. Three additional representations for $ \omega_{\DTEone}(X) $ in Theorem \ref{thm:iden} are presented in the following corollary, each involving a distinct set of  observed data functionals. 

\begin{corollary}
     \label{cor:alternative-form}   Under Assumptions \ref{assump:iv},\ref{assump:latent_overlap} and \ref{assump:no-interaction}(a), we have $ \Delta_{{\DTEone}} =E\{ \omega_{\DTEone}(X)\}$, where $ \omega_{\DTEone}(X)$ can be obtained from the following three ways:
  \vspace{1mm}

    (a) Explicit representation:  $0=E\{\varphi_{\mathrm{ipw}} (O;\pi_{\dte},\delta_{\DTEone} )\mid X\}-\omega_{\DTEone}(X)$, where
    \begin{align*}
\varphi_{\mathrm{ipw}} (O;\pi_{\dte},\delta_{\DTEone} ) =\frac{(-1)^{1-Z_1} D_2Y_1}{\pi_{\dte}(Z_1,X)\delta_{\DTEone}(X)}.
    \end{align*}

    (b) Implicit representation:  $0=E\{\varphi_{\mathrm{g}} (O;\pi_{\dte},\omega_{\DTEone} )\mid X\} $, where
        \begin{gather*}
\varphi_{\mathrm{g}} (O;\pi_{\dte},\omega_{\DTEone} )  = \frac{(-1)^{1-Z_1} D_2}{\pi_{\dte}(Z_1,X) }\left\{Y_1-D_1 
\omega_{\DTEone}(X)\right\}.
    \end{gather*}

     (c) Implicit representation:  $0=E\{\varphi_{\mathrm{reg}} (O; \mu_{\DTEone}, \eta_{\DTEone}, \omega_{\DTEone} )\mid Z_1,X\} $, where
        \begin{align*}
\varphi_{\mathrm{reg}} &(O;\mu_{\DTEone}, \eta_{\DTEone}, \omega_{\DTEone} )   = D_2\{Y_1 -D_1 \omega_{\DTEone}(X) \} -  \eta_{\DTEone}(X)  + \mu_{\DTEone}(X) \omega_{\DTEone}(X). 
    \end{align*}
\end{corollary}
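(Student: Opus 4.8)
The plan is to verify the three representations in Corollary~\ref{cor:alternative-form} by showing that each of the stated estimating equations is an algebraic rearrangement of the identification formula \eqref{eq1} in Theorem~\ref{thm:iden}, so that the only real work is manipulating conditional expectations and invoking Assumption~\ref{assump:iv} to strip off the instrument. Throughout I fix $d=1$ to match the subscript $\DTEone$. The key observation, which I would state first, is the inverse-probability identity
\begin{align*}
E\!\left\{ \frac{(-1)^{1-Z_1}\, g(O_{-Z_1})}{\pi_{\dte}(Z_1,X)} \,\Bigm|\, X \right\}
 = E\{ g \mid Z_1=1,X\} - E\{ g \mid Z_1=0,X\},
\end{align*}
valid for any function $g$ of $(X,Z_2,D_1,D_2,Y_1)$ because $\pi_{\dte}(Z_1,X)=\pr(Z_1\mid X)$ reweights the two arms of $Z_1$ to their conditional means given $X$. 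Applying this with $g=D_2 Y_1$ gives the numerator of $\omega_{\DTEone}(X)$, and with $g=D_1 D_2$ gives the denominator, which by definition equals $\delta_{\DTEone}(X)=\delta_{\DTEone}^1(X)-\delta_{\DTEone}^0(X)$.

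For part (a), I would divide the numerator identity by the scalar $\delta_{\DTEone}(X)$ (nonzero by the relevance part of Assumption~\ref{assump:iv}(d) together with Assumption~\ref{assump:no-interaction}(a), which is what makes the Wald-type ratio well defined as in Theorem~\ref{thm:iden}) and pull the constant inside the conditional expectation, yielding exactly $E\{\varphi_{\mathrm{ipw}}(O;\pi_{\dte},\delta_{\DTEone})\mid X\}=\omega_{\DTEone}(X)$. For part (b), I would start from the equivalent form of \eqref{eq1} written as ``numerator $-$ $\omega_{\DTEone}(X)\times$ denominator $=0$'', i.e.
\begin{align*}
\bigl[E\{D_2 Y_1\mid Z_1=1,X\}-E\{D_2 Y_1\mid Z_1=0,X\}\bigr]
 - \omega_{\DTEone}(X)\bigl[E\{D_1 D_2\mid Z_1=1,X\}-E\{D_1 D_2\mid Z_1=0,X\}\bigr]=0,
\end{align*}
then recombine the two differences using the same IPW identity in reverse with $g=D_2\{Y_1-D_1\omega_{\DTEone}(X)\}$ (legitimate since $\omega_{\DTEone}(X)$ is $X$-measurable), giving $E\{\varphi_{\mathrm{g}}(O;\pi_{\dte},\omega_{\DTEone})\mid X\}=0$. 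Part (c) is the ``regression'' or outcome-model version: I would condition on $(Z_1,X)$ instead, note that $E\{D_2\,Y_1\mid Z_1=0,X\}=\eta_{\DTEone}(X)$ and $E\{D_2 D_1\mid Z_1=0,X\}=\mu_{\DTEone}(X)$ by the definitions in Table~\ref{tab:notations_direct}, and check that at $Z_1=1$ the moment condition $E\{\varphi_{\mathrm{reg}}\mid Z_1=1,X\}=0$ reproduces the numerator-equals-$\omega\times$denominator relation while at $Z_1=0$ it reduces to the tautology $\eta_{\DTEone}(X)-\mu_{\DTEone}(X)\omega_{\DTEone}(X)=\eta_{\DTEone}(X)-\mu_{\DTEone}(X)\omega_{\DTEone}(X)$; averaging the $Z_1=0$ and $Z_1=1$ versions (or simply noting it holds for each value of $Z_1$) then gives the claim as stated conditional on $(Z_1,X)$.

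I expect the main obstacle to be bookkeeping rather than depth: one has to be careful that $\varphi_{\mathrm{reg}}$ is asserted to have conditional mean zero given $(Z_1,X)$ for \emph{both} values of $Z_1$, and verifying the $Z_1=0$ case requires recognizing that the $D_2 Y_1$ and $D_2 D_1$ terms there are precisely $\eta_{\DTEone}(X)$ and $\mu_{\DTEone}(X)$, so the expression collapses identically to zero; this is the step where an off-by-a-sign or a misplaced $(-1)^{1-Z_1}$ could creep in. A secondary point worth stating explicitly is the justification of the IPW identity itself, i.e.\ that $(-1)^{1-Z_1}/\pi_{\dte}(Z_1,X)$ equals $Z_1/\pr(Z_1=1\mid X) - (1-Z_1)/\pr(Z_1=0\mid X)$, whose conditional expectation given $X$ telescopes to the difference of arm-specific means; once that lemma is in hand, all three parts follow by substitution and the earlier identification theorem, so no new assumptions beyond those of Theorem~\ref{thm:iden} are needed.
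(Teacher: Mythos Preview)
Your proposal is correct and follows essentially the same approach as the paper's own proof: both rely on the inverse-probability identity to convert the $(-1)^{1-Z_1}/\pi_{\dte}(Z_1,X)$ weight into a difference of $Z_1$-arm means, then read off (a) and (b) directly from the Wald ratio in Theorem~\ref{thm:iden}, and verify (c) by substituting the definitions of $\eta_{\DTEone}$ and $\mu_{\DTEone}$. The only cosmetic difference is that for part (c) the paper writes the binary decomposition $E(Y_1D_2\mid Z_1,X)=\eta_{\DTEone}(X)+Z_1\omega_{\DTEone}(X)\delta_{\DTEone}(X)$ and $E(D_1D_2\mid Z_1,X)=\mu_{\DTEone}(X)+Z_1\delta_{\DTEone}(X)$ and simplifies in one pass, whereas you check $Z_1=0$ and $Z_1=1$ separately; these are equivalent since $Z_1$ is binary.
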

Corollary \ref{cor:alternative-form} provides three distinct representations for the average direct effect parameter $ \Delta_{\DTEone} $, each leveraging different estimands represented as functionals of the observed data to establish identification formulas.  The first representation (a) is based on two inverse probabilities and can be regarded as an inverse probability weighting type estimator. The second representation (b) builds on the structure of the g-formula, incorporating both propensity scores and the conditional expectation  $\omega_{\DTEone}(X)$. The third representation (c) is entirely based on two  conditional expectations   $\omega_{\DTEone}(X)$ and $\mu_{\DTEone}(X)$.


\section{Estimation}
\label{sec:estimation}
\subsection{Semiparametric theory}
The multiple identification expressions for $ \Delta_{\DTEone} $ based on different components of the observed distribution, provided by Theorem \ref{thm:iden} and Corollary \ref{cor:alternative-form} in the previous section, naturally inspire the development of various estimation methods.  However, it also necessitates the proposal of a more principled estimation approach. We focus on deriving the EIF for $ \Delta_{\DTEone} $ in this section, which serves as a foundation for constructing efficient estimators \citep{bickel1993efficient}. 
Let $ \theta = \{\pi_{{\dte}}(X), \mu_{\DTEone}(X), \eta_{\DTEone}(X), \delta_{\DTEone}(X), \omega_{\DTEone}(X)\} $, where we sometimes omit $ X $ by treating each functional as a general nuisance parameter and use $ \theta = (\pi_{{\dte}}, \mu_{\DTEone}, \eta_{\DTEone}, \delta_{\DTEone}, \omega_{\DTEone}) $ for simplicity. 
Let \( \widehat{\theta} \) denote an estimator of \( \theta \), which may be derived from parametric models with finite-dimensional parameters,  flexible nonparametric approaches, or machine learning methods. A general estimation procedure for parametric models is provided in Section~\ref{sec:par-est}.
Any regular and asymptotically linear estimator $\widehat{\Delta} _{\DTEone}$ has one associated influence function $\varphi \left( O ;\theta \right) $ such that: 
$
\widehat{\Delta} _{\DTEone} - \Delta _{\DTEone} =\mathbb{P}_n\{\varphi \left( O ;\theta \right) \} + o_p(n^{-1/2})$, where $\mathbb{P}_n(\cdot)$ represents the sample averaging operator. 
We consider the nonparametric model $\mathcal{M}_{\text{np}}$ characterized by Assumptions \ref{assump:iv},\ref{assump:latent_overlap} and \ref{assump:no-interaction}(a):
\[
\mathcal{M}_{\text{np}} = \{ f(P) : \text{Assumptions \ref{assump:iv},\ref{assump:latent_overlap} and \ref{assump:no-interaction}(a) hold; } \pi_{{\dte}}, \mu_{\DTEone}, \eta_{\DTEone}, \delta_{\DTEone}, \omega_{\DTEone} \text{ are unrestricted} \}.
\]
Since we do not impose any restrictions on $ \mathcal{M}_{\text{np}} $ except for regularity conditions, the nonparametric estimators of $ \Delta_{\DTEone} $ based on Theorem \ref{thm:iden} and Corollary \ref{cor:alternative-form} are in fact asymptotically equivalent, with a common influence function given in the following Theorem \ref{thm:eif-2}. 
\begin{theorem}
\label{thm:eif-2}
   Under Assumptions \ref{assump:iv},\ref{assump:latent_overlap} and \ref{assump:no-interaction}(a), the EIF for     $\Delta _{\DTEone}$ in  $\mathcal{M}_{\mathrm{np}}$  is given by
\begin{gather*}
\varphi \left( O ; \theta \right) =
   \dfrac{ (-1)^{1-Z_1}}{\pi_{\dte}(Z_1,X) \delta_{\DTEone}(X) }\begin{Bmatrix}
       D_2 Y_1-  D_2D_1\omega_{\DTEone}(X)   \\\addlinespace[1mm]-\eta_{\DTEone}(X)+\mu_{\DTEone}(X)\omega_{\DTEone}(X)
    \end{Bmatrix}
+\omega_{\DTEone}(X) - \Delta_{\DTEone}.
\end{gather*} 
The semiparametric efficiency bound for $\Delta_{\DTEone}$ in  $\mathcal{M}_{\mathrm{np}}$ is $\V ^{\eff}=
E \{ \varphi \left( O ;\theta  \right) \}^2.$ 
\end{theorem}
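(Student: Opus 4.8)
The plan is to obtain $\varphi(O;\theta)$ as the canonical gradient (pathwise derivative) of the identification functional and then promote it to the EIF by invoking uniqueness of the gradient in the nonparametric model $\mathcal{M}_{\mathrm{np}}$. Fix $d=1$ and regard the target as a functional of the observed law $P$: by Theorem~\ref{thm:iden}, $\Delta_{\DTEone}(P)=E_P\{\omega_{\DTEone}(X)\}$ with $\omega_{\DTEone}(X)=\{a_1(X)-a_0(X)\}/\delta_{\DTEone}(X)$, where $a_j(X)=E\{D_2Y_1\mid Z_1=j,X\}$, $b_j(X)=E\{D_1D_2\mid Z_1=j,X\}$, and $\delta_{\DTEone}(X)=b_1(X)-b_0(X)$; note $\eta_{\DTEone}=a_0$, $\mu_{\DTEone}=b_0$, and that $\delta_{\DTEone}$ is bounded away from $0$ by Assumptions~\ref{assump:iv}(d) and~\ref{assump:latent_overlap}(b). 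I would introduce a regular one-dimensional submodel $\{P_t\}$ through $P$ at $t=0$ with score $S(O)=\partial_t\log f_t(O)|_{t=0}$ and compute $\partial_t\,\Delta_{\DTEone}(P_t)|_{t=0}$ by writing $\Delta_{\DTEone}(P_t)=\int\omega_{\DTEone,t}(x)\,f_t(x)\,dx$ and differentiating the product.

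The first contribution, $\int\omega_{\DTEone}(x)\,\partial_tf_t(x)|_{0}\,dx=E[\{\omega_{\DTEone}(X)-\Delta_{\DTEone}\}S(O)]$, yields the term $\omega_{\DTEone}(X)-\Delta_{\DTEone}$ of $\varphi$. For the second, $\int\partial_t\omega_{\DTEone,t}(x)|_{0}\,f(x)\,dx$, I would use the quotient rule, $\partial_t\omega_{\DTEone,t}|_{0}=\{\partial_t(a_1-a_0)|_0-\omega_{\DTEone}\,\partial_t(b_1-b_0)|_0\}/\delta_{\DTEone}$, together with the standard identity $\partial_t\,E_{P_t}\{g\mid Z_1=j,X\}|_{0}=E[\{g-E(g\mid Z_1=j,X)\}S(O)\mid Z_1=j,X]$ for $g\in\{D_2Y_1,\,D_1D_2\}$. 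Collecting the $Z_1=1$ and $Z_1=0$ pieces, converting each conditional expectation into an inverse-probability-weighted unconditional one so that $\mathbb{I}(Z_1=1)/\pi_{\dte}(1,X)-\mathbb{I}(Z_1=0)/\pi_{\dte}(0,X)=(-1)^{1-Z_1}/\pi_{\dte}(Z_1,X)$, and---crucially---eliminating the undesignated $a_1,b_1$ via the identities $a_1=\eta_{\DTEone}+\delta_{\DTEone}\omega_{\DTEone}$ and $b_1=\mu_{\DTEone}+\delta_{\DTEone}$ (so that $-a_1+\omega_{\DTEone}b_1=-\eta_{\DTEone}+\omega_{\DTEone}\mu_{\DTEone}$, the same for both instrument levels), one obtains
\[
\partial_t\,\Delta_{\DTEone}(P_t)\big|_{t=0}=E\{\varphi(O;\theta)\,S(O)\}
\]
with $\varphi$ exactly as stated. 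A short separate check gives $E\{\varphi(O;\theta)\}=0$: the bracketed factor has conditional mean zero given $(Z_1,X)$, since $E(D_2Y_1\mid Z_1=j,X)-\eta_{\DTEone}=\mathbb{I}(j=1)\delta_{\DTEone}\omega_{\DTEone}$ and $E(D_1D_2\mid Z_1=j,X)-\mu_{\DTEone}=\mathbb{I}(j=1)\delta_{\DTEone}$, and $\varphi$ is square-integrable by the overlap assumption.

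It remains to argue that this gradient is the efficient one. As already noted for $\mathcal{M}_{\mathrm{np}}$, the defining restrictions place no constraint on the observed-data law beyond regularity and overlap, so the tangent space at $P$ is all of $L_0^2(P)$; hence $\Delta_{\DTEone}$ admits a unique gradient, which is therefore the canonical gradient, i.e., the EIF, and the semiparametric efficiency bound equals $\mathrm{var}\{\varphi(O;\theta)\}=E\{\varphi(O;\theta)\}^2$ by the standard theory of efficiency bounds \citep{bickel1993efficient}. (Equivalently, one could start from the inverse-probability-weighting gradient $\varphi_{\mathrm{ipw}}(O;\pi_{\dte},\delta_{\DTEone})-\Delta_{\DTEone}$, valid if $\pi_{\dte},\delta_{\DTEone}$ were known, and subtract its projections onto the tangent spaces of the remaining nuisances; this reproduces the same $\varphi$.) The main obstacle is the algebra in the middle step---systematically tracking the two instrument levels and substituting the relations among $a_1,b_1$ and $(\eta_{\DTEone},\mu_{\DTEone},\delta_{\DTEone},\omega_{\DTEone})$ so that only the designated nuisances $\theta$ survive; as a cross-check, the resulting $\varphi$ should be insensitive to first-order perturbations of $\theta$ at the truth, consistent with the triple robustness established later in the paper.
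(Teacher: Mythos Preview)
Your proposal is correct and follows essentially the same pathwise-derivative approach as the paper: differentiate $\int \omega_{\DTEone,t}(x)\,f_t(x)\,dx$ via the product and quotient rules, invoke the standard identity for derivatives of conditional expectations (which the paper packages as a separate lemma), convert to inverse-probability-weighted form, and simplify using $E(D_2Y_1\mid Z_1,X)=\eta_{\DTEone}+Z_1\delta_{\DTEone}\omega_{\DTEone}$ and $E(D_1D_2\mid Z_1,X)=\mu_{\DTEone}+Z_1\delta_{\DTEone}$. The only cosmetic difference is that the paper explicitly factorizes the likelihood as $f(X)f(Z_1\mid X)f(Z_2\mid X)f(Y_1,D_1,D_2\mid Z_1,Z_2,X)$ and verifies each piece of $\varphi$ lies in the corresponding orthogonal tangent-space components $H_1$ or $H_4$, whereas you appeal directly to the tangent space being all of $L_0^2(P)$; both routes yield the same EIF.
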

The EIF form in Theorem \ref{thm:eif-2} motivates us to adopt the estimator $ \widehat{\Delta}_{\DTEone} = \mathbb{P}_n\{\varphi (O; \widehat{\theta})\} $, where the working models involved in $ {\theta} $ can be estimated using either parametric or machine learning approaches. A general estimation procedure for $ {\theta} $ is provided in Section \ref{sec:par-est} of the Supplementary Material. However, it is important to note that the working models in $ {\theta} $ include nuisance functions such as $ {\eta}_{\DTEone}(X) $ and $ {\delta}_{\DTEone}(X) $, which are challenging to correctly specify in practice. This requires us to consider the impact of different nuisance models on the asymptotic properties of the estimator $\widehat{\Delta}_{\DTEone}$.  
In the following subsections, we will explain that under certain constraints on the nuisance models, $ \widehat{\Delta}_{\DTEone} $ always remains consistent. Essentially, it is a triply robust estimator, ensuring valid inference   even if only a subset of the low-dimensional nuisance models are correctly specified. Additionally, we discuss certain constraints on the working models, allowing the use of nonparametric estimation methods to perform valid inference when interference is present.
\subsection{Multiply robust estimation}
\label{sec:mr-est}
In this section, we further discuss the EIF-based estimator $ \widehat{\Delta}_{\DTEone} $  and reveal its multiple robustness properties. We define the $ L_2 $-norm of a function $ h $ with respect to the distribution $ \nu $ of a generic variable $ V$ as $ \Vert h\Vert_2 = \{\int h^2(v)\nu(dv)\}^{1/2} $. Let $ \theta^\ast $ denote the probability limit of the nuisance estimator $ \widehat{\theta} $. In the following, we further provide the convergence rate requirements for the working models to ensure that the proposed estimator $\widehat{ \Delta }_{\DTEone}$ is consistent and achieves the semiparametric efficiency bound.  This concept is similar to the double debiased machine learning approach for estimating the average treatment effect \citep{chernozhukov2018double} and aligns with the principles of multiply robust estimation within other causal inference frameworks \citep{wang2018bounded,Negative-Shi,sun2022JMLR}.

  {{  \begin{assumption}
    \label{assump:regular}
        We consider the following conditions:


            (a) the functions   $\widehat\theta$ and $ \theta^\ast$ are in the same Donsker class;

            
            (b) there exist constants  $M>0$ and $0<\epsilon <1$ satisfying that:  
				\begin{gather*}
					\epsilon<\{ \widehat\pi_{\dte} ,\widehat \mu_{\DTEone} , \pi_{{\dte}},  \mu_{\DTEone} \} <1-\epsilon ,\\
					-M<  \{ {\widehat\eta}_{\DTEone} ,\widehat\omega_{\DTEone},{\widehat\delta}_{\DTEone}, \eta_{\DTEone} , \omega_{\DTEone}, \delta_{\DTEone} \} <M.
				\end{gather*}   
                
             (c)  the following three product terms  are   of order   $o_p(n^{-1/2})$,
			\begin{equation}    \label{eq: bias-terms}
    	\begin{gathered}
		 {\big\|\omega_{\DTEone}-{\widehat\omega}_{\DTEone}\big\|}_2{\big\|\pi_{\dte}\delta_{\DTEone}-\widehat\pi_{\dte}{\widehat\delta}_{\DTEone}\big\|}_2,\\{\big\|\pi_{\dte}-\widehat\pi_{\dte}\big\|}_2 {\big\|\mu_{\DTEone}-{\widehat\mu}_{{\DTEone}}\big\|}_2,~{\big\|\pi_{\dte}-\widehat\pi_{\dte}\big\|}_2{\big\|{{\widehat\eta}}_{\DTEone}-\eta_{\DTEone}\big\|}_2 .
				\end{gathered}
			\end{equation}
    \end{assumption}
}}
         Assumption \ref{assump:regular}(a) restricts the complexity of the function spaces, including both the nuisance functions and their limits \citep{kennedy2016semiparametric, van2000asymptotic}. While Donsker classes cover standard parametric families, they can also accommodate many infinite-dimensional function classes, provided they meet smoothness or boundedness conditions. To avoid stringent empirical process assumptions, techniques like sample splitting or cross-fitting may be used \citep{chernozhukov2018double}. 
The bounded constraints outlined in Assumption \ref{assump:regular}(b) are sufficient to control the error $ \widehat{\omega}_{\DTEone} - \omega_{\DTEone} $, with the terms summing up in forms analogous to \eqref{eq: bias-terms}. Similar to Assumption \ref{assump:regular}(a), Assumption \ref{assump:regular}(c) is relatively mild and holds in many scenarios. Specifically, satisfying condition (c) also ensures that the elements in each product term are $o_p(1)$. Furthermore, Assumption \ref{assump:regular}(c) is fulfilled if the nuisance function estimates converge at a rate faster than $ n^{-1/4} $. Under sparsity or smoothness assumptions, many flexible models can achieve this rate. Machine learning techniques, including random forests \citep{chernozhukov2022automatic}, can meet this requirement. Parametric methods, such as maximum likelihood estimation, can also achieve this rate, though they are more sensitive to model misspecification.  

{{\begin{theorem}
		\label{THM: NON-MARCHINE-LEARNING}	 
 Under Assumptions \ref{assump:iv},\ref{assump:latent_overlap}, \ref{assump:no-interaction}(a), and Assumption \ref{assump:regular}(a,b), the estimator $ \widehat{\Delta}_{\DTEone} $ is consistent with rate of convergence
\begin{equation}
\label{convergence_rate}
\begin{gathered}
\widehat{\Delta}_{\DTEone} - \Delta_{\DTEone} 
= 
O_p\begin{pmatrix}\addlinespace[0.25mm]
    n^{-1/2} + \lVert\omega_{\DTEone}-{\widehat\omega}_{\DTEone}\rVert_2 \lVert\pi_{\dte}\delta_{\DTEone}-\widehat\pi_{\dte}{\widehat\delta}_{\DTEone}\rVert_2 \\\addlinespace[1mm]+ \lVert\pi_{\dte}-\widehat\pi_{\dte}\rVert_2 \lVert\mu_{\DTEone}-
{\widehat\mu}_{\DTEone}\rVert_2
+\lVert\pi_{\dte}-\widehat\pi_{\dte}\rVert_2 \lVert{\widehat\eta}_{\DTEone}-\eta_{\DTEone}\rVert_2 \\\addlinespace[0.25mm]
\end{pmatrix}.  
\end{gathered}
\end{equation}
If Assumption~\ref{assump:regular}(c) also holds, $\widehat{\Delta}_{\DTEone}$ is asymptotically normal and semiparametric efficient, where
$$
\sqrt{n} (\widehat{\Delta}_{\DTEone} - {\Delta}_{\DTEone}) \overset{d}{\to} 
N \bigg( 0, E\big\{ \varphi(O; \theta) \big\}^2\bigg). 
$$
\end{theorem}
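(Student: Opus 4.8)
The plan is to establish the two conclusions of Theorem~\ref{THM: NON-MARCHINE-LEARNING} by a standard von~Mises expansion of the estimating function around the true nuisance parameters. First I would write $\widehat{\Delta}_{\DTEone} - \Delta_{\DTEone} = \mathbb{P}_n\{\varphi(O;\widehat\theta)\} - \Delta_{\DTEone}$ and decompose this into three pieces: an empirical-process (stochastic equicontinuity) term $(\mathbb{P}_n - P)\{\varphi(O;\widehat\theta) - \varphi(O;\theta)\}$, a leading linear term $\mathbb{P}_n\{\varphi(O;\theta)\}$, and the second-order ``drift'' (bias) term $P\{\varphi(O;\widehat\theta)\} - \Delta_{\DTEone}$. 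The first piece is $o_p(n^{-1/2})$ by Assumption~\ref{assump:regular}(a,b): the Donsker condition together with the boundedness constraints (which force the denominators $\pi_{\dte}\delta_{\DTEone}$ away from zero and keep every ingredient in a bounded range) imply $\varphi(O;\widehat\theta)-\varphi(O;\theta)$ lies in a Donsker class with $L_2$-norm $o_p(1)$, so the centered empirical process over it vanishes faster than $n^{-1/2}$; alternatively, under cross-fitting this step is immediate from a conditioning argument. The middle piece is $O_p(n^{-1/2})$ by the central limit theorem applied to the i.i.d. mean-zero summands $\varphi(O;\theta)$.

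The heart of the argument is the drift term. Here I would show, using Theorem~\ref{thm:iden}, Corollary~\ref{cor:alternative-form}, and the iterated-expectation identities $E\{\varphi_{\mathrm{ipw}}\mid X\} = \omega_{\DTEone}(X)$, $E\{\varphi_{\mathrm{g}}\mid X\}=0$, and $E\{\varphi_{\mathrm{reg}}\mid Z_1,X\}=0$, that $P\{\varphi(O;\theta^\ast)\}$ — evaluated at an arbitrary limiting nuisance value $\theta^\ast$ — can be written as an integral over $X$ of products of the errors $\pi_{\dte}-\pi_{\dte}^\ast$, $\mu_{\DTEone}-\mu_{\DTEone}^\ast$, $\eta_{\DTEone}-\eta_{\DTEone}^\ast$, $\delta_{\DTEone}-\delta_{\DTEone}^\ast$, $\omega_{\DTEone}-\omega_{\DTEone}^\ast$. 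Concretely, taking the conditional expectation of $\varphi$ given $(Z_1,X)$ and then over $Z_1$, the cross terms that are linear in a single nuisance error cancel because of the three population moment identities, leaving exactly the three bilinear remainders
\begin{align*}
P\{\varphi(O;\theta^\ast)\} - \Delta_{\DTEone}
&= E\Big[ \{\omega_{\DTEone}-\omega_{\DTEone}^\ast\}\,\{\text{function of }(\pi_{\dte},\delta_{\DTEone})-(\pi_{\dte}^\ast,\delta_{\DTEone}^\ast)\} \Big]\\
&\quad{}+ E\Big[ \{\pi_{\dte}-\pi_{\dte}^\ast\}\,\{\mu_{\DTEone}-\mu_{\DTEone}^\ast\} \Big]
+ E\Big[ \{\pi_{\dte}-\pi_{\dte}^\ast\}\,\{\eta_{\DTEone}-\eta_{\DTEone}^\ast\} \Big].
\end{align*}
Plugging in $\theta^\ast$ replaced by $\widehat\theta$ and applying the Cauchy--Schwarz inequality to each term then bounds the drift by the sum of the three $L_2$-norm products appearing in \eqref{eq: bias-terms}, which gives the rate \eqref{convergence_rate}. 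The triple robustness is read off from this display: if any one of the three model configurations $\{\pi_{\dte},\delta_{\DTEone}\}$ plus $\omega_{\DTEone}$, or $\{\pi_{\dte}\}$ plus $\{\mu_{\DTEone},\eta_{\DTEone}\}$, etc., is correctly specified so that the relevant factor is exactly zero, all three products vanish and $\widehat\Delta_{\DTEone}$ is consistent. Finally, when Assumption~\ref{assump:regular}(c) holds each product is $o_p(n^{-1/2})$, so $\widehat\Delta_{\DTEone}-\Delta_{\DTEone} = \mathbb{P}_n\{\varphi(O;\theta)\} + o_p(n^{-1/2})$, and Slutsky plus the CLT yield the stated $N(0, E\{\varphi(O;\theta)\}^2)$ limit, with the asymptotic variance matching the efficiency bound of Theorem~\ref{thm:eif-2}.

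The main obstacle I anticipate is the bookkeeping in the drift-term computation: one must carefully expand $\varphi(O;\widehat\theta)$, group the terms so that the three population identities can be invoked to kill all the first-order error contributions, and verify that what remains factors cleanly into exactly the three advertised bilinear forms (in particular, that the $\omega_{\DTEone}$-error pairs only with the joint $\pi_{\dte}\delta_{\DTEone}$-error and not separately, which is what makes the ``$\pi_{\dte}\delta_{\DTEone}$ as a single object'' structure in \eqref{eq: bias-terms} correct). A secondary, more routine, point is checking the Donsker/boundedness bookkeeping for the ratio-type influence function — ensuring that division by $\widehat\pi_{\dte}\widehat\delta_{\DTEone}$ preserves the Donsker property and the $o_p(1)$ $L_2$ convergence — which follows from Assumption~\ref{assump:regular}(b) since $\pi_{\dte}$ is bounded inside $(\epsilon,1-\epsilon)$ and $\delta_{\DTEone}$ is bounded away from zero by the relevance part of Assumption~\ref{assump:iv}(d) together with the overlap in Assumption~\ref{assump:latent_overlap}(b); I would state this as a preliminary lemma and then proceed with the expansion.
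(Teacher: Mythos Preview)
Your proposal is correct and follows essentially the same route as the paper: the paper also decomposes $\mathbb{P}_n\{\varphi(O;\widehat\theta)\}-\Delta_{\DTEone}$ into an empirical-process piece (controlled via the Donsker condition in Assumption~\ref{assump:regular}(a)), a leading $O_p(n^{-1/2})$ term, and a drift term whose direct algebraic expansion --- using $E(Y_1D_2\mid Z_1,X)=\eta_{\DTEone}(X)+Z_1\omega_{\DTEone}(X)\delta_{\DTEone}(X)$ and $E(D_1D_2\mid Z_1,X)=\mu_{\DTEone}(X)+Z_1\delta_{\DTEone}(X)$ --- collapses to exactly the three bilinear remainders you identified, with the $(\omega_{\DTEone}-\widehat\omega_{\DTEone})$ factor indeed pairing with $(\pi_{\dte}\delta_{\DTEone}-\widehat\pi_{\dte}\widehat\delta_{\DTEone})$ as a single product; Cauchy--Schwarz and Assumption~\ref{assump:regular}(b) then yield \eqref{convergence_rate}, and part (ii) follows by adding Assumption~\ref{assump:regular}(c). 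One small caution: your remark that $\delta_{\DTEone}$ is bounded away from zero is needed for the ratio bookkeeping but is not literally contained in Assumption~\ref{assump:regular}(b) as written (which only imposes $|\delta_{\DTEone}|<M$); the paper's proof implicitly uses this as well, so you should flag it explicitly as an additional regularity requirement rather than derive it from Assumption~\ref{assump:iv}(d).
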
 
}}

\begin{table}[h]
\centering 

\caption{Multiply robust estimation for various target causal effect estimands.}
\label{model: all-estimators}
\resizebox{0.995\columnwidth}{!}{%
\begin{threeparttable}
\begin{tabular}{ccccccc} 
\toprule
Estimand                                 &  & Multiply robust estimator                                                                                                                                                                                                                                                                                                                                                                            &                   & $\mathcal{M}_\text{ipw}$                                                                 & $\mathcal{M}_\text{g}$                                                                    & $\mathcal{M}_\text{reg}$  \\ 
\cline{3-3}\cline{5-7}\addlinespace[5mm]
\multirow{3}{*}{$\Delta_{\DTEone}$}  &  & \multirow{3}{*}{$ \begin{array}{c} \mathbb{P}_n\left[\begin{array}{c}\dfrac{(-1)^{1-Z_1}}{\widehat\pi_{\dte} (Z_1,X){\widehat\delta}_{\DTEone}(X)}\begin{Bmatrix} D_2Y_1- {\widehat\eta}_{\DTEone}(X)-D_1D_2\widehat\omega_{\DTEone}(X)\\\addlinespace[1mm]+ \widehat\mu_{\DTEone}(X) \widehat\omega_{\DTEone}(X)\end{Bmatrix}+\widehat\omega_{\DTEone}(X) \end{array}\right] \end{array}$}          & \multirow{3}{*}{} & \multirow{3}{*}{$ \begin{array}{c} \delta_{\DTEone}(X)\\\pi_{\dte}(Z_1,X) \end{array}$}  & \multirow{3}{*}{$ \begin{array}{c} \omega_{\DTEone}(X)\\\pi_{\dte}(Z_1,X) \end{array} $}  & $\mu_\DTEone(X)$          \\
                                     &  &                                                                                                                                                                                                                                                                                                                                                                                                      &                   &                                                                                          &                                                                                           & $\eta_\DTEone(X)$         \\
                                     &  &                                                                                                                                                                                                                                                                                                                                                                                                      &                   &                                                                                          &                                                                                           & $\omega_{\DTEone}(X)$       \\ 
\addlinespace[5mm]\cline{3-3}\cline{5-7}\addlinespace[5mm]
\multirow{3}{*}{$\Delta_{\DTEzero}$} &  & \multirow{3}{*}{$\begin{array}{c} \mathbb{P}_n\left[\begin{array}{c}\dfrac{(-1)^{1-Z_1}}{\widehat\pi_{\dte} (Z_1,X){\widehat\delta}_{\DTEzero}(X)}\left\{(1-D_2)Y_1- {\widehat\eta}_{\DTEzero}(X)-D_1(1-D_2)\widehat\omega_{\DTEzero}(X)\right.\\\left.+ \widehat\mu_{\DTEzero}(X)\widehat\omega_{\DTEzero}(X)\right\}+\widehat\omega_{\DTEzero}(X) \end{array}\right] \end{array}$}                 & \multirow{3}{*}{} & \multirow{3}{*}{$ \begin{array}{c} \delta_{\DTEzero}(X)\\\pi_{\dte}(Z_1,X) \end{array}$} & \multirow{3}{*}{$ \begin{array}{c} \omega_{\DTEzero}(X)\\\pi_{\dte}(Z_1,X) \end{array} $} & $\mu_\DTEzero(X)$         \\
                                     &  &                                                                                                                                                                                                                                                                                                                                                                                                      &                   &                                                                                          &                                                                                           & $\eta_\DTEzero(X)$        \\
                                     &  &                                                                                                                                                                                                                                                                                                                                                                                                      &                   &                                                                                          &                                                                                           & $\omega_{\DTEzero}(X)$      \\ 
\addlinespace[5mm]\cline{3-3}\cline{5-7}\addlinespace[5mm]\addlinespace[2mm]
\multirow{3}{*}{$\Delta_{\STEone}$}  &  & \multirow{3}{*}{$\begin{array}{c} \mathbb{P}_n\left[\begin{array}{c}\dfrac{(-1)^{1-Z_2}}{\widehat\pi_{ \ste}(Z_2,X){\widehat\delta}_{\STEone}(X)}\left\{D_1Y_1-{\widehat\eta}_{\STEone}(X)-D_1D_2\widehat\omega_{\STEone}(X)\right.\\\addlinespace[1mm]\left.+\widehat\mu_{\STEone}(X)\widehat\omega_{\STEone}(X)\right\}+\widehat\omega_{\STEone}(X) \end{array}\right] \end{array}$}               & \multirow{3}{*}{} & \multirow{3}{*}{$ \begin{array}{c} \delta_{\STEone}(X)\\\pi_{\ste}(Z_2,X) \end{array}$}  & \multirow{3}{*}{$ \begin{array}{c} \omega_{\STEone}(X)\\\pi_{\ste}(Z_2,X) \end{array} $}  & $\mu_\STEone(X)$          \\
                                     &  &                                                                                                                                                                                                                                                                                                                                                                                                      &                   &                                                                                          &                                                                                           & $\eta_\STEone(X)$         \\
                                     &  &                                                                                                                                                                                                                                                                                                                                                                                                      &                   &                                                                                          &                                                                                           & $\omega_{\STEone}(X)$       \\ 
\addlinespace[5mm]\addlinespace[2mm]\cline{3-3}\cline{5-7}\addlinespace[5mm]
\multirow{3}{*}{$\Delta_{\STEzero}$} &  & \multirow{3}{*}{$\begin{array}{c} \mathbb{P}_n\left[\begin{array}{c}\dfrac{(-1)^{1-Z_2}}{\widehat\pi_{ \ste}(Z_2,X){\widehat\delta}_{\STEzero}(X)}\left\{(1-D_1)Y_1-{\widehat\eta}_{\STEzero}(X)-(1-D_1)D_2\widehat\omega_{\STEzero}(X)\right.\\\addlinespace[1mm]\left.+\widehat\mu_{\STEzero}(X)\widehat\omega_{\STEzero}(X)\right\}+\widehat\omega_{\STEzero}(X) \end{array}\right] \end{array}$} & \multirow{3}{*}{} & \multirow{3}{*}{$ \begin{array}{c} \delta_{\STEzero}(X)\\\pi_{\ste}(Z_2,X) \end{array}$} & \multirow{3}{*}{$ \begin{array}{c} \omega_{\STEzero}(X)\\\pi_{\ste}(Z_2,X) \end{array} $} & $\mu_\STEzero(X)$         \\
                                     &  &                                                                                                                                                                                                                                                                                                                                                                                                      &                   &                                                                                          &                                                                                           & $\eta_\STEzero(X)$        \\
                                     &  &                                                                                                                                                                                                                                                                                                                                                                                                      &                   &                                                                                          &                                                                                           & $\omega_\STEzero(X)$      \\\addlinespace[5mm]
\bottomrule
\end{tabular}
 \end{threeparttable} }
\end{table}

Theorem \ref{THM: NON-MARCHINE-LEARNING} establishes the asymptotic convergence of the proposed estimator $ \widehat{\Delta}_{\DTEone} $ under Assumptions \ref{assump:regular}(a-c), and also demonstrates the multiple robustness of the estimator $ \widehat{\Delta}_{\DTEone} $. Table \ref{model: all-estimators} summarizes the multiple robustness properties underlying each estimator. For example, for the direct effect estimand $\Delta_\DTEone$, three submodels in Table \ref{model: all-estimators} are defined as:
\[
\mathcal{M}_{\text{ipw}} = \{ f(P) : \text{Assumptions \ref{assump:iv},\ref{assump:latent_overlap} and \ref{assump:no-interaction}(a) hold; } \pi_{{\dte}}, \delta_{\DTEone}  \text{ are correctly specified} \},
\]
\[
\mathcal{M}_{\text{g}} = \{ f(P) : \text{Assumptions \ref{assump:iv},\ref{assump:latent_overlap} and \ref{assump:no-interaction}(a) hold; } \pi_{{\dte}},  \omega_{\DTEone}  \text{ are correctly specified} \},
\]
\[
\mathcal{M}_{\text{reg}} = \{ f(P) : \text{Assumptions \ref{assump:iv},\ref{assump:latent_overlap} and \ref{assump:no-interaction}(a) hold; }  \mu_{\DTEone}, \eta_{\DTEone},\omega_{\DTEone}  \text{ are correctly specified} \}.
\]
Under model $\mathcal{M}_{\text{ipw}}$, we have   
\begin{align}
    \label{ipw:rate}
    {\big\|\pi_{\dte}-\widehat\pi_{\dte}\big\|}_2  = o_p(1) 
\quad \text{and} \quad  
{\big\|{{\widehat\delta}}_{\DTEone}-\delta_{\DTEone}\big\|}_2 = o_p(1).
\end{align}
 We thus can establish a consistent estimator for $\Delta_\DTEone$ using Corollary \ref{cor:alternative-form}(a). Similarly, the same arguments apply to the submodel $\mathcal{M}_{\text{g}}$ in conjunction with Corollary \ref{cor:alternative-form}(b), and to the submodel $\mathcal{M}_{\text{reg}}$ in conjunction with Corollary \ref{cor:alternative-form}(c).   
More importantly, \eqref{ipw:rate} essentially implies that \eqref{convergence_rate} is of order $ o_p(1) $, and Theorem \ref{THM: NON-MARCHINE-LEARNING} ensures that the EIF-based estimator $\widehat\Delta_\DTEone$ remains consistent. More generally, when at least one of $\mathcal{M}_{\text{ipw}}$, $\mathcal{M}_{\text{reg}}$, or $\mathcal{M}_{\text{g}}$ holds, $\widehat\Delta_\DTEone$ remains consistent.

Furthermore, under the additional conditions of Assumptions \ref{assump:regular}(c), the estimator $ \widehat{\Delta}_{\DTEone} $ is asymptotically normal, has the influence function $ \varphi(O; \theta) $, and achieves the semiparametric efficiency bound. However, we should also acknowledge that the conditions in Theorem \ref{THM: NON-MARCHINE-LEARNING} impose rate restrictions on some unnatural models, such as the outcome model $\eta_{\DTEone}(X)$ and the treatment assignment model $\delta_{\DTEone}(X)$. Even with flexible machine learning methods, these models are sometimes difficult to accurately estimate. Many studies have pointed out that misspecification of the outcome model or extreme weighting estimators can possibly lead to highly unstable performance \citep{Kang2007SS}. {{Therefore, in the next section, we also introduce a calibration method that nonparametrically estimates weighting functions by incrementally imposing moment restrictions \citep{ai2022simple}, providing a complement of the proposed multiply robust estimators.}}

 \subsection{Nonparametric estimation} 
 \label{ssec:non-est} 
In the previous section, it is mentioned that constructing the  estimator $\widehat\Delta_{\DTEone}$ often requires modeling certain unnatural nuisance functions, which may possibly lead to model misspecification bias. {{The identification expression in Corollary \ref{cor:alternative-form}(a) relies on two weighting functions, $\pi_{\dte}(Z_1, X)$ and $\delta_{\DTEone}(X)$. 
 This structure allows us to estimate the two inverse weights nonparametrically, ensuring that the proposed estimator achieves the desired efficiency \citep{ai2022simple}. Before formally introducing the estimation approach, we first present a lemma based on series moment restrictions to streamline the exposition.}}
 
\begin{lemma}
\label{thm:solution-unique}
	For any integrable function $v(X)$, the following moment conditions  
\begin{align} 
\label{eq: identify-sieve-1}
\begin{gathered}
E\{\phi(X) v(X)\}=E\left\{ (-1)^{1-Z_1}D_1D_2 \psi(Z_1  ,X) v(X)\right\},\\
    E\{Z _1\psi(Z_1  ,X) v(X)\}=E\{v(X)\}=E\{(1-Z_1)\psi(Z_1  ,X) v(X)\} , 
\end{gathered}\end{align}
simultaneously hold if and only if   $\psi(Z_1  ,X) = \pi_{\dte}^{-1} (Z_1,X)$ and $\phi(X)= \delta_{\DTEone}(X)$   almost surely. 
\end{lemma}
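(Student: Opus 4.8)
The plan is to prove the two directions separately. For the ``if'' direction, I would simply substitute $\psi(Z_1,X) = \pi_{\dte}^{-1}(Z_1,X) = \pr(Z_1\mid X)^{-1}$ and $\phi(X) = \delta_{\DTEone}(X)$ into \eqref{eq: identify-sieve-1} and verify each equation by iterated expectations. The last two equations follow because $E\{Z_1 \pi_{\dte}^{-1}(Z_1,X) v(X) \mid X\} = v(X) \cdot E\{Z_1/\pr(Z_1=1\mid X) \mid X\} = v(X)$, and symmetrically for the $(1-Z_1)$ term; here I should note that $\pi_{\dte}(Z_1,X)$ is shorthand for the probability mass $\pr(Z_1\mid X)$ evaluated at the realized value, so that $Z_1\pi_{\dte}^{-1}(Z_1,X) = Z_1/\pr(Z_1=1\mid X)$. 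For the first equation, conditioning on $X$ reduces the right-hand side to $v(X)$ times $E\{(-1)^{1-Z_1} D_1 D_2 \pi_{\dte}^{-1}(Z_1,X)\mid X\}$, and expanding the sum over $Z_1\in\{0,1\}$ gives exactly $E(D_1 D_2\mid Z_1=1,X) - E(D_1 D_2 \mid Z_1=0,X) = \delta_{\DTEone}(X)$ by the definition in Table~\ref{tab:notations_direct}.

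For the ``only if'' direction, the argument hinges on the fact that the moment conditions must hold \emph{for every} integrable $v(X)$, which lets me strip off $v$ and obtain almost-sure identities in $X$. Concretely, from the second line of \eqref{eq: identify-sieve-1}, taking the difference of the $E\{Z_1\psi v\}=E\{v\}$ and $E\{(1-Z_1)\psi v\}=E\{v\}$ statements and using that $v$ is arbitrary, I get $E\{(2Z_1-1)\psi(Z_1,X)\mid X\}=0$ and $E\{\psi(Z_1,X)\mid X\} \cdot (\text{something})$; more cleanly, $E\{Z_1\psi(Z_1,X)\mid X\} = 1 = E\{(1-Z_1)\psi(Z_1,X)\mid X\}$ almost surely. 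Writing $\psi(1,X)=a(X)$ and $\psi(0,X)=b(X)$, these two scalar equations read $\pr(Z_1=1\mid X)\,a(X) = 1$ and $\pr(Z_1=0\mid X)\,b(X) = 1$, which forces $a(X) = \pr(Z_1=1\mid X)^{-1}$ and $b(X)=\pr(Z_1=0\mid X)^{-1}$, i.e. $\psi(Z_1,X) = \pi_{\dte}^{-1}(Z_1,X)$ almost surely (the positivity in Assumption~\ref{assump:latent_overlap}(b) guarantees the propensity score is bounded away from $0$ and $1$, so the inverse is well defined). Having pinned down $\psi$, I substitute it into the first equation of \eqref{eq: identify-sieve-1}; since $v$ is arbitrary, this yields $\phi(X) = E\{(-1)^{1-Z_1} D_1 D_2 \pi_{\dte}^{-1}(Z_1,X)\mid X\}$ almost surely, and the same expansion over $Z_1\in\{0,1\}$ as before identifies the right-hand side as $\delta_{\DTEone}(X)$.

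The only genuinely delicate point is the ``stripping off $v$'' step: from $E\{g(X)v(X)\} = 0$ for all integrable $v$ one concludes $g(X)=0$ almost surely, which is standard, but I should be slightly careful that the relevant conditional expectations (e.g. $E\{Z_1\psi(Z_1,X)\mid X\}$, $E\{(-1)^{1-Z_1}D_1D_2\psi(Z_1,X)\mid X\}$) are themselves integrable so that the argument applies — this is ensured by the boundedness assumptions in Assumption~\ref{assump:regular}(b) together with the overlap condition, or can be imposed as part of the regularity conditions. I expect the main obstacle to be purely expository rather than mathematical: making the convention ``$\pi_{\dte}(Z_1,X) = \pr(Z_1\mid X)$ is the conditional mass function evaluated at the realization'' explicit and consistent throughout, so that the algebraic identities $Z_1/\pi_{\dte}(Z_1,X) = Z_1/\pr(Z_1=1\mid X)$ and its complement are transparent to the reader.
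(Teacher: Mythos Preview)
Your proof is correct and follows the same overall structure as the paper's: verify the ``if'' direction by iterated expectations, then for the ``only if'' direction pin down $\psi$ first from the second line of \eqref{eq: identify-sieve-1} and feed it into the first line to recover $\phi$. The one technical difference is in how the arbitrariness of $v$ is exploited. You invoke the standard measure-theoretic fact that $E\{g(X)v(X)\}=0$ for all integrable $v$ forces $g=0$ almost surely (taking $v=\operatorname{sign}(g)\mathbb{I}(|g|\le M)$, say); the paper instead restricts to the test functions $v(X)=\exp(i\,a^{\T}X)$ and appeals to uniqueness of the Fourier transform. Your route is arguably more direct and avoids any implicit assumption on the support of $X$, while the paper's Fourier argument uses only bounded test functions and so sidesteps the mild integrability check you flag. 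Either device is adequate here.
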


Lemma \ref{thm:solution-unique} outlines an approach to solving for $\delta_{\DTEone}(X)$ and $\pi_{\dte}(Z_1, X)$ using \eqref{eq: identify-sieve-1} for any integrable function $v(X)$.
However, this approach is practically infeasible because the moment conditions inherently involve infinite constraints. To address this, we employ a sequence of basis functions $v_{\minK}(X)$. We expect that the linear combination of these basis functions can well approximate any square integrable real-valued function \citep{CHEN20075549}, thereby providing a practical approximation to \eqref{eq: identify-sieve-1}. The moment conditions for the sieve version $v_{\minK}(X)$ in \eqref{eq: identify-sieve-1} are:
\begin{align}
\label{eq:Z-sieve}
\begin{gathered}
 E\{\phi(X) v_{\minK}(X)\}=E\{(-1)^{1-Z_1}   D_1D_2\psi(Z_1  ,X) v_{\minK}(X)\},\\
    E\{Z_1  \psi(Z_1  ,X) v_{\minK}(X)\} =E\{v_{\minK}(X)\}=E\{(1-Z_1)\psi(Z_1  ,X) v_{\minK}(X)\}, 
\end{gathered}\end{align}
Since the sieve space is a subspace of the original functional space, the solutions to \eqref{eq: identify-sieve-1}, namely $\psi(Z_1, X) = \pi^{-1}_{{dte}}(Z_1,X)$ and $\phi(X) = \delta_{dte,1}(X)$, remain valid solutions to \eqref{eq:Z-sieve}. However, they may not be the unique solutions to \eqref{eq:Z-sieve}. 
To address this problem, we adopt approaches similar to \citet{ai2022simple}, utilizing two globally concave and increasing functions to solve \eqref{eq:Z-sieve}. Specifically, we consider the concave functions $m_1(v) = v - \exp(-v)$ and $m_2(v) = -\log(e^v + e^{-v})$ for any $v$. 
We hence  consider the following objective functions:
\begin{gather*}
H_1(\alpha)=E\left\{m_2\left(\alpha^{\T} v_{\minK}(X)\right)-  (-1)^{1-Z_1} D_1D_2 \psi_{ \minK}(Z_1 ,X)  \alpha^{\T} v_{\minK}(X)\right\},\\
H_2(\beta, \gamma)=E\left\{Z_1 m_1\left(\beta^{\T} v_{\minK}(X)\right)-\beta^{\T} v_{\minK}(X)\right\}+E\left\{(1-Z_1) m_1\left(\gamma^{\T} v_{\minK}(X)\right)-\gamma^{\T} v_{\minK}(X)\right\}.
\end{gather*} 
Let    $\alpha_{\minK}$ and $(\beta_{\minK}^\T, \gamma_{\minK}^\T)^\T$ represent the maximizers of the objective functions $H_1(\alpha)$ and $H_2(\beta, \gamma)$, respectively. 
We define $\psi_{ \minK}(Z_1 ,X)=Z_1  \dot{m}_1\left(\beta_{\minK}^{\T} v_{\minK}(X)\right)+(1-Z_1 ) \dot{m}_1\left(\gamma_{\minK}^{\T} v_{\minK}(X)\right)$ and $\phi_{ \minK}(X)=\dot{m}_2\left(\alpha_{\minK}^{\T} v_{\minK}(X)\right)$, where the $\cdot$ above the function represents the derivative of the corresponding function. Here, $\left\{\psi_{ \minK}(Z_1 ,X), \phi_{ \minK}(X)\right\}$ can be regarded as the solution to \eqref{eq:Z-sieve}, which is expected to converge to $\{\pi^{-1}_{\dte}(Z_1, X),\delta_{\DTEone}(X)\}$. Now we summarize the above approach in  finite samples.
\begin{itemize} 
    \item[{\it Step 1}.]  
    Compute  the estimators $\widehat{\alpha}_{\minK} = \arg \max_{\alpha} \widehat{H}_1(\alpha)$ and $(\widehat{\beta}_{\minK}, \widehat{\gamma}_{\minK}) = \arg \max_{\beta, \gamma} \widehat{H}_2(\beta, \gamma)$, where $\widehat{H}_1(\alpha)$ and $\widehat{H}_2(\beta, \gamma)$  are the sample versions of $H_1(\alpha)$ and $H_2(\beta, \gamma)$, respectively.

    \item[{\it Step 2}.]  
    Construct the estimators   $
    \widehat{\psi}(Z_1 ,X) = Z_1 \dot{m}_1(\widehat{\beta}_{\minK}^{\T} v_{\minK}(X)) + (1-Z_1) \dot{m}_1\left(\widehat{\gamma}_{\minK}^{\T} v_{\minK}(X)\right)$ and $
    \widehat{\phi}(X) = \dot{m}_2\left(\widehat{\alpha}_{\minK}^{\T} v_{\minK}(X)\right).$ 

    \item[{\it Step 3}.]  
    Obtain the nonparametric estimator for ${\Delta}_{\DTEone}$ as follows:  
\begin{equation}
    \label{est:np-estimator}
     \widehat{\Delta}_{\DTEone}^\np = \mathbb{P}_n\left\{\frac{(-1)^{1-Z_1} D_2 Y_1 \widehat{\psi}(Z_1 ,X)}{\widehat{\phi}(X)}\right\}.
\end{equation}
\end{itemize}
The following proposition demonstrates that  the estimator $ \widehat{\Delta}_{\DTEone}^\np $ can achieve the semiparametric efficiency bound proposed in Theorem \ref{thm:eif-2}. 
\begin{proposition}
\label{eq:asymp-var}
    If Assumptions  \ref{assump:iv},\ref{assump:latent_overlap},\ref{assump:no-interaction}(a) and Condition~\ref{assumption:regular condition} in the Supplementary Material are satisfied, we obtain: (i) $\widehat{\Delta}_{\DTEone}^\np \xrightarrow{p} {\Delta}_{\DTEone}$ and (ii) $\sqrt{n}(\widehat{\Delta}_{\DTEone}^\np-{\Delta}_{\DTEone}) \xrightarrow{d} N(0, E\big\{ \varphi(O; \theta) \big\}^2)$, where $\varphi(O; \theta)$ is the EIF derived by Theorem \ref{thm:eif-2}.
\end{proposition}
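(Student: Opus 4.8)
The plan is to adapt the sieve-calibration argument of \citet{ai2022simple} to the weighting representation in Corollary~\ref{cor:alternative-form}(a). The proof proceeds in three stages: (i) show the calibrated weights $\widehat\psi(Z_1,X)$ and $\widehat\phi(X)$ converge in $L_2$ to the true inverse weights $\pi_{\dte}^{-1}(Z_1,X)$ and $\delta_{\DTEone}(X)$; (ii) linearize $\widehat\Delta_{\DTEone}^\np$ jointly in $(\widehat\psi,\widehat\phi)$ about these limits; (iii) use the first-order conditions that define the calibrated weights to show that the linear expansion equals $\mathbb{P}_n\{\varphi(O;\theta)\}+o_p(n^{-1/2})$, from which both consistency and the claimed central limit theorem with $\varphi$ the EIF of Theorem~\ref{thm:eif-2} follow.

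For stage (i), I would first note that, since $m_1$ and $m_2$ are globally concave and strictly increasing and the basis $v_\minK$ can be taken non-collinear, the population criteria $H_2(\beta,\gamma)$ and $H_1(\alpha)$ are strictly concave in their finite-dimensional arguments, so $(\beta_\minK,\gamma_\minK)$ and $\alpha_\minK$ are unique; because $\widehat{H}_1$ is evaluated at the already-computed $\widehat\psi$, the whole procedure is effectively a two-step M-estimation. The population first-order conditions are precisely the sieve moment equations \eqref{eq:Z-sieve}, and Lemma~\ref{thm:solution-unique} together with the approximation property of the sieve (any square-integrable function is approximated arbitrarily well by $\mathrm{span}\{v_\minK\}$ as $K\to\infty$) forces $\psi_\minK=\dot m_1(\cdot)$ and $\phi_\minK=\dot m_2(\cdot)$ to converge to $\pi_{\dte}^{-1}$ and $\delta_{\DTEone}$. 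A standard M-estimation argument for growing-dimensional concave objectives --- controlling the empirical process via boundedness of $v_\minK$, eigenvalue conditions on the Gram matrix, and the rate restrictions on $K=K_n$ in Condition~\ref{assumption:regular condition} --- then gives $\|\widehat\beta_\minK-\beta_\minK\|$, $\|\widehat\gamma_\minK-\gamma_\minK\|$, $\|\widehat\alpha_\minK-\alpha_\minK\|$ small, hence $\|\widehat\psi-\pi_{\dte}^{-1}\|_2=o_p(1)$ and $\|\widehat\phi-\delta_{\DTEone}\|_2=o_p(1)$, with rates set by the sieve approximation error plus the $n^{-1/2}K^{1/2}$-type estimation error.

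Stages (ii)--(iii) are the crux. I would write $\widehat\Delta_{\DTEone}^\np-\Delta_{\DTEone}=\mathbb{P}_n\{(-1)^{1-Z_1}D_2Y_1\,\widehat\psi/\widehat\phi\}-\Delta_{\DTEone}$ and Taylor-expand $(\psi,\phi)\mapsto\psi/\phi$ about $(\pi_{\dte}^{-1},\delta_{\DTEone})$, controlling the quadratic remainder by the $o_p(1)$ rates from stage (i) and the fact that $\phi$ is bounded away from zero (Assumption~\ref{assump:latent_overlap}(b)). The two linear perturbation terms, $\mathbb{P}_n\{(-1)^{1-Z_1}D_2Y_1(\widehat\psi-\pi_{\dte}^{-1})/\delta_{\DTEone}\}$ and $-\mathbb{P}_n\{(-1)^{1-Z_1}D_2Y_1\,\pi_{\dte}^{-1}(\widehat\phi-\delta_{\DTEone})/\delta_{\DTEone}^2\}$, are handled using the sample first-order conditions for $\widehat\alpha_\minK,\widehat\beta_\minK,\widehat\gamma_\minK$, which state that $\widehat\psi$ and $\widehat\phi$ are orthogonal, against every coordinate of $v_\minK$, to the residuals in \eqref{eq:Z-sieve}. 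Projecting the multipliers $(-1)^{1-Z_1}D_2Y_1/\delta_{\DTEone}$ and $(-1)^{1-Z_1}D_2Y_1\pi_{\dte}^{-1}/\delta_{\DTEone}^2$ onto $\mathrm{span}\{v_\minK\}$ and invoking the calibration constraints converts these terms, up to a projection error that vanishes as $K\to\infty$, into exactly the augmentation pieces $-(-1)^{1-Z_1}(D_1D_2\omega_{\DTEone}+\eta_{\DTEone}-\mu_{\DTEone}\omega_{\DTEone})/(\pi_{\dte}\delta_{\DTEone})$ and the term $\omega_{\DTEone}$ appearing in $\varphi(O;\theta)$; this is the same "automatic debiasing" mechanism as in \citet{ai2022simple}. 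The main obstacle is the bookkeeping here: showing that the product of the nuisance-estimation error and the sieve-projection residual is $o_p(n^{-1/2})$ under Condition~\ref{assumption:regular condition}, and that the growing-dimensional Hessians of $H_1$ and $H_2$ are uniformly invertible so the implicit-function/delta-method step is valid. Once $\widehat\Delta_{\DTEone}^\np-\Delta_{\DTEone}=\mathbb{P}_n\{\varphi(O;\theta)\}+o_p(n^{-1/2})$ is in hand, part (i) follows from the weak law of large numbers and part (ii) from the Lindeberg--L\'{e}vy CLT with asymptotic variance $E\{\varphi(O;\theta)\}^2$, which is the semiparametric efficiency bound of Theorem~\ref{thm:eif-2}.
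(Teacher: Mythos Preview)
Your proposal is correct and follows the same approach as the paper: the paper's own proof simply states that the logic is entirely consistent with Theorem~2 of \citet{ai2022simple} and omits the details, so your detailed sketch of that sieve-calibration argument is precisely what the paper is pointing to. Your three-stage outline (consistency of the calibrated weights via concave M-estimation, linearization of the ratio estimator, and recovery of the EIF via the calibration first-order conditions) is the standard route in \citet{ai2022simple} and there is nothing to add.
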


 \section{Simulation studies}
 \label{simulation}
  \subsection{Simulation settings}
 In this section, we conduct several simulation studies to evaluate the ﬁnite sample performance of the proposed estimators.  
The observed confounders $X=(X_1,X_2)$ are generated from the uniform distribution on $ [-1,1]^2$. The unmeasured confounders $U$ are generated from the uniform distribution $(0,0.5]^2$. IVs $Z=(Z_1,Z_2)$ are generated from the logistic models:
\begin{align*}
    \pr(Z_1\mid X)=\mathrm{expit}(0.25X_1+0.25X_2),~~
    \pr(Z_2\mid X)=\mathrm{expit}(0.25X_1+0.25X_2),
\end{align*}
where $\mathrm{expit}(v)=\exp(v)/\{1+\exp(v)\}$.  
Treatment variables $D=(D_1,D_2)$ are generated from the  multinomial   distribution:
\begin{gather*}
    \pr(D_1=1, D_2=1 \mid Z,X,U )=m(Z_{1},X,U) m(Z_{2},X,U)\\\pr(D_1=0, D_2=1\mid  Z,X,U\}=\{1-m(Z_{1},X,U)\}  m(Z_{2},X,U)  ,\\
 \pr(D_1=1,D_2=0\mid  Z,X,U)=m(Z_{1},X,U)\{{1-m(Z_{2},X,U)}\}  ,\\\pr(D_1=0,D_2=0\mid  Z,X,U)=\{1-m(Z_{1},X,U)\}  \{{1-m(Z_{2},X,U)}\}   , 
\end{gather*}
where $m(Z_j,X,U)= \mathrm{expit}(-1+2Z_j-0.25X_1-0.25X_2+0.05U_1 -0.05U_2)$ for $Z_j\in\{0,1\}$.  
Four potential outcomes are respectively generated from the linear models:
\begin{gather*}
  Y_1(1,1)=6+6X_1+5X_2+2U_1+2U_2 +\epsilon_{11} , ~Y_1(1,0)=3+4X_1+2X_2+2U_1+2U_2 +\epsilon_{10} ,\\
  Y_1(0,1)=-1+2X_1+1.5X_2+2U_1+2U_2 +\epsilon_{01} ,  ~Y_1(0,0)=-2+ X_1+0.5X_2+2U_1+2U_2 +\epsilon_{00} ,  
\end{gather*}
where the error term $\epsilon_{ij}$ follows the standard normal distribution ($i,j\in\{0,1\}$). The observed  outcome $Y_1$  is generated from: 
$$Y_1=D_1D_2Y_1(1,1)+D_1(1-D_2)Y_1(1,0)+(1-D_1)D_2 Y_1(0,1)+(1-D_1)(1-D_2)Y_1(0,0).$$
 The true values of $\Delta_\DTEone$, $\Delta_\DTEzero$, $\Delta_\STEone$,  and  $\Delta_\STEzero$  are 7, 5, 3, and 1, respectively. 
 \subsection{Estimation methods}

In the previous subsection, we find that \( \pi_{\dte}(X) \) and \( \pi_{\ste}(X) \) follow logistic models, while \( \omega_{\DTE}(X) \) and \( \omega_{\STE}(X) \) follow linear models for $d\in\{0,1\}$. These functions can therefore be correctly specified using standard parametric forms.
 In contrast, the other nuisance functions are more difficult to specify using conventional parametric models.  To assess the performance of the multiply robust estimators in Section~\ref{sec:mr-est} and the nonparametric estimators in Section~\ref{ssec:non-est}, we next consider a range of estimation strategies for the nuisance models. For simplicity, we only describe the working models used for estimating \( \widehat{\Delta}_{\DTEone} \), as the procedures for the remaining components are analogous. 
\begin{itemize}
    \item[(a)]
The estimator \( \widehat\Delta_{\DTEone} = \mathbb{P}_n\{\varphi(O; \widehat\theta)\} \) is used for inference, where \( \widehat{\theta} \) is obtained from parametric models. Specifically, \( \widehat{\pi}_{\dte} \) and \( \widehat\mu_{\DTEone} \) are estimated via logistic regression, \( \widehat{\eta}_{\DTEone} \) and \( \widehat{\omega}_{\DTEone} \) are modeled using linear regression, and \( \widehat{\delta}_{\DTEone} \) is modeled using the hyperbolic tangent function 
Accordingly, when implementing the multiply robust estimator, we specified the model \( \mathcal{M}_{\text{g}} \) correctly, while deliberately misspecifying \( \mathcal{M}_{\text{ipw}} \) and \( \mathcal{M}_{\text{reg}} \). This allows us to evaluate the sensitivity of the proposed method to model misspecification. Detailed estimation procedures are provided in Section~\ref{sec:par-est} of the Supplementary Material.

    \item[(b)] The estimator $\widehat\Delta_{\DTEone} = \mathbb{P}_n\{\varphi(O; \widehat\theta)\}$ is used for estimation, where $\widehat{\theta} $ is estimated using neural networks \citep{ripley1996pattern, venables2013modern}.  Specifically,   \( \widehat{\pi}_{\dte} \), \( \widehat{\mu}_{\DTEone} \), and \( \widehat{\eta}_{\DTEone} \) are estimated using a neural network with a single hidden layer containing 4 units, trained for a maximum of 500 iterations using the R package \texttt{nnet}.  
In addition, \( \widehat{\omega}_{\DTEone} \) is modeled by a linear model, and \( \widehat{\delta}_{\DTEone} \) is modeled using the hyperbolic tangent function, with estimation following the Steps 4 and 5 described in Section~\ref{sec:par-est}.

    \item[(c)] The estimator $\widehat\Delta_{\DTEone} = \mathbb{P}_n\{\varphi(O; \widehat\theta)\}$ is used for estimation, which $\widehat{\theta} $ are estimated using gradient boosting machines \citep{friedman2001greedy, greenwell2019gbm}. Specifically, \( \widehat{\pi}_{\dte} \), \( \widehat{\mu}_{\DTEone} \), and \( \widehat{\eta}_{\DTEone} \) are estimated using gradient boosting machines with 500 trees, a learning rate of 0.01, and a minimum of 10 observations per node, implemented via the R package \texttt{gbm}. The  estimation procedures for the remaining nuisance functions are similar to those used in the \texttt{nnet}-based implementation.

    \item[(d)] The estimator $\widehat\Delta_{\DTEone} = \mathbb{P}_n\{\varphi(O; \widehat\theta)\}$ is used for estimation, which $\widehat{\theta} $ are estimated using lasso regression \citep{tibshirani1996regression, friedman2010regularization}. Specifically, \( \widehat{\pi}_{\dte} \), \( \widehat{\mu}_{\DTEone} \), and \( \widehat{\eta}_{\DTEone} \) are estimated using lasso regression (\( L_1 \) regularization), with the regularization parameter selected via cross-validation using the `cv.glmnet` function from the R package \texttt{glmnet}. The  estimation procedures for the remaining nuisance functions are similar to those used in the \texttt{nnet}-based implementation.

    \item[(e)] The nonparametric sieve estimator $\widehat{\Delta}_{\DTEone}^\np$, as described in \eqref{est:np-estimator}, is used for estimation.
\end{itemize} 

  \subsection{Estimation results}
Table \ref{tab:sim-results} presents the simulation results for four causal peer effects using various estimation methods across different sample sizes. Figure \ref{fig:main} provides the corresponding boxplot results. All estimation results are based on 200 replications and 200 bootstrap samples.  
First, we observe that, except for \texttt{gbm}, all other estimators exhibit consistency across all sample sizes, with the   standard deviation decreasing as the sample size increases. Additionally, \texttt{gbm} shows biased estimates for certain estimands, such as $\Delta_\DTEone$ and $\Delta_\STEone$, even with larger sample sizes. 
Second, from the perspective of coverage probabilities, parametric models, \texttt{nnet}, and lasso methods demonstrate acceptable significance levels, while \texttt{gbm} and \texttt{sieve} do not exhibit significance. Moreover, as expected, the multiply robust estimators yield consistent estimates, even when some nuisance models are misspecified or estimated using machine learning methods.
Finally, nonparametric methods like \texttt{sieve} display unstable performance, with larger estimation biases and higher standard deviations, particularly for smaller sample sizes. However, their performance stabilizes as the sample size increases. In summary,  among these methods, the multiply robust estimators  $\widehat\Delta_{\DTEone} $ based on parametric models and \texttt{nnet} generally perform the best across all scenarios. 


  \begin{table}[ht]
\centering
    \caption{ Simulation results of  causal peer effects using various estimation methods, reporting the bias, standard deviation (SD), and 95\% coverage probability (CP) for each method.}
    \label{tab:sim-results}
\resizebox{0.9978489594986795\textwidth}{!}{\begin{tabular}{ccccccccccccccccccccc}
\toprule
   $\Delta_\DTEone$     &  & \multicolumn{3}{c}{Parametric} &  & \multicolumn{3}{c}{nnet} &  & \multicolumn{3}{c}{gbm} &  & \multicolumn{3}{c}{lasso} &  & \multicolumn{3}{c}{sieve} \\   \cline{1-1}   \cline{3-5} \cline{7-9} \cline{11-13} \cline{15-17} \cline{19-21} \addlinespace[1mm]
$n$    &  & Bias    & SD     & CP    &  & Bias    & SD     & CP    &  & Bias    & SD    & CP    &  & Bias    & SD     & CP     &  & Bias    & SD      & CP    \\
$1000$  &  & -0.01   & 0.36   & 1.00  &  & -0.09   & 0.95   & 1.00  &  & 0.12    & 0.38  & 1.00  &  & 0.06    & 0.38   & 1.00   &  & 3.01    & 36.89   & 0.97  \\
$2000$  &  & 0.02    & 0.25   & 0.96  &  & -0.09   & 1.03   & 0.97  &  & 0.16    & 0.25  & 0.94  &  & 0.06    & 0.25   & 0.95   &  & -0.26   & 14.86   & 0.95  \\
$5000$  &  & 0.01    & 0.15   & 0.95  &  & -0.01   & 0.27   & 0.96  &  & 0.14    & 0.15  & 0.88  &  & 0.03    & 0.15   & 0.95   &  & 0.10    & 0.90    & 1.00  \\
$10000$ &  & -0.01   & 0.10   & 0.96  &  & -0.01   & 0.10   & 0.96  &  & 0.14    & 0.10  & 0.74  &  & 0.01    & 0.10   & 0.96   &  & 0.01    & 9.13    & 1.00  \\$20000$ &   & 0.00 & 0.06 & 0.96 &  & 0.00 & 0.06 & 0.96 &  & 0.16 & 0.07 & 0.37 &  & 0.02 & 0.06 & 0.94 &  & 0.00 & 0.07 & 0.98 \\  \toprule
    $\Delta_\DTEzero$         &  & \multicolumn{3}{c}{Parametric} &  & \multicolumn{3}{c}{nnet} &  & \multicolumn{3}{c}{gbm} &  & \multicolumn{3}{c}{lasso} &  & \multicolumn{3}{c}{sieve} \\   \cline{1-1}\cline{3-5} \cline{7-9} \cline{11-13} \cline{15-17} \cline{19-21}  \addlinespace[1mm]
$n$     &  & Bias    & SD     & CP    &  & Bias    & SD     & CP    &  & Bias    & SD    & CP    &  & Bias    & SD     & CP     &  & Bias    & SD      & CP    \\
$1000$  &  & 0.46    & 5.77   & 0.98  &  & 0.23    & 7.09   & 0.98  &  & 0.17    & 0.91  & 0.98  &  & 0.06    & 0.61   & 0.98   &  & 0.09    & 27.04   & 0.94  \\
$2000$  &  & 0.04    & 0.22   & 0.98  &  & -0.11   & 0.92   & 0.97  &  & 0.08    & 0.22  & 0.98  &  & 0.06    & 0.22   & 0.98   &  & -1.32   & 20.36   & 0.97  \\
$5000$  &  & -0.01   & 0.13   & 0.94  &  & -0.01   & 0.13   & 0.95  &  & 0.04    & 0.13  & 0.97  &  & 0.01    & 0.13   & 0.94   &  & 0.04    & 1.74    & 0.99  \\
$10000$ &  & 0.01    & 0.09   & 0.96  &  & 0.01    & 0.09   & 0.96  &  & 0.07    & 0.09  & 0.94  &  & 0.02    & 0.09   & 0.95   &  & 0.01    & 0.20    & 0.98  \\ $20000$ &  &  0.01 & 0.06 & 0.95 &  & 0.01 & 0.06 & 0.96 &  & 0.07 & 0.06 & 0.81 &  & 0.02 & 0.06 & 0.95 &  & 0.01 & 0.06 & 0.98 \\  \toprule
     $\Delta_\STEone$    &  & \multicolumn{3}{c}{Parametric} &  & \multicolumn{3}{c}{nnet} &  & \multicolumn{3}{c}{gbm} &  & \multicolumn{3}{c}{lasso} &  & \multicolumn{3}{c}{sieve} \\ \cline{1-1} \cline{3-5} \cline{7-9} \cline{11-13} \cline{15-17} \cline{19-21}  \addlinespace[1mm]
$n$     &  & Bias    & SD     & CP    &  & Bias    & SD     & CP    &  & Bias    & SD    & CP    &  & Bias    & SD     & CP     &  & Bias    & SD      & CP    \\
$1000$  &  & -0.14   & 0.78   & 0.98  &  & -0.21   & 1.05   & 0.95  &  & -0.03   & 1.45  & 0.97  &  & -0.22   & 2.26   & 0.97   &  & 1.68    & 100.89  & 0.93  \\
$2000$  &  & -0.09   & 0.56   & 0.96  &  & -0.13   & 0.67   & 0.92  &  & 0.08    & 0.54  & 0.92  &  & -0.07   & 0.55   & 0.96   &  & 1.78    & 13.26   & 0.96  \\
$5000$  &  & -0.08   & 0.32   & 0.94  &  & -0.08   & 0.32   & 0.96  &  & 0.11    & 0.31  & 0.90  &  & -0.06   & 0.31   & 0.95   &  & -0.07   & 3.40    & 1.00  \\
$10000$ &  & -0.02   & 0.22   & 0.94  &  & -0.02   & 0.22   & 0.96  &  & 0.18    & 0.22  & 0.83  &  & -0.01   & 0.22   & 0.94   &  & -0.05   & 0.67    & 0.98  \\ 
$20000$ &   & -0.01 & 0.17 & 0.92 &  & -0.01 & 0.17 & 0.92 &  & 0.21 & 0.16 & 0.70 &  & 0.00 & 0.17 & 0.92 &  & -0.03 & 0.17 & 0.98 \\ \toprule
     
    $\Delta_\STEzero$       &  & \multicolumn{3}{c}{Parametric} &  & \multicolumn{3}{c}{nnet} &  & \multicolumn{3}{c}{gbm} &  & \multicolumn{3}{c}{lasso} &  & \multicolumn{3}{c}{sieve} \\  \cline{1-1}\cline{3-5} \cline{7-9} \cline{11-13} \cline{15-17} \cline{19-21}  \addlinespace[1mm]
$n$     &  & Bias    & SD     & CP    &  & Bias    & SD     & CP    &  & Bias    & SD    & CP    &  & Bias    & SD     & CP     &  & Bias    & SD      & CP    \\
$1000$  &  & 0.04    & 0.31   & 0.98  &  & 0.04    & 0.35   & 1.00  &  & 0.06    & 0.31  & 0.99  &  & 0.05    & 0.32   & 0.98   &  & 0.51    & 11.68   & 0.94  \\
$2000$  &  & 0.02    & 0.21   & 0.96  &  & 0.02    & 0.21   & 0.96  &  & 0.05    & 0.21  & 0.97  &  & 0.02    & 0.21   & 0.96   &  & 0.37    & 3.10    & 0.96  \\
$5000$  &  & 0.00   & 0.12   & 0.96  &  & 0.00   & 0.13   & 0.96  &  & 0.04    & 0.13  & 0.96  &  & 0.00    & 0.12   & 0.97   &  & -0.21   & 5.00    & 0.98  \\
$10000$ &  & 0.01    & 0.09   & 0.98  &  & 0.01    & 0.09   & 0.97  &  & 0.06    & 0.09  & 0.94  &  & 0.02    & 0.09   & 0.97   &  & 0.02    & 0.16    & 1.00  \\ 
 $20000$ &&  0.00 & 0.06 & 0.94 &  & 0.00 & 0.06 & 0.96 &  & 0.06 & 0.06 & 0.86 &  & 0.01 & 0.06 & 0.94 &  & 0.01 & 0.09 & 0.98 \\ \toprule
\end{tabular}}
\end{table}
\begin{figure}
    \centering
    \includegraphics[width=\textwidth, height=0.95\textheight, keepaspectratio]{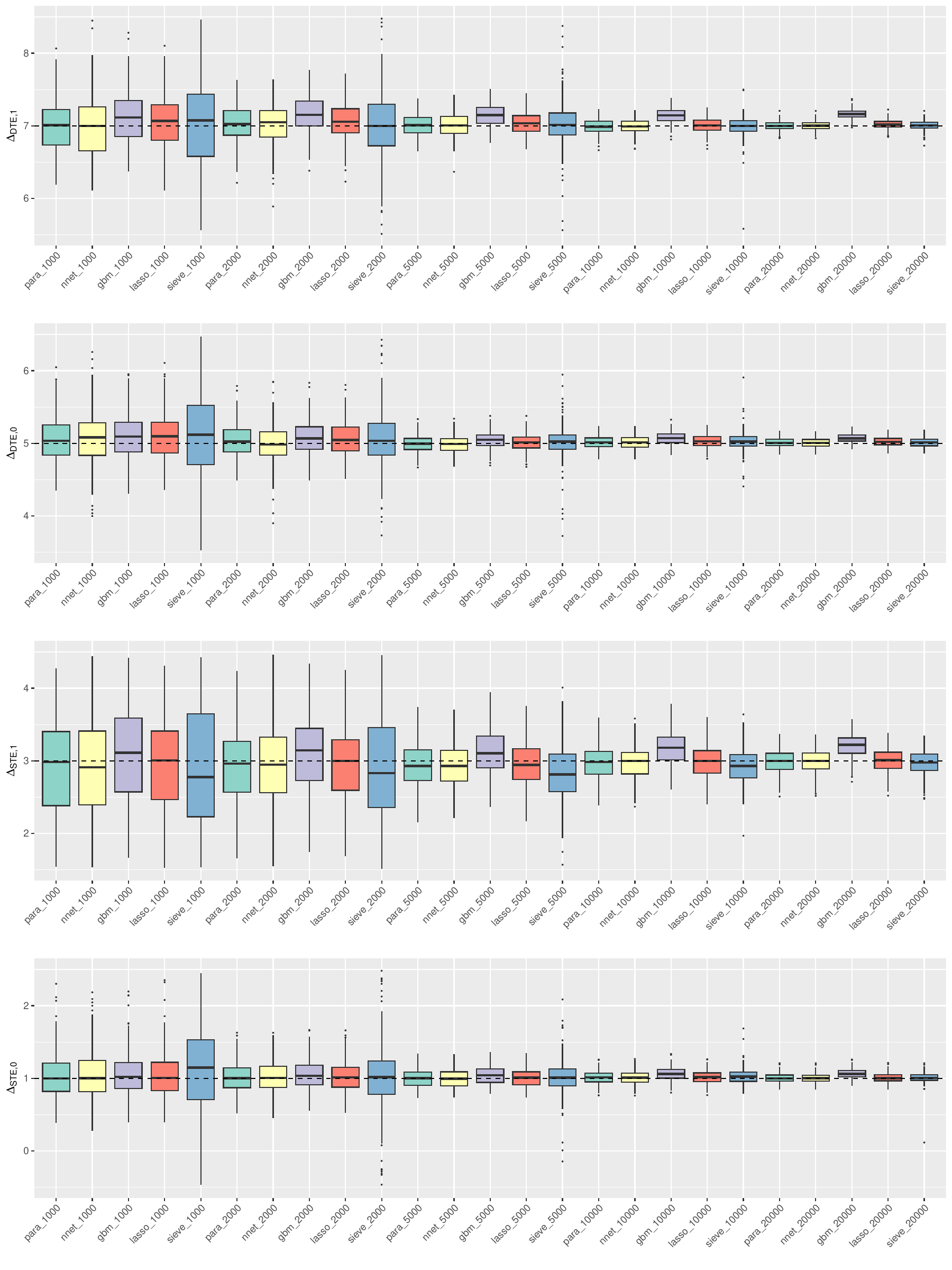}
    \caption{Simulation results of the estimated causal peer effects using various estimation methods.}
    \label{fig:main}
\end{figure}
\section{Application}
\label{sec:app}
In this section, we apply the proposed methods to data from the China Family Panel Studies (CFPS), conducted by the Institute of Social Science Survey, Peking University, between 2010 and 2018. Since its inception in 2010, the CFPS has released five waves of data, covering 25 provinces and representing 95\% of China's population. The data include typical demographic, socio-economic, education, and health information from respondents. We will investigate the impact of retirement on cognitive health and further explore whether retirement may create spillover effects between couples. This study aims to understand the potential effects of retirement on both individuals and their spouses, providing policymakers with insights into retirement systems and health interventions. After excluding individuals who did not report retirement, outliers, missing samples, and removing observations where spouses were not included in the survey, we ended up with paired samples of 1373 couples, totaling 2746 individuals.

In this study, $D_1$ indicates whether males are retired (1 for yes, 0 for no), and $Y_1$ represents males' fluid cognitive health, which reflects the brain's ability to process information, including memory, numeracy, verbal fluency, orientation, logic, and reaction time. Similarly, $D_2$ indicates whether females are retired (1 for yes, 0 for no), and $Y_2$ represents females' fluid cognitive health. Two IVs $Z_1$ and $Z_2$ are selected to indicate whether individuals have reached the retirement age: $Z_1$ represents whether males have reached the retirement age (1 for yes, 0 for no), and $Z_2$ represents whether females have reached the retirement age (1 for yes, 0 for no). The background variables include the number of family members $X_1$, whether males have health insurance $X_2$ (binary variable, 1 for yes, 0 for no), whether females have health insurance $X_3$ (binary variable, 1 for yes, 0 for no), males' education level $X_4$, females' education level $X_5$, whether the family has urban household registration $X_6$ (binary variable, 1 for yes, 0 for no).  The demographic statistics of the observed variables are provided in Table \ref{tab:realdemo-data}.
\begin{table}[t]
\centering
\caption{Demographic statistics of observed variables, including the sample mean, sample standard deviation, and the observed range.}
\label{tab:realdemo-data}
\resizebox{0.824799486795\textwidth}{!}{
\begin{tabular}{@{}ccccc@{}}
\toprule
\multicolumn{1}{c}{{Variable}} & {Description} & {Sample mean} & {Sample SD} & { Sample range} \\ 
\midrule
$X_1$ & Household size              & 4.13  & 1.81  &  $[2, 16] $     \\
$X_2$ & Male insurance status      & 0.02  & 0.12  & 0,1                 \\
$X_3$ & Female insurance status    & 0.02  & 0.15  & 0,1                 \\
$X_4$ & Male education level        & 2.78  & 1.18  &  $[1, 7]  $     \\
$X_5$ & Female education level      & 2.34  & 1.22  &  $[1, 7] $      \\
$X_6$ & Urban household registration & 0.28  & 0.45  & 0,1                 \\
$Z_1$ & Male  reaching retirement age & 0.23  & 0.42  & 0,1                 \\
$Z_2$ & Female reaching retirement age & 0.87  & 0.33  & 0,1                 \\
$D_1$ & Male retirement status      & 0.27  & 0.44  & 0,1                 \\
$D_2$ & Female retirement status    & 0.27  & 0.45  & 0,1                 \\
$Y_1$ & Male fluid cognitive health & 0.10  & 0.68  &  $[-1.95, 1.89] $\\
$Y_2$ & Female fluid cognitive health & -0.10 & 0.81  &  $[-1.97, 2.22]$ \\
\bottomrule
\end{tabular}}
\end{table}
 We then separately examine the direct effects and spillover effects of retirement on fluid cognitive health for both males and females.   Table \ref{tab:real-data}  presents the estimation results based on 200 bootstrap samples. We adopt multiple robust estimation methods similar to those used in the simulation study, including parametric models and several machine learning approaches. Point estimates that significantly exceed the bounds in Table \ref{tab:realdemo-data} are marked with an asterisk (*), indicating that these results may not be reliable. Additionally, bootstrap estimation results that exceed the range are excluded. 
We find that for most results in Table \ref{tab:real-data}, the 95\% confidence intervals (CIs) are not statistically significant. However, for males, the multiply robust estimator based on parametric models suggests that $\widehat{\Delta}_{\STEone} = -1.53$, with a 95\% CI of $(-2.87, -0.19)$, indicating that when males retire, the retirement status of females significantly affects male cognitive health. Additionally, similar results are observed with the multiply robust estimators based on \texttt{gbm} and \texttt{lasso}. This effect may be attributed to reduced spousal interaction or changes in family roles after retirement, which decrease men's engagement in cognitive activities and negatively impact their mental health \citep{xue2018effect,meng2017impact}.

\begin{table}[t]
\centering
\caption{Point estimates of the causal peer effects of retirement age on fluid cognitive health, with 95\% confidence intervals provided in parentheses.
}
\label{tab:real-data}
\resizebox{0.99486795\textwidth}{!}{
\begin{tabular}{ccccc}
  \toprule
Male & $\Delta_\DTEone$ & $\Delta_\DTEzero$ & $\Delta_\STEone$ & $\Delta_\STEzero$ \\  \hline\addlinespace[1mm]
Parametric & -2.12 (-4.89, 0.65) & 15.67$^*$ (11.8, 19.54) &  \bf -1.53 (-2.87, -0.19) & -1.42 (-4.28, 1.45) \\ 
  nnet & -0.22 (-2.94, 2.50) & 104.93$^*$ (100.89, 108.96) & 0.07 (-2.45, 2.58) & 5.31$^*$ (2, 8.62) \\ 
  gbm & -1.94 (-4.78, 0.91) & -41.59$^*$ (-45.32, -37.87) & \bf -1.05 (-2.1, 0) & 5.07$^*$ (2.34, 7.80) \\ 
 lasso & -1.06 (-3.52, 1.41) & 0.02 (-3.88, 3.91) &  \bf -1.09 (-2.23, 0) & -0.11 (-2.81, 2.59) \\ 
  sieve & -63.59$^*$ (-66.71, -60.46) & 1.81 (-2.89, 6.50) & -5.21$^*$ (-6.82, -3.60) & -1.05 (-3.53, 1.43) \\   \toprule Female & $\Delta_\DTEone$ & $\Delta_\DTEzero$ & $\Delta_\STEone$ & $\Delta_\STEzero$ \\ \hline\addlinespace[1mm]
  Parametric & 0.04 (-0.62, 0.7) & -0.93 (-4.18, 2.33) & -2.85 (-5.95, 0.26) & -25.81$^*$ (-30.77, -20.85) \\ 
  nnet & -0.15 (-2.35, 2.05) & -1.46 (-5.49, 2.57) & -4.24$^*$ (-7.27, -1.21) & -0.37 (-5.02, 4.27) \\ 
  gbm & -0.03 (-0.65, 0.58) & -0.99 (-3.72, 1.74) & -18.43$^*$ (-21.1, -15.76) & -3.48 (-7.4, 0.44) \\ 
  lasso & -0.02 (-0.64, 0.60) & -0.28 (-3.04, 2.48) & -22.58$^*$ (-25.76, -19.39) & 4.64$^*$ (-0.14, 9.42) \\ 
  sieve & -36.34$^*$ (-37.27, -35.42) & -1.46 (-4.09, 1.18) & -76.62$^*$ (-80.67, -72.56) & -34.71$^*$ (-39.56, -29.87) \\ 
   \toprule
\end{tabular}}
\end{table}

\section{Extension to average interaction effect}
\label{sec:exten-AIE}
In addition to the causal estimands discussed in the previous sections, the interaction between two treatments and its impact on the overall effect in the presence of interference is also  important. Formally, the average interaction effect is defined as the difference between the two direct effect estimands or the two spillover effect estimands, which can be expressed as:
\begin{align}
\label{eq:ITE-estimand}
    \Delta _{\ite}
   =\Delta _{\DTEone}-\Delta _{\DTEzero} =E\{Y_1(1,1)-Y_1(0,1)\}-E\{Y_1(1,0)-Y_1(0,0)\}.
\end{align} 
This captures the difference in the direct effect experienced by an individual when their partner receives treatment versus not. Since $\Delta _{\ite}$ can also be equivalently expressed as $\Delta _{\ite} = \Delta _{\STEone} - \Delta _{\STEzero}$, Theorem \ref{thm:iden} not only identifies the direct (spillover)   treatment effect but also ensures the identification of the average interaction effect. However, if we only focus on the causal interaction effect $\Delta _{\ite}$, Assumption  \ref{assump:iv}(c) and  Assumption \ref{assump:no-interaction} can be further relaxed. We summarize the corresponding Assumption \ref{assump-ite:iv} below from the direct effect difference perspective.
 \begin{assumption}[Single IV] 
 \label{assump-ite:iv}
 We consider the following assumptions for $d\in \{  0,1 \}$: 
 \begin{itemize} 
     \item[(a)]  $Z_1\indep Y_1\mid (D_1,D_2,U,X,Z_2)$.   
     \item[(b)]  $ 
 \operatorname{cov}\{\Gamma_{\ITE} (Z_2,X, U), \delta_{ \ITE} (Z_2,X, U) \mid Z_2,X\}=0,  $ where  
  \begin{gather*}\Gamma_{\ITE}  \left(Z_2,X,U\right) \equiv E\left\{Y_1\left( 1,d\right)-Y_1\left(0,d\right)\mid Z_2,X,U\right\} \text{ and}\\
     \delta_{\ITE } \left( Z_2,X, U\right)  =  E\left\{D_1 \mathbb{I}(D_2=d) \mid Z_1=1,Z_2 ,X,U\right\} - E\left\{D_1 \mathbb{I}(D_2=d) \mid Z_1=0 ,Z_2,X,U\right\}.  
 \end{gather*}
 \end{itemize}
 \end{assumption} 
Here, we actually have $\delta_{\ITE}(Z_2,X,U)=\delta_{\DTE}(Z_2,X,U)$ because the above conditions are based on the difference of two direct effect estimands. To identify $\Delta_{\ite}$, Assumption \ref{assump-ite:iv} indicates that $Z_2$ does not need to act as an IV and can simply be treated as an additional covariate.  We also define some associated nuisance functions. Let $\pi_{\ite}(Z_1,Z_2,X)=\pr(Z_1\mid Z_2,X)$,  $\eta_{\ITE}(Z_2,X)=E\{Y_1 \mathbb{I}(D_2=d) \mid Z_1=0,Z_2,X\}$,  $\mu_{\ITE}(Z_2,X)= E\{D_1 \mathbb{I}(D_2=d)\mid Z_1=0,Z_2, X\}$, and  $\delta_{\ITE}(Z_2,X)= \pr(D_{1}=1,D_2=d\mid Z_{{1}}=1,Z_2,X)-\pr(D_{1}=1,D_2=d\mid Z_{{1}}=0,Z_2,X) $. Also, we denote $\omega_{\ite}(Z_2,X) =  \omega_{\ITEone}(Z_2,X) - \omega_{\ITEzero}(Z_2,X)$ with $\omega_{\ITE}(Z_2,X) = E\{ Y_1(1,d) - Y_1(0,d) \mid Z_2,X \}$. Motivated by Theorem \ref{thm:iden}, we expect $\omega_{\ITE}(Z_2,X)$ to be as follows
\begin{align}
\label{eq3}
\omega_{\ITE}(Z_2,X) = \frac{ E\{ \mathbb{I}(D_2=d) Y_1 \mid Z_2,X,Z_1=1 \} - E\{ \mathbb{I}(D_2=d) Y_1 \mid Z_2,X,Z_1=0 \} }{ E\{ \mathbb{I}(D_2=d) D_1 \mid Z_2,X,Z_1=1 \} - E\{ \mathbb{I}(D_2=d) D_1 \mid Z_2,X,Z_1=0 \} }.
\end{align} The theoretic verification of \eqref{eq3} is provided in the Supplementary Material.  Alternative identification expressions are also provided in the following proposition.
       \begin{proposition}
    \label{thm:iden-ite} 
    Under Assumptions \ref{assump:iv}(a,b,d), \ref{assump:latent_overlap} and \ref{assump-ite:iv}, 
    We provide three representation formulas as follows, each of which involves a distinct set of nuisance parameters:
    
    (a) Explicit representation:  $0=E\{\varphi_{\ite}^1(O;\pi_{\ite},\delta_{\ITE} \mid Z_2,X\}-\omega_{\ITE}(Z_2,X)$ almost surely, where
    \begin{align*}
\varphi_{\ite}^1(O;\pi_{\ite},\delta_{\ITE}  ) =\frac{(-1)^{1-Z_1} \mathbb{I}(D_2=d) Y_1}{\pi_{\ite}(Z_1,Z_2,X)\delta_{\ITE}(Z_2,X)}.
    \end{align*}

    (b) Implicit representation:  $0=E\{\varphi^2_{\ITE}(O;\pi_{\ite},\omega_{\ITE} )\mid Z_2,X\} $ almost surely, where
    \begin{align*}
        \varphi_{\ITE}^2(O;\pi_{\ite},\omega_{\ITE} )  = & \frac{(-1)^{1-Z_1} \mathbb{I}(D_2=d) }{\pi_{\ite}(Z_1,Z_2,X) } \{ Y_1- D_1 \omega_{\ITE}(Z_2,X) \}.
    \end{align*}

    (c) Implicit representation:  $0=E\{\varphi^3_{\ITE}(O;\mu_{\ITE}, \eta_{\ITE}, \omega_{\ITE})\mid Z_2,Z_1,X\} $ almost surely, where
     \begin{gather*}
            \begin{aligned}
\varphi_{\ITE}^3&(O;\mu_{\ITE}, \eta_{\ITE}, \omega_{\ITE})\\&  = 
  \mathbb{I}( D_2=d)  \left\{  Y_1-D_1\omega_{\ITE}(Z_2,X)\right\} -\eta_{\ITE}(Z_2,X) +\mu_{\ITE}(Z_2,X)\omega_{\ITE}(Z_2,X).\end{aligned}        
     \end{gather*} 
           
       \end{proposition}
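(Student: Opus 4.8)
The plan is to mirror the logic already used for Theorem \ref{thm:iden} and Corollary \ref{cor:alternative-form}, but now conditioning throughout on $(Z_2,X)$ rather than on $X$ alone, since under Assumption \ref{assump-ite:iv} the instrument $Z_2$ is demoted to a covariate. First I would establish the representation \eqref{eq3} for $\omega_{\ITE}(Z_2,X)$ as a lemma (the excerpt defers this to the Supplementary Material, so I may invoke it). The key identity is the standard IV-type expansion: for $j\in\{0,1\}$, using Assumption \ref{assump-ite:iv}(a) (exclusion of $Z_1$ given $(D_1,D_2,U,X,Z_2)$) and Assumption \ref{assump:latent_overlap}(a) (latent ignorability), one writes
\begin{align*}
E\{\mathbb{I}(D_2=d)Y_1\mid Z_1=j,Z_2,X\}
= E\big[E\{Y_1(D_1,d)\mid D_1,D_2=d,U,X,Z_2\}\,\mathbb{I}(D_2=d)\mid Z_1=j,Z_2,X\big],
\end{align*}
then peels off $Y_1(D_1,d)=Y_1(0,d)+D_1\{Y_1(1,d)-Y_1(0,d)\}$, so that the numerator of \eqref{eq3} equals
$E[\{Y_1(1,d)-Y_1(0,d)\}\,\delta_{\ITE}$-type increments$]$ plus a term in $Y_1(0,d)$ whose $Z_1$-difference vanishes. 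The covariance-zero condition Assumption \ref{assump-ite:iv}(b) is exactly what is needed to factor $E\{\Gamma_{\ITE}(Z_2,X,U)\,\delta_{\ITE}(Z_2,X,U)\mid Z_2,X\}=\Gamma_{\ITE}\cdot E\{\delta_{\ITE}\mid Z_2,X\}$ after integrating out $U$, yielding \eqref{eq3} with $\Gamma_{\ITE}(Z_2,X)=\omega_{\ITE}(Z_2,X)$ in the numerator and $\delta_{\ITE}(Z_2,X)$ in the denominator.

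Given \eqref{eq3}, the three representations (a)--(c) follow by elementary algebraic rearrangement, precisely parallel to Corollary \ref{cor:alternative-form}. For part (b), I would start from the conditional Wald form and cross-multiply: \eqref{eq3} says the $Z_1$-difference of $E\{\mathbb{I}(D_2=d)(Y_1-D_1\omega_{\ITE}(Z_2,X))\mid Z_1,Z_2,X\}$ is zero; rewriting the $Z_1$-difference of any function $g(Z_1)$ as $E\{(-1)^{1-Z_1}g(Z_1)/\pi_{\ite}(Z_1,Z_2,X)\mid Z_2,X\}$ — a one-line identity using $\pi_{\ite}(Z_1,Z_2,X)=\pr(Z_1\mid Z_2,X)$ — gives exactly $E\{\varphi^2_{\ITE}\mid Z_2,X\}=0$. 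For part (a), I would further expand $E\{\mathbb{I}(D_2=d)D_1\omega_{\ITE}(Z_2,X)\}$ inside part (b): its $Z_1$-difference is $\delta_{\ITE}(Z_2,X)\,\omega_{\ITE}(Z_2,X)$ by definition of $\delta_{\ITE}$, and the remaining term is $E\{(-1)^{1-Z_1}\mathbb{I}(D_2=d)Y_1/\pi_{\ite}\mid Z_2,X\}$; dividing through by $\delta_{\ITE}(Z_2,X)$ (nonzero by the relevance Assumption \ref{assump:iv}(d) adapted to the $Z_2$-strata) recovers $E\{\varphi^1_{\ite}\mid Z_2,X\}=\omega_{\ITE}(Z_2,X)$. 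For part (c), I would rather condition on $(Z_1,Z_2,X)$: here $E\{\mathbb{I}(D_2=d)(Y_1-D_1\omega_{\ITE})\mid Z_1,Z_2,X\}$ need not be zero, but it does not depend on $Z_1$ (that is the content of \eqref{eq3}), so subtracting its value at $Z_1=0$, namely $\eta_{\ITE}(Z_2,X)-\mu_{\ITE}(Z_2,X)\omega_{\ITE}(Z_2,X)$, yields the zero conditional mean $E\{\varphi^3_{\ITE}\mid Z_1,Z_2,X\}=0$.

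The main obstacle I anticipate is the verification of \eqref{eq3} itself — specifically, making precise the step where $U$ is integrated out and the no-common-effect-modifier condition Assumption \ref{assump-ite:iv}(b) is deployed. One must be careful that $\Gamma_{\ITE}(Z_2,X,U)=E\{Y_1(1,d)-Y_1(0,d)\mid Z_2,X,U\}$ does not itself depend on the realized $(D_1,D_2)$ — this requires invoking latent ignorability Assumption \ref{assump:latent_overlap}(a) to pass from $E\{\cdots\mid D_1,D_2,U,X,Z_2\}$ to $E\{\cdots\mid U,X,Z_2\}$ for the relevant potential-outcome contrasts, and to handle the $Z_1$-conditioning via Assumption \ref{assump-ite:iv}(a). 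Once the numerator is shown to equal $E\{\Gamma_{\ITE}(Z_2,X,U)\,\delta_{\ITE}(Z_2,X,U)\mid Z_2,X\}$ and the denominator $E\{\delta_{\ITE}(Z_2,X,U)\mid Z_2,X\}=\delta_{\ITE}(Z_2,X)$, the covariance condition collapses the numerator to $\omega_{\ITE}(Z_2,X)\,\delta_{\ITE}(Z_2,X)$ and \eqref{eq3} drops out. Everything after that is bookkeeping. A secondary point worth stating explicitly is that relevance of $Z_1$ within each $(Z_2,X)$-stratum, ensuring $\delta_{\ITE}(Z_2,X)\neq 0$, follows from Assumption \ref{assump:iv}(b,d); I would note this where division by $\delta_{\ITE}$ first occurs.
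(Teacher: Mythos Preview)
Your proposal is correct and essentially matches the paper's own proof: the paper also takes \eqref{eq3} as given (deferring its verification to a separate argument that mirrors Theorem~\ref{thm:iden} with $(Z_2,X)$ in the conditioning set), then derives (a)--(c) by the same inverse-probability-weighting identity and the linear-in-$Z_1$ decompositions $E\{\mathbb{I}(D_2=d)Y_1\mid Z_1,Z_2,X\}=\eta_{\ITE}+Z_1\delta_{\ITE}\omega_{\ITE}$ and $E\{\mathbb{I}(D_2=d)D_1\mid Z_1,Z_2,X\}=\mu_{\ITE}+Z_1\delta_{\ITE}$. The only cosmetic difference is ordering: the paper computes (a), (b), (c) directly and separately, whereas you derive (b) first and obtain (a) from it; both routes are one-line manipulations once \eqref{eq3} is in hand.
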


Proposition \ref{thm:iden-ite} identifies the average interaction effect $\Delta_{\ite} = \mathbb{E}\{\omega_{\ITEone}(Z_2, X) - \omega_{\ITEzero}(Z_2, X)\}$ under Assumption  \ref{assump:iv}(a-b) {{with $Z_2$ being in the conditional argument}}, the first condition of Assumption~\ref{assump:iv}(d), {{Assumption~\ref{assump:latent_overlap}}} and Assumption \ref{assump-ite:iv}. It provides three different representations, offering flexibility in identifying $\Delta_{{\ite}}$ by leveraging different sets of nuisance parameters. Moreover, the various expressions in this section naturally inspire the development of multiply robust estimators and their corresponding theoretical results. We omit the discussion on estimation methods and related conclusions,  since all results can be established in a parallel manner as in Section \ref{sec:estimation}.
\section{Conclusion}
\label{sec:discussion}
IVs are widely used to identify causal effects in the presence of unmeasured confounding \citep{Angrist:1996,Wang2017iv}. In many practical applications, however, the focus of IV literature predominantly lies within frameworks that assume no spillover effects. To address these issues, we employ dual IVs under a series of assumptions to {{define, identify and estimate meaningful causal quantities in the presence of spillover effects.}} For more efficient estimation, we also focus on deriving EIFs and demonstrate how they can be utilized to construct estimators based on statistical and machine-learning methods. It is important to highlight that, in practice, correctly specifying several nuisance models may sometimes be challenging. {{So we provide a non-parametric estimation approach as a complement, based on incrementally imposing moment restrictions \citep{CHEN20075549,ai2022simple}.}} These estimators do not require modeling the functional relationships while maintaining consistency and efficiency. Simulation studies show that the proposed estimators perform well in finite samples.  

The proposed methods can be improved or extended in several ways. First,  while this paper primarily focuses on binary instruments and treatments, extending the proposed methodology to accommodate more general instruments or treatments represents an intriguing future research direction. {{Moreover, estimating network effects that extend beyond dyadic data scenarios, especially in the presence of unmeasured network confounding, remains a significant research area. These considerations extend beyond the scope of the current study and present valuable opportunities for future research.}}


 \bibliographystyle{apalike}
					\bibliography{mybib}
 
 \newpage
  
\hypersetup{linkcolor=black}
\renewcommand{\thesection}{S\arabic{section}}
\renewcommand{\theequation}{S\arabic{equation}}  
\renewcommand{\thefigure}{S\arabic{figure}} 
\renewcommand{\thetable}{S\arabic{table}} 
\renewcommand{\thetheorem}{S\arabic{theorem}} 
\renewcommand{\thelemma}{S\arabic{lemma}} 
\renewcommand{\theproposition}{S\arabic{proposition}} 
\renewcommand{\theassumption}{S\arabic{assumption}}  

\hypersetup{linkcolor=blue}
\begin{center}
  \bf \LARGE  Supplementary  Material
\end{center}
\vspace{5mm}
\appendix
The Supplementary Material includes proofs of the theorems and propositions, as well as the estimation procedure.


\section{{{Some notations for spillover effects}}}
\label{notation_spillover}
\begin{table}[ht]
	\centering
	\caption{Summary of notations for spillover effects used in this paper.}
	\label{tab:notations_spillover}
	\begin{tabular}{ccc}
		\toprule
    Notation & Definition  & Description   \\
        \midrule
		 $\pi_{\ste}(Z_2,X)$ & $\pr(Z_2\mid X)$    & Instrumental propensity score   \\
		$\delta_{\DTE}^j(X)$	& $\pr(D_{1}=d,D_2=1 \mid Z_2=j,X)$    &  Conditional treatment probability  \\
		 $\delta_{\STE}(X)$  & $\delta_{\STE}^1(X) - \delta_{\STE}^0(X)$   &  Treatment probability difference  \\
         $\mu_{\STE}(X)$ & $E\{ \mathbb{I}(D_1=d) D_2 \mid Z_2=0 ,X\}$   &  Conditional treatment expectation  \\
		 $\eta_{\STE}(X)$ & $E\{ Y_1\mathbb{I}(D_1=d)\mid Z_2=0,X\}$  &  Conditional outcome expectation \\
       {{$\omega_{\STE}(X)$}}  &  $E\{  Y_1(d,1) - Y_1(d,0) \mid X \}$  & Conditional effect expectation\\
		\bottomrule
	\end{tabular}%
\end{table}%

\section{The proof of Theorem \ref{thm:iden}}
\begin{proof} 
Assumption \ref{assump:iv}(a) implies that $(Z_1, Z_2) \indep U \mid X$, which guarantees that $(Z_1, U) \indep Z_2 \mid X$ and $(Z_2, U) \indep Z_1 \mid X$. Here, we fix $d=1$ just for convenience. We first consider the proof of average direct effect,
\begin{align*}
&E\left(Y_1D_2\mid Z_1=1 ,X \right) -E\left(Y_1D_2\mid Z_1=0 ,X \right) \\&=E\left\{E\left(Y_1D_2\mid Z_1=1,Z_2,U,X \right) \mid Z_1=1 ,X \right\} -E\left\{E\left(Y_1D_2\mid Z_1=0,Z_2,U,X \right) \mid Z_1=0 ,X \right\} \\&=E\left\{E\left(Y_1D_2\mid Z_1=1,Z_2,U,X \right) \mid X \right\}-E\left\{E\left(Y_1D_2\mid Z_1=0,Z_2,U,X \right) \mid X \right\}\;\;\;  {~~ Z_1 \indep (Z_2,U)\mid X }\\
&=E\left\{E\left(Y_1D_1D_2\mid Z_1=1,Z_2,U,X \right) \mid X \right\}  +E\left[E\left\{Y_1\left(1-D_1 \right) D_2\mid Z_1=1, Z_2,U,X \right\} \mid X\right]\\&\;\;\;\;\;\;\;-E\left\{E\left(Y_1D_1D_2\mid Z_1=0,Z_2,U,X \right) \mid X \right\} -E\left[E\left\{Y_1\left(1-D_1 \right) D_2\mid  Z_1=0,Z_2,U,X \right\}\mid X\right]\\
&=E\left[E\left\{Y_1(1,1)D_1D_2\mid Z_1=1,Z_2,U,X  \right\} \mid X \right] +E\left[E\left\{Y_1(0,1)\left(1-D_1 \right) D_2\mid Z_1=1,Z_2,U,X \right\} \mid X\right]\\&\;\;\;\;\;\;\;-E\left[E\left\{Y_1(1,1)D_1D_2\mid Z_1=0,Z_2,U,X  \right\} \mid X \right] -E\left[E\left\{Y_1(0,1)\left(1-D_1 \right) D_2\mid Z_1=0,Z_2,U,X \right\} \mid X\right]\\
&=E\left[E\left\{Y_1(1,1)\mid  U,X \right\} \mathrm{pr}(D_1=1,D_2=1\mid Z_1=1,Z_2,U,X) \mid X\right] { \qquad \text{Assumption}\;\ref{assump:latent_overlap}(a)}\\&\;\;\;\;\;\;\;\;+E\left[E\left\{Y_1(0,1)\mid U,X \right\} \mathrm{pr}(D_1=0,D_2=1\mid Z_1=1,Z_2,U,X)\mid X\right]\\&\;\;\;\;\;\;\;-E\left[E\left\{Y_1(1,1)\mid U,X \right\} \mathrm{pr}(D_1=1,D_2=1\mid Z_1=0,Z_2,U,X)\mid X\right]\\&\;\;\;\;\;\;\;\;-E\left[E\left\{Y_1(0,1)\mid  U,X \right\} \mathrm{pr}(D_1=0,D_2=1\mid Z_1=0,Z_2,U,X)\mid X\right]\;\\&=E\left[E\left\{Y_1(1,1)\mid  U,X \right\} \mathrm{pr}(D_2=1\mid Z_1=1,Z_2,D_1=1,U,X) \mathrm{pr}(D_1=1\mid Z_1=1,Z_2,U,X)\mid X\right]\\&\;\;\;\;\;\;\;\;+E\left[E\left\{Y_1(0,1)\mid U,X \right\} \mathrm{pr}(D_2=1\mid Z_1=1,Z_2,D_1=0,U,X) \mathrm{pr}(D_1=0\mid Z_1=1,Z_2,U,X)\mid X\right]\\&\;\;\;\;\;\;\;-E\left[E\left\{Y_1(1,1)\mid U,X \right\} \mathrm{pr}(D_2=1\mid Z_1=0,Z_2,D_1=1,U,X) \mathrm{pr}(D_1=1\mid Z_1=0,Z_2,U,X)\mid X\right]\\&\;\;\;\;\;\;\;\;-E\left[E\left\{Y_1(0,1)\mid  U,X \right\} \mathrm{pr}(D_2=1\mid Z_1=0,Z_2,D_1=0,U,X) \mathrm{pr}(D_1=0\mid Z_1=0,Z_2,U,X)\mid X\right]\\& =E\left[E\left\{Y_1(1,1)\mid  U,X \right\} \mathrm{pr}(D_2=1\mid Z_2,U,X) \mathrm{pr}(D_1=1\mid Z_1=1,Z_2,U,X)\mid X\right]\\&\;\;\;\;\;\;\;\;+E\left[E\left\{Y_1(0,1)\mid U,X \right\}  \mathrm{pr}(D_2=1\mid  Z_2,U,X) \mathrm{pr}(D_1=0\mid Z_1=1,Z_2,U,X)\mid X\right]\\&\;\;\;\;\;\;\;-E\left[E\left\{Y_1(1,1)\mid U,X \right\} \mathrm{pr}(D_2=1\mid  Z_2,U,X) \mathrm{pr}(D_1=1\mid Z_1=0,Z_2,U,X)\mid X\right]\\&\;\;\;\;\;\;\;\;-E\left[E\left\{Y_1(0,1)\mid  U,X \right\} \mathrm{pr}(D_2=1\mid  Z_2,U,X) \mathrm{pr}(D_1=0\mid Z_1=0,Z_2,U,X)\mid X\right] 
\\&=E\left[E\left\{Y_1(1,1)\mid  U,X \right\}\right.\\&\;\;\;\;\;\;\;\;\;\;\;\left.\times\left\{\mathrm{pr}(D_1=1\mid Z_1=1,Z_2,U,X)-\mathrm{pr}(D_1=1\mid Z_1=0,Z_2,U,X)\mid X \right\} \mathrm{pr}(D_2=1\mid Z_2,U,X) \right]\\&\;\;\;\;\;\;\;\;-E\left[E\left\{Y_1(0,1)\mid  U,X \right\}\right.\\&\;\;\;\;\;\;\;\;\;\;\;\left.\times\left\{\mathrm{pr}(D_1=1\mid Z_1=1,Z_2,U,X)-\mathrm{pr}(D_1=1\mid Z_1=0,Z_2,U,X)\mid X \right\} \mathrm{pr}(D_2=1\mid  Z_2,U,X) \right] 
\\&=E\left(\left[E\left\{Y_1(1,1)\mid  U,X \right\}-E\left\{Y_1(0,1)\mid  U,X \right\}\right]\right.\\&\;\;\;\;\;\;\;\;\;\;\;\left.\times\left\{\mathrm{pr}(D_1=1,D_2=1\mid Z_1=1,Z_2,U,X)-\mathrm{pr}(D_1=1,D_2=1\mid Z_1=0,Z_2,U,X)\right\} \mid X\right)
\\&=E\left[E\left\{Y_1(1,1)-Y_1(0,1)\mid  U,X \right\}\right.\\&\;\;\;\;\;\;\;\;\;\;\;\left.\times\left\{\mathrm{pr}(D_1=1,D_2=1\mid Z_1=1,Z_2,U,X)-\mathrm{pr}(D_1=1,D_2=1\mid Z_1=0,Z_2,U,X)\right\} \mid X\right)
\\
&  =E\left[E\left\{Y_1(1,1)-Y_1(0,1)\mid  U,X \right\}\mid X\right]\\&\;\;\;\;\;\;\;\;\;\;\;  \times E\left[\left\{E(D_1D_2\mid Z_1=1,Z_2,U,X)-E(D_1D_2\mid Z_1=0,Z_2,U,X)\mid X \right\}\mid X\right] \;\;{ \text{Assumption}\;\ref{assump:no-interaction}(a)}
\\&=E\left\{Y_1(1,1)-Y_1(0,1)\mid X \right\} \left\{E(D_1 D_2 \mid Z_1=1 ,X)-E(D_1 D_2 \mid Z_1=0 ,X)\mid X \right\}.
\end{align*} 
\end{proof} 
We hence obtain the following expression: \begin{align} 
\label{eq:wald-estimand}
  \omega_{\DTEone} (X)  
=E\left\{Y_1(1,1)-Y_1(0,1)\mid X \right\} = \dfrac{E(Y_1D_2 \mid Z_1=1,  X)-E\left(Y_1 D_2\mid Z_1=0 ,X \right)}{ E(D_1 D_2\mid Z_1=1, X )-E(D_1D_2  \mid Z_1=0, X )} . 
\end{align} 

Similarly, for $\omega_{\STE}(X)$, we have
\begin{proof} The proof of spillover treatment effect:
\begin{align*}
 &E\left(Y_1D_1\mid Z_2=1,X \right)-E\left(Y_1D_1\mid Z_2=0 ,X\right)
 \\&=E\left\{E\left(Y_1D_1\mid Z_2=1,Z_1,U,X\right)\mid Z_2=1,X \right\} -E\left\{E\left(Y_1D_1\mid Z_2=0,Z_1,U,X\right)\mid Z_2=0,X \right\}
 \\&=E\left\{E\left(Y_1D_1\mid Z_2=1,Z_1,U,X\right) \mid X\right\}-E\left\{E\left(Y_1D_1\mid Z_2=0,Z_1,U,X\right) \mid X\right\}\;\;\;  {~~ Z_2\indep (Z_1,U) \mid X }
 \\&=E\left\{E\left(Y_1D_2D_1\mid Z_2=1,Z_1,U,X\right)\mid X\right\}  +E\left[E\left\{Y_1\left(1-D_2\right)D_1\mid Z_2=1, Z_1,U,X\right\} \mid X \right]
 \\&\;\;\;\;\;\;\;-E\left\{E\left(Y_1D_2D_1\mid Z_2=0,Z_1,U,X\right) \mid X\right\} -E\left[E\left\{Y_1\left(1-D_2\right)D_1\mid  Z_2=0,Z_1,U,X\right\} \mid X\right]
 \\&=E\left\{E\left(Y_1(1,1)D_2D_1\mid Z_2=1,Z_1,U,X\right)\mid X \right\}\\&\;\;\;\;\;\;\;+E\left[E\left\{Y_1(1,0)\left(1-D_2\right)D_1\mid Z_2=1,Z_1,U,X\right\}\mid X \right]\\&\;\;\;\;\;\;\;-E\left\{E\left(Y_1(1,1)D_2D_1\mid Z_2=0,Z_1,U,X\right)\mid X \right\}\\&\;\;\;\;\;\;\;-E\left[E\left\{Y_1(1,0)\left(1-D_2\right)D_1\mid Z_2=0,Z_1,U,X\right\} \mid X \right] \\&=E\left[E\left\{Y_1(1,1)\mid  U\right\} \mathrm{pr}(D_2=1,D_1=1\mid Z_2=1,Z_1,U,X) \mid X \right] { \qquad \text{Assumption}\;\ref{assump:latent_overlap}(a)}\\&\;\;\;\;\;\;\;\;+E\left[E\left\{Y_1(1,0)\mid U,X\right\} \mathrm{pr}(D_2=0,D_1=1\mid Z_2=1,Z_1,U,X) \mid X\right]\\&\;\;\;\;\;\;\;-E\left[E\left\{Y_1(1,1)\mid U\right\} \mathrm{pr}(D_2=1,D_1=1\mid Z_2=0,Z_1,U,X) \mid X\right]\\&\;\;\;\;\;\;\;\;-E\left[E\left\{Y_1(1,0)\mid  U,X\right\} \mathrm{pr}(D_2=0,D_1=1\mid Z_2=0,Z_1,U,X)\right]\;\\&=E\left[E\left\{Y_1(1,1)\mid  U,X\right\} \mathrm{pr}(D_1=1\mid Z_2=1,Z_1,D_2=1,U,X) \mathrm{pr}(D_2=1\mid Z_2=1,Z_1,U,X) \mid X\right]\\&\;\;\;\;\;\;\;\;+E\left[E\left\{Y_1(1,0)\mid U,X\right\} \mathrm{pr}(D_1=1\mid Z_2=1,Z_1,D_2=0,U,X) \mathrm{pr}(D_2=0\mid Z_2=1,Z_1,U,X) \mid X\right]\\&\;\;\;\;\;\;\;-E\left[E\left\{Y_1(1,1)\mid U,X\right\} \mathrm{pr}(D_1=1\mid Z_2=0,Z_1,D_2=1,U,X) \mathrm{pr}(D_2=1\mid Z_2=0,Z_1,U,X) \mid X\right]\\&\;\;\;\;\;\;\;\;-E\left[E\left\{Y_1(1,0)\mid  U,X\right\} \mathrm{pr}(D_1=1\mid Z_2=0,Z_1,D_2=0,U,X) \mathrm{pr}(D_2=0\mid Z_2=0,Z_1,U,X) \mid X\right]\\& =E\left[E\left\{Y_1(1,1)\mid  U,X\right\} \mathrm{pr}(D_1=1\mid Z_1,U,X) \mathrm{pr}(D_2=1\mid Z_2=1,Z_1,U,X) \mid X\right]\\&\;\;\;\;\;\;\;\;+E\left[E\left\{Y_1(1,0)\mid U,X\right\} \times\mathrm{pr}(D_1=1\mid  Z_1,U,X) \mathrm{pr}(D_2=0\mid Z_2=1,Z_1,U,X)\mid X\right]\\&\;\;\;\;\;\;\;-E\left[E\left\{Y_1(1,1)\mid U,X\right\} \mathrm{pr}(D_1=1\mid  Z_1,U,X) \mathrm{pr}(D_2=1\mid Z_2=0,Z_1,U,X) \mid X\right]\\&\;\;\;\;\;\;\;\;-E\left[E\left\{Y_1(1,0)\mid  U,X\right\} \mathrm{pr}(D_1=1\mid  Z_1,U,X) \mathrm{pr}(D_2=0\mid Z_2=0,Z_1,U,X) \mid X\right] 
 \\&=E\left[E\left\{Y_1(1,1)\mid  U,X\right\}\right.\\&\;\;\;\;\;\;\;\;\;\;\;\left.\times\left\{\mathrm{pr}(D_2=1\mid Z_2=1,Z_1,U,X)-\mathrm{pr}(D_2=1\mid Z_2=0,Z_1,U,X) \mid X\right\} \mathrm{pr}(D_1=1\mid Z_1,U,X)\mid X \right]\\&\;\;\;\;\;\;\;\;-E\left[E\left\{Y_1(1,0)\mid  U,X\right\}\right.\\&\;\;\;\;\;\;\;\;\;\;\;\left.\times\left\{\mathrm{pr}(D_2=1\mid Z_2=1,Z_1,U,X)-\mathrm{pr}(D_2=1\mid Z_2=0,Z_1,U,X) \mid X\right\} \mathrm{pr}(D_1=1\mid  Z_1,U,X) \mid X \right] 
 \\&=E\left[E\left\{Y_1(1,1)\mid  U,X\right\}-E\left\{Y_1(1,0)\mid  U,X\right\}\right.\\&\;\;\;\;\;\;\;\;\;\;\;\left.\times\left\{\mathrm{pr}(D_2=1,D_1=1\mid Z_2=1,Z_1,U,X)-\mathrm{pr}(D_2=1,D_1=1\mid Z_2=0,Z_1,U,X)\right\} \mid X \right]
 \\&=E\left[E\left\{Y_1(1,1)-Y_1(1,0)\mid  U,X\right\}\right.\\&\;\;\;\;\;\;\;\;\;\;\;\left.\times\left\{\mathrm{pr}(D_2=1,D_1=1\mid Z_2=1,Z_1,U,X)-\mathrm{pr}(D_2=1,D_1=1\mid Z_2=0,Z_1,U,X)\right\} \mid X \right]
 \\
 & =E\left[E\left\{Y_1(1,1)\mid  U,X\right\}-E\left\{Y_1(1,0)\mid  U,X\right\} \mid X\right]\\&\;\;\;\;\;\;\;\;\;\;\; \times E\left[\left\{E(D_2D_1\mid Z_2=1,Z_1,U,X)-E(D_2D_1\mid Z_2=0,Z_1,U,X)\right\} \mid X\right] \;\;  \text{Assumption}\;\ref{assump:no-interaction}(b)
  \\&=E\left\{Y_1(1,1)-Y_1(1,0)\mid X\right\} \left\{E(D_2 D_1 \mid Z_2=1,X )-E(D_2 D_1 \mid Z_2=0,X )\right\}.
\end{align*} 
\end{proof}
We hence obtain the following expression: \begin{align} 
  \omega_{\STEone} (X)  
=E\left\{Y_1(1,1)-Y_1(1,0)\mid X \right\} = \dfrac{E(Y_1D_1 \mid Z_2=1,  X)-E\left(Y_1 D_1\mid Z_2=0 ,X \right)}{ E(D_2 D_1\mid Z_2=1, X )-E(D_2D_1  \mid Z_2=0, X )} . 
\end{align} 
\section{The proof of Corollary \ref{cor:alternative-form}}
\begin{proof}
    We now verify the results in Corollary \ref{cor:alternative-form}(a), 
    \begin{align*}
 E\left\{\frac{(-1)^{1-Z_1}D_2Y_1}{\pi_{\dte}(Z_1,X)\delta_{\DTEone}(X)}\bigg|X\right\}= \dfrac{E(Y_1D_2 \mid Z_1=1,  X)-E\left(Y_1 D_2\mid Z_1=0 ,X \right)}{\delta_{\DTEone}(X)} = \omega_{\DTE}(X). 
    \end{align*}
    Next, we verify the results in Corollary \ref{cor:alternative-form}(b), 
    \begin{align*}
 E & \left\{ \frac{(-1)^{1-Z_1} D_2}{\pi_{\dte}(Z_1,X) }\left\{Y_1-D_1 
\omega_{\DTEone}(X)\right\}\bigg|X\right\} \\&~~~~ = \{ {E(Y_1D_2 \mid Z_1=1,  X)-E\left(Y_1 D_2\mid Z_1=0 ,X \right)}\} -\delta_{\DTEone}(X) \Delta_{{\DTEone}} (X), 
    \end{align*}
    \begin{align*}
       \Rightarrow \omega_{\DTEone}(X)=\dfrac{E(Y_1D_2 \mid Z_1=1,  X)-E\left(Y_1 D_2\mid Z_1=0 ,X \right)}{\delta_{\DTEone}(X)}.
    \end{align*}   Finally, we directly verify the moment restriction in Corollary \ref{cor:alternative-form}(c), 
    \begin{align*}
 E & \left[\left\{ D_2 Y_1-\eta_{\DTEone}(X)-D_1 D_2 \omega_{\DTEone}(X)  +\mu_{\DTEone}(X)\omega_{\DTEone}(X)\right\}\bigg|Z_1,X\right]\\&~~~~ = E\left(Y_1 D_2\mid Z_1 ,X \right)  -\eta_{\DTEone}(X)- E\left(D_1 D_2\mid Z_1 ,X \right)\omega_{\DTEone}(X)    +\mu_{\DTEone}(X)\omega_{\DTEone}(X)    \\&~~~~ =   \eta_{\DTEone}(X)+Z_1\omega_{\DTEone}(X) \delta_{\DTEone}(X) -\eta_{\DTEone}(X)- \{ \mu_{\DTEone}(X)+Z_1 \delta_{{\DTEone}}(X)\}\omega_{\DTEone}(X)  \\&~~~~~~  +\mu_{\DTEone}(X)\omega_{\DTEone}(X)\\&~~~~ = 0.
    \end{align*}
    where we use the fact
    \begin{align*}
    E(Y_1D_2\mid Z_1,X)&=    E(Y_1D_2\mid Z_1=0,X)+Z_1\{  E(Y_1D_2\mid Z_1=1,X)-  E(Y_1D_2\mid Z_1=0,X)\}\\&=    E(Y_1D_2\mid Z_1=0,X)+Z_1\omega_{\DTEone}(X) \delta_{{\DTEone}}(X) \\&=    \eta_{\DTEone}(X)+Z_1\omega_{\DTEone}(X) \delta_{{\DTEone}}(X) ,\\ E(D_1D_2\mid Z_1,X)&=    E(D_1D_2\mid Z_1=0,X)+Z_1\{  E(D_1D_2\mid Z_1=1,X)-  E(D_1D_2\mid Z_1=0,X)\}\\ &=  \mu_{\DTEone}(X)+Z_1 \delta_{{\DTEone}}(X) .
\end{align*}
\end{proof}
\section{The proof of Theorem \ref{thm:eif-2}}  
\subsection{Preliminaries}
We will use the semiparametric theory in \citet{bickel1993efficient} to derive the efficient influence functions (EIFs). Let $O=({X},Z_1,Z_2,D_1,D_2,Y_1)$ be the vector of all observed variables. Under Assumption  \ref{assump:iv}, the observed likelihood can be factorized as 
\begin{equation}
\label{eq:likelihood}
f(O)=f(X)f(Z_1\mid X)f( Z_2\mid X)f(Y_1,D_1,D_2\mid Z_1,Z_2,X).
\end{equation}
To derive the EIFs, we consider a one-dimensional parametric submodel, $ f_t(O) $, which includes the true model $ f(O) $ at $ t = 0 $, i.e., $ f_t(O) \big|_{t=0} = f(O) $. We use $ t $ as a subscript to indicate quantities related to the submodel, such as $ \Delta_{\DTEone,t} $, which represents the value of $ \Delta_{\DTEone} $ in the submodel. A dot is used to represent the partial derivative with respect to $ t $, for instance, $ \dot{\Delta}_{\DTEone,t} = \partial \Delta_{\DTEone,t}/\partial t $. The score function of the parametric submodel is denoted by $ s_t(\cdot) $.  The score function $ s_t(O) $ can be decomposed as:
  \[
    s_t(O) = s_t(X) + s_t(Z_1\mid X) + s_t(Z_2\mid X) + s_t( Y_1,D_1,D_2 \mid Z_1,Z_2,X),
    \]
where
  $$ s_t(X) = \frac{\partial \log f_t(X)}{\partial t} , ~~~  s_t(Z_1\mid X) = \frac{\partial \log f_t(Z_1\mid X)}{\partial t},~~~ s_t(Z_2\mid X) = \frac{\partial \log f_t( Z_2\mid X)}{\partial t} ,$$ $$  s_t(Y_1,D_1,D_2 \mid Z_1,Z_2,X) = \frac{\partial \log f_t(Y_1,D_1,D_2 \mid Z_1,Z_2,X)}{\partial t},$$
  are the score functions corresponding to the four components of the likelihood. 
Analogous to $ f_t(O) \big|_{t=0} = f(O) $, we denote $ s_t(\cdot) \big|_{t=0} $ as $ s(\cdot) $, which is the score function evaluated at the true parameter value under the one-dimensional submodel.

From semiparametric theory, the tangent space $ \Lambda = H_1 \oplus H_2 \oplus H_3 \oplus H_4 $ is the direct sum of the following spaces:
  $$ H_1 = \{ h(X) : E\{h(X)\} = 0 \}, $$
  $$ H_2 = \{ h(Z_1, X) : E\{h(Z_1,X) \mid X\} = 0 \}, $$
  $$ H_3 = \{ h(Z_2,X) : E\{h(Z_2,X) \mid X\} = 0 \} , $$
  $$ H_4 = \{ h( Y_1, D_1,D_2,Z_1,Z_2,X) : E\{h( Y_1, D_1,D_2,Z_1,Z_2,X) \mid Z_1,Z_2, X\} = 0 \}, $$ 
  where $ H_1 $, $ H_2 $, $ H_3 $, and $ H_4 $ are orthogonal to each other.  The  EIF for $ \Delta_{\DTE} $, denoted by $ \varphi (O) \in \Lambda $, must satisfy the condition:
  \[
    \dot{\mu}_{\DTE,t} \big|_{t=0} = E\{ \varphi (O)  s(O)\}.
    \]
We will derive the EIFs by calculating $ \dot{\mu}_{\DTE,t} \big|_{t=0} $.
Define the following quantity for any function $ f(Y_1,D_1,D_2, Z_2,X) $:
  \begin{equation*}
\begin{aligned}
\psi_{f(Y_1,D_1,D_2,Z_2, X)} ^{z_1}&= \frac{1(Z_1 = z_1)\left[ f(Y_1,D_1,D_2, Z_2,X) - E\{f(Y_1,D_1,D_2,Z_2, X) \mid X, Z_1 = z_1\} \right]}{\pr(Z_1 = z_1 \mid X)}\\ &~~~~+ E\{f(Y_1,D_1,D_2,Z_2, X) \mid X, Z_1 = z_1\}. 
\end{aligned}
\end{equation*}  
for $z_1=0,1.$ To simplify the proof, we introduce a  lemma.
\begin{lemma}
\label{lem:eif-D1D2Y1}
For any function $ f(Y_1,D_1,D_2,Z_2, X) $, we define $ \mu_{0f}(X) = E\{f(Y_1,D_1,D_2,Z_2, X) \mid Z_1 = 0, X\} $ and $ \mu_{1f}(X) = E\{f(Y_1,D_1,D_2,Z_2, X) \mid Z_1 = 1, X\} $ . The corresponding derivatives of these expectations with respect to $ t $ at $ t = 0 $ are:
  
  \[
    \dot{\mu}_{0f,t}(X) \big|_{t=0} = E\left[\{\psi^0_{f(Y_1, D_1,D_2,Z_2, X)} - \mu_{0f}(X)\} s(Y_1, D_1,D_2, Z_2 \mid Z_1, X) \mid X\right],
    \]
\[
  \dot{\mu}_{1f,t}(X) \big|_{t=0} = E\left[\{\psi^1_{f(Y_1,D_1,D_2, Z_2,X)    } - \mu_{1f}(X)\} s(Y_1, D_1,D_2, Z_2 \mid Z_1, X) \mid X\right].
  \]
As a special case, when $ f(Y_1,D_1,D_2, Z_2,X) = D_1D_2 $, we have $\delta_{\DTEone}^1(X)=E(D_1D_2\mid Z_{{1}}=1,X) $ and $\delta_{\DTEone}^0(X) = E(D_1D_2 \mid Z _1 =0, X) $, where
\[
  \dot{\delta}_{\DTEone}^1(X) \big|_{t=0} = E\left[\{\psi_{D_1D_2}^1 -\delta_{\DTEone}^1(X)\} s(D_1,D_2 \mid Z_1, X) \mid X\right],
  \]
\[
  \dot{\delta}_{\DTEone}^0(X) \big|_{t=0} = E\left[\{\psi_{D_1D_2}^0 - \delta_{\DTEone}^0(X)\} s(D_1,D_2 \mid Z_1, X) \mid X\right].
  \]
As a special case, when $ f(Y_1,D_1,D_2, Z_2,X) = Y_1 D_2 $, we have $\alpha_{\DTEone}^1(X)=E( Y_1 D_2 \mid Z_{{1}}=1,X) $ and $\alpha_{\DTEone}^0(X) = E(Y_1D_2 \mid Z _1 =0, X) $, where
\[
  \dot{\beta}_{\DTEone}^1(X) \big|_{t=0} = E\left[\{\psi_{Y_1D_2}^1 -\alpha_{\DTEone}^1(X)\} s(Y_1,D_2 \mid Z_1, X) \mid X\right],
  \]
\[
  \dot{\beta}_{\DTEone}^0(X) \big|_{t=0} = E\left[\{\psi_{Y_1D_2}^0 - \alpha_{\DTEone}^0(X)\} s(Y_1,D_2 \mid Z_1, X) \mid X\right].
  \]

\end{lemma}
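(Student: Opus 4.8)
The plan is to differentiate the conditional expectation directly under a one-dimensional submodel and then recast the resulting score-weighted integral in the inverse-probability-weighted form carried by $\psi^{z_1}_f$. For $z_1\in\{0,1\}$ write
\[
\mu_{z_1 f,t}(X)=E_t\{f(Y_1,D_1,D_2,Z_2,X)\mid Z_1=z_1,X\}=\int f\,f_t(y_1,d_1,d_2,z_2\mid Z_1=z_1,X)\,d\nu,
\]
which depends on the submodel only through the conditional law $f_t(Y_1,D_1,D_2,Z_2\mid Z_1,X)$, not through $f_t(X)$ or $f_t(Z_1\mid X)$, and note that $f$ itself is free of $t$. Interchanging differentiation and integration, licensed by the standard regularity of the submodel, and using $\partial_t f_t=f_t\,s_t$, evaluation at $t=0$ gives
\[
\dot\mu_{z_1 f,t}(X)\big|_{t=0}=E\{f\,s(Y_1,D_1,D_2,Z_2\mid Z_1,X)\mid Z_1=z_1,X\},
\]
since the score of the conditional law $f_t(\cdot\mid Z_1,X)$, viewed as a function of $(Y_1,D_1,D_2,Z_2)$, equals $s_t(Y_1,D_1,D_2,Z_2\mid Z_1,X)$ on the event $\{Z_1=z_1\}$.

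Next I would center: the conditional score has mean zero, $E\{s(Y_1,D_1,D_2,Z_2\mid Z_1,X)\mid Z_1,X\}=0$, so $\mu_{z_1 f}(X)$ can be subtracted for free, yielding
\[
\dot\mu_{z_1 f,t}(X)\big|_{t=0}=E\big[\{f-\mu_{z_1 f}(X)\}\,s(Y_1,D_1,D_2,Z_2\mid Z_1,X)\mid Z_1=z_1,X\big].
\]
To pass from conditioning on $\{Z_1=z_1\}$ to conditioning on $X$ alone, observe the algebraic identity $\psi^{z_1}_f-\mu_{z_1 f}(X)=\dfrac{\mathbb{I}(Z_1=z_1)}{\pr(Z_1=z_1\mid X)}\{f-\mu_{z_1 f}(X)\}$. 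Taking $E\{\cdot\mid X\}$ of $\{\psi^{z_1}_f-\mu_{z_1 f}(X)\}\,s(Y_1,D_1,D_2,Z_2\mid Z_1,X)$, conditioning first on $(Z_1,X)$ and applying the tower property, the factor $\mathbb{I}(Z_1=z_1)/\pr(Z_1=z_1\mid X)$ integrates to one on $\{Z_1=z_1\}$ and annihilates the other value of $Z_1$; what remains is exactly the right-hand side of the previous display. This establishes the two stated formulas for $\dot\mu_{0f,t}(X)$ and $\dot\mu_{1f,t}(X)$.

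The special cases $f=D_1D_2$ and $f=Y_1D_2$ then follow once one checks that the full conditional score $s(Y_1,D_1,D_2,Z_2\mid Z_1,X)$ may be replaced by the lower-dimensional scores $s(D_1,D_2\mid Z_1,X)$ and $s(Y_1,D_2\mid Z_1,X)$, respectively. For $f=D_1D_2$ the centered weight $\psi^{z_1}_{D_1D_2}-\delta^{z_1}_{\DTEone}(X)$ is a function of $(D_1,D_2,Z_1,X)$ only and has mean zero given $(D_1,D_2,Z_1,X)$, while the chain rule gives $s(Y_1,D_1,D_2,Z_2\mid Z_1,X)=s(D_1,D_2\mid Z_1,X)+s(Y_1,Z_2\mid D_1,D_2,Z_1,X)$; the cross term vanishes because $E\{s(Y_1,Z_2\mid D_1,D_2,Z_1,X)\mid D_1,D_2,Z_1,X\}=0$, and the symmetric splitting $s=s(Y_1,D_2\mid Z_1,X)+s(D_1,Z_2\mid Y_1,D_2,Z_1,X)$ handles $f=Y_1D_2$. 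I expect the only delicate bookkeeping to be this last projection argument — keeping track of which conditioning set each centered weight is measurable with respect to, so that the leftover score component integrates away — whereas the differentiation and inverse-probability lifting are routine.
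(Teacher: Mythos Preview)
Your argument is correct and matches the paper's proof essentially line for line: differentiate the conditional expectation to bring down the conditional score, center using the zero-mean property of the score, then lift from $\{Z_1=z_1\}$ to $X$ via the inverse-probability identity $\psi^{z_1}_f-\mu_{z_1 f}(X)=\mathbb{I}(Z_1=z_1)\{f-\mu_{z_1 f}(X)\}/\pr(Z_1=z_1\mid X)$. Your score-splitting justification for the special cases is in fact more careful than the paper, which simply asserts the reduced-score forms; one small slip in wording --- the centered weight is $(D_1,D_2,Z_1,X)$-measurable, not ``mean zero given $(D_1,D_2,Z_1,X)$'' --- but the intended tower-property argument (pull the weight out, then use $E\{s(Y_1,Z_2\mid D_1,D_2,Z_1,X)\mid D_1,D_2,Z_1,X\}=0$) is clearly what you mean and is correct.
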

\begin{proof}
We first prove the general result for $ \dot{\mu}_{0f,t}(X) \big|_{t=0} $:
  \begin{align*}
\dot{\mu}_{0f,t}(X) \big|_{t=0} &= \frac{\partial}{\partial t} E_t\{f(Y_1,D_1,D_2, Z_2,X) \mid Z_1 = 0, X\} \Bigg|_{t=0} 
\\
&= E\{f(Y_1,D_1,D_2, Z_2,X) \cdot s(Y_1, D_1,D_2,Z_2 \mid Z_1 = 0, X) \mid Z_1 = 0, X\} \\
&= E\left[\{f(Y_1,D_1,D_2, Z_2,X) - \mu_{0f}(X)\} s(Y_1, D_1,D_2,Z_2 \mid Z_1 = 0, X) \mid Z_1 = 0, X\} 
\right]\\
&= E\left[\{ f(Y_1,D_1,D_2, Z_2,X) - \mu_{0f}(X) \} \cdot \frac{1-Z_1}{\pr(Z_1 =0 \mid X)} \cdot s(Y_1, D_1,D_2,Z_2 \mid Z_1 , X) \mid X\right] \\
&= E\left[\{\psi_{ f(Y_1,D_1,D_2, Z_2,X) }^0- \mu_{0f}(X)\} s(Y_1, D_1,D_2,Z_2 \mid Z_1 , X) \mid X\right],
\end{align*}
where the third equality holds due to the score function satisfying $E\{s(Y_1, D_1,D_2,Z_2 \mid Z_1 = 0, X) \mid Z_1 =  0, X\} = 0 $. Similar reasoning process also applies to $\mu_{1f}(X)$ just by letting $ f(Y_1,D_1,D_2, Z_2,X) =D_1D_2 $ and $ f(Y_1,D_1,D_2, Z_2,X) =Y_1D_2 $:
\[
E\left[\{\psi_{D_1D_2}^{z_1}-\delta_{\DTEone}^{z_1}(X)\} s(D_1,D_2 \mid Z_1, X) \mid X\right] = 0,
\]
\[
E\left[\{\psi_{Y_1D_2}^{z_1}-\alpha_{\DTEone}^{z_1}(X)\} s(Y_1,D_2 \mid Z_1,  X) \mid X\right] = 0.
\]
\end{proof}


 
 
\subsection{EIF for the average direct treatment effect}
 In this section, we show the EIF for   $\Delta_{\DTEone}$. The EIFs for other parameters can be derived similarly. Let  $$\alpha_{\DTE}^1(X)=E\{ Y_1\mathbb{I}(D_{2}=d)\mid Z_1=1,X\} ,~~\alpha_{\DTE}^0(X)= E\{ Y_1\mathbb{I}(D_{2}=d)\mid Z_1=0,X\}.$$ $$\delta_{\DTE}^1(X)=\pr(D_{1}=1,D_2=d\mid Z_{{1}}=1,X) ,~~\delta_{\DTE}^0(X)=\pr(D_{1}=1,D_2=d\mid Z_{{1}}=0,X).$$$$\alpha_{\DTE}(X)=\alpha_{\DTE}^1(X)-\alpha_{\DTE}^0(X),~~\delta_{\DTE}(X)=\delta_{\DTE}^1(X)-\delta_{\DTE}^0(X).$$ First, we have \begin{equation}
     \label{eq:der-delta}
     \begin{aligned} 
\left.  \dot{\Delta}_{\DTEone,t}\right|_{t=0} &=\left.\frac{\partial}{\partial t} E_{t}\left\{\frac{\alpha_{{\DTEone},t}(X)}{\delta_{\DTEone,t}(X)}\right\}\right|_{t=0} \\
&=E\left\{\frac{\alpha_{{\DTEone}}(X)}{\delta_{{\DTEone}}(X)} s(X)\right\}+\left.E\left\{\frac{\dot{\beta}_{{\DTEone},t}(X) \delta_{{\DTEone}}(X)-\alpha_{{\DTEone}}(X) \dot{\delta}_{\DTEone,t}(X)}{\delta^2_{{\DTEone}}(X)}\right\}\right|_{t=0} \\
&=E\left\{\frac{\alpha_{{\DTEone}}(X)}{\delta_{{\DTEone}}(X)}s(X )\right\}  +E\left\{\frac{\dot{\beta}^1_{{\DTEone},t}(X)-\dot{\beta}^0_{{\DTEone},t}(X) }{ \delta_{{\DTEone}}(X) } \right\}\bigg|_{t=0} \\&~~~~~ -E\left[\frac{  \alpha_{{\DTEone}}(X) }{\delta^2_{{\DTEone}}(X)}\{\dot{\delta}^1_{\DTEone,t}(X)-\dot{\delta}^0_{\DTEone,t}(X)\}\right]\bigg|_{t=0} .
\end{aligned}
 \end{equation}
 Since $ E\{\Delta_{\DTEone} s(X)\} = 0 $, the first term in \eqref{eq:der-delta} becomes

\[
E\left\{\frac{\alpha_{{\DTEone}}(X)}{\delta_{{\DTEone}}(X)} s(X)\right\} = E\left[\left\{\frac{\alpha_{{\DTEone}}(X)}{\delta_{{\DTEone}}(X)}   - \Delta_{\DTEone}\right\} s(X)  \right] .
\]
From Lemma \ref{lem:eif-D1D2Y1}, the second term in \eqref{eq:der-delta} is reduced to  
\begin{equation}
   \begin{aligned}
        E_t& \left\{\frac{\dot{\beta}^1_{{\DTEone},t}(X)-\dot{\beta}^0_{{\DTEone},t}(X) }{ \delta_{{\DTEone}}(X) } \right\}\bigg|_{t=0} \\
&= E\left(E\left[ \{{\psi_{Y_1D_2}^1-\psi_{Y_1D_2}^0 -\alpha_{\DTEone}^1(X)+\alpha_{\DTEone}^0(X) }\}s(Y_1,D_2 \mid Z_1, X) \mid X\right]  \dfrac{1}{{ \delta_{{\DTEone}}(X) }}\right)\\
&= E \left[\dfrac{\{{\psi_{Y_1D_2}^1-\psi_{Y_1D_2}^0 -\alpha_{\DTEone}^1(X)+\alpha_{\DTEone}^0(X) }\} }{{ \delta_{{\DTEone}}(X) }}  s(Y_1,D_2 \mid Z_1, X)  \right]\\
&= E \left[\dfrac{\{{\psi_{Y_1D_2}^1-\psi_{Y_1D_2}^0 -\alpha_{\DTEone } (X) }\} }{{ \delta_{{\DTEone}}(X) }}  s(Y_1,D_2 \mid Z_1, X)  \right]\\&= E\begin{pmatrix}
    \begin{bmatrix}
    \dfrac{Z_1\left\{ Y_1D_2 - E (Y_1D_2 \mid X, Z_1 = 1) \right\}}{\pr(Z_1 = 1 \mid X)} 
\\\addlinespace[1.5mm]- \dfrac{(1-Z_1)\left\{ Y_1D_2 - E (Y_1D_2 \mid X, Z_1=0 ) \right\}}{\pr(Z_1 = 0 \mid X)} \\ \addlinespace[1.5mm]-\alpha_{\DTEone } (X)
\end{bmatrix} \dfrac{s(Y_1,D_2 \mid Z_1, X)}{\delta_{{\DTEone}}(X) }
\end{pmatrix}\\&= E 
    \begin{bmatrix}
    \dfrac{(2Z_1-1)\left\{ Y_1D_2 - E (Y_1D_2 \mid X, Z_1 ) \right\}}{\pr(Z_1  \mid X)} \dfrac{1}{\delta_{{\DTEone}}(X) } s(Y_1,D_2 \mid Z_1, X)
\end{bmatrix},
   \end{aligned}
\end{equation}
where the last equality holds due to 
$$   E\left\{\frac{\alpha_{{\DTEone}}(X)}{\delta_{{\DTEone}}(X)}s(Y_1,D_2 \mid Z_1, X)\right\} =0.$$
Using Lemma \ref{lem:eif-D1D2Y1}, the third term in \eqref{eq:der-delta}  is reduced to:\begin{align*} 
E&\left[\frac{  \alpha_{{\DTEone}}(X) }{\delta^2_{{\DTEone}}(X)}\{\dot{\delta}^1_{\DTEone,t}(X)-\dot{\delta}^0_{\DTEone,t}(X)\}\right]\bigg|_{t=0} \\
&= E \left[\frac{  \alpha_{{\DTEone}}(X) }{\delta^2_{{\DTEone}}(X)}E\left[ \{{\psi_{D_1D_2}^1-\psi_{D_1D_2}^0 -\delta_{\DTEone}^1(X)+\delta_{\DTEone}^0(X) }\}s(D_1,D_2 \mid Z_1, X) \mid X\right]\right]\\
&= E \left[\frac{  \alpha_{{\DTEone}}(X) }{\delta^2_{{\DTEone}}(X)} \{{\psi_{D_1D_2}^1-\psi_{D_1D_2}^0 -\delta_{\DTE } (X) }\}s(D_1,D_2 \mid Z_1, X) \right]\\
&= E \left[\frac{  \alpha_{{\DTEone}}(X) }{\delta^2_{{\DTEone}}(X)}   \begin{bmatrix}
    \dfrac{Z_1\left\{ D_1D_2 - E (D_1D_2 \mid X, Z_1 = 1) \right\}}{\pr(Z_1 = 1 \mid X)} 
\\\addlinespace[1.5mm]- \dfrac{(1-Z_1)\left\{ D_1D_2 - E (D_1D_2 \mid X, Z_1=0 ) \right\}}{\pr(Z_1 = 0 \mid X)} \\ \addlinespace[1.5mm]-\delta_{\DTEone } (X)
\end{bmatrix}s(D_1,D_2 \mid Z_1, X) \right]\\
&=E 
    \begin{bmatrix}
    \dfrac{(2Z_1-1)\left\{ D_1D_2 - E (D_1D_2 \mid X, Z_1  ) \right\}}{\pr(Z_1  \mid X)} \dfrac{ {  \alpha_{{\DTEone}}(X) }}{{\delta^2_{{\DTEone}}(X)} } s(D_1,D_2 \mid Z_1, X)
\end{bmatrix}.
\end{align*}By plugging the three formulas into \eqref{eq:der-delta}, we obtain the following expression for $\left.  \dot{\Delta}_{t}\right|_{t=0}$:
\begin{equation}
    \label{eq:N-exp-deriv}
    \begin{aligned}
    \left.  \dot{\Delta}_{t}\right|_{t=0} &= E\left[\left\{\frac{\alpha_{{\DTEone}}(X)}{\delta_{{\DTEone}}(X)}   - \Delta_{\DTEone}\right\} s(X) \right]
\\
&~~~+ E\left[   \dfrac{(2Z_1-1)\left\{ Y_1D_2 - E (Y_1D_2 \mid X, Z_1 ) \right\}}{\pr(Z_1  \mid X)} \dfrac{1}{\delta_{{\DTEone}}(X) } s(Y_1,D_2 \mid Z_1, X)\right]
\\
&~~~- E\left[    \dfrac{(2Z_1-1)\left\{ D_1D_2 - E (D_1D_2 \mid X, Z_1  ) \right\}}{\pr(Z_1  \mid X)} \dfrac{ {  \alpha_{{\DTEone}}(X) }}{{\delta^2_{{\DTEone}}(X)} } s(D_1,D_2 \mid Z_1, X)\right].
\end{aligned}
\end{equation}We can verify that
\begin{gather*}
 \frac{\alpha_{{\DTEone}}(X)}{\delta_{{\DTEone}}(X)}   - \Delta_{\DTEone}\in H_1,\\
  \dfrac{(2Z_1-1)\left\{ Y_1D_2 - E (Y_1D_2 \mid X, Z_1 ) \right\}}{\pr(Z_1  \mid X)} \dfrac{1}{\delta_{{\DTEone}}(X) } \in H_4,\\
    \dfrac{(2Z_1-1)\left\{ D_1D_2 - E (D_1D_2 \mid X, Z_1  ) \right\}}{\pr(Z_1  \mid X)} \dfrac{{  \alpha_{{\DTEone}}(X) }}{{\delta^2_{{\DTEone}}(X)} }  \in H_4.
\end{gather*}
Since $H_1$, $H_2$, $H_3$, and $H_4$ are orthogonal to each other, we can rewrite \eqref{eq:N-exp-deriv} as:
\begin{equation*}
     \begin{aligned}
    \left.  \dot{\Delta}_{t}\right|_{t=0} &= E\left[\left\{\frac{\alpha_{{\DTEone}}(X)}{\delta_{{\DTEone}}(X)}   - \Delta_{\DTEone}\right\} s(O) \right]
\\
&~~~+ E\left[   \dfrac{(2Z_1-1)\left\{ Y_1D_2 - E (Y_1D_2 \mid X, Z_1 ) \right\}}{\pr(Z_1  \mid X)} \dfrac{1}{\delta_{{\DTEone}}(X) } s(O)\right]
\\
&~~~- E\left[    \dfrac{(2Z_1-1)\left\{ D_1D_2 - E (D_1D_2 \mid X, Z_1  ) \right\}}{\pr(Z_1  \mid X)} \dfrac{ {  \alpha_{{\DTEone}}(X) }}{{\delta^2_{{\DTEone}}(X)} } s(O)\right].
\end{aligned}
\end{equation*}
As a result, we obtain the EIF for $\Delta_\DTEone$:
\begin{align*}
    \varphi^{\eff}_{\DTEone}\left(O \right)&= \frac{\alpha_{{\DTEone}}(X)}{\delta_{{\DTEone}}(X)}   + \dfrac{(2Z_1-1)\left\{ Y_1D_2 - E (Y_1D_2 \mid X, Z_1 ) \right\}}{\pr(Z_1  \mid X)} \dfrac{1}{\delta_{{\DTEone}}(X) } \\
&~~~-  \dfrac{(2Z_1-1)\left\{ D_1D_2 - E (D_1D_2 \mid X, Z_1  ) \right\}}{\pr(Z_1  \mid X)} \dfrac{ {  \alpha_{{\DTEone}}(X) }}{{\delta^2_{{\DTEone}}(X)} } -\Delta_{\DTEone}.
\end{align*}
Simplifying the expression: 
\begin{equation*}
\begin{aligned}
   \varphi^{\eff}_{\DTEone}\left(O \right)&=\frac{2 Z_1-1}{\pr(Z_1  \mid X)}\left(\frac{Y_1D_2-E(Y_1D_2 \mid Z_1, X)}{\delta_{{\DTEone}}(X)}-\frac{(D_1D_2-E(D_1D_2 \mid Z_1, X))}{\delta_{{\DTEone}}(X)} \frac{\alpha_{{\DTEone}}(X)}{\delta_{{\DTEone}}(X)}\right)\\
&~~~+\frac{\alpha_{{\DTEone}}(X)}{\delta_{{\DTEone}}(X)} \\
& =\frac{2 Z_1-1}{\pr(Z_1  \mid X)} \frac{1}{\delta_{{\DTEone}}(X)}\{Y_1D_2-E(Y_1D_2 \mid Z_1, X)-(D_1D_2-E(D_1D_2 \mid Z_1, X)) \omega_{\DTEone}(X)\}\\
&~~~+\omega_{\DTEone}(X) -\Delta_{\DTEone} \\
& =\frac{2 Z_1-1}{\pr(Z_1  \mid X)} \frac{1}{\delta_{{\DTEone}}(X)}\left\{\begin{pmatrix}
  Y_1D_2-\omega_{\DTEone}(X) \delta_{{\DTEone}}(X) Z_1-\eta_{\DTEone}(X)\\-D_1D_2 \omega_{\DTEone}(X)+Z_1 \delta_{{\DTEone}}(X) \omega_{\DTEone}(X)+\mu_{\dte}(X) \omega_{\DTEone}(X)
  \end{pmatrix}\right\}\\
&~~~+\omega_{\DTEone}(X) -\Delta_{\DTEone} \\
& =\frac{2 Z_1-1}{\pr(Z_1  \mid X)} \frac{1}{\delta_{{\DTEone}}(X)}\left\{\begin{pmatrix}
    Y_1D_2-\eta_{\DTEone}(X)-D_1D_2 \omega_{\DTEone}(X)\\+\mu_{\dte}(X) \omega_{\DTEone}(X)
\end{pmatrix}\right\}
+\omega_{\DTEone}(X) -\Delta_{\DTEone},
\end{aligned}
\end{equation*} 
where we use the facts:
\begin{align*}
    E(Y_1D_2\mid Z_1,X)&=    E(Y_1D_2\mid Z_1=0,X)+Z_1\{  E(Y_1D_2\mid Z_1=1,X)-  E(Y_1D_2\mid Z_1=0,X)\}\\&=    E(Y_1D_2\mid Z_1=0,X)+Z_1\omega_{\DTEone}(X) \delta_{{\DTEone}}(X) \\&=    \eta_{\DTEone}(X)+Z_1\omega_{\DTEone}(X) \delta_{{\DTEone}}(X) ,\\ E(D_1D_2\mid Z_1,X)&=    E(D_1D_2\mid Z_1=0,X)+Z_1\{  E(D_1D_2\mid Z_1=1,X)-  E(D_1D_2\mid Z_1=0,X)\}\\ &=  \mu_{\dte}(X)+Z_1 \delta_{{\DTEone}}(X) .
\end{align*}
We show that  $\widehat{\Delta }_{\DTEone}$ with $ \{			 \widehat\pi_{\dte} (X) , \widehat \mu_{\dte}(X)  ,  {\widehat\eta}_{\DTEone} (X) , \widehat\omega_{\DTEone}(X) , {\widehat\delta}_{\DTEone} (X) \}$
     satisfying the regularity conditions (a)-(d) in Theorem \ref{THM: NON-MARCHINE-LEARNING} is asymptotically normal and has the influence function $ \varphi \left( O ;\Delta_{\DTEone}\right)$ in Theorem \ref{thm:eif-2}, therefore achieving the semiparametric efficiency. The proofs for other estimators are similar and hence omitted.  
     \section{The proof of Theorem \ref{THM: NON-MARCHINE-LEARNING}}
     \subsection{The proof of Theorem \ref{THM: NON-MARCHINE-LEARNING}(i)}
     \label{proff:thm3-1}
     Let $\theta$ denote the nuisance functions $\theta=\{			 \pi_{\dte} (X) ,  \mu_{\dte}(X)  ,  \eta_{\DTEone} (X) , \omega_{\DTEone}(X) , \delta_{\DTEone} (X) \}$ and $\widehat\theta$ be its corresponding estimator $\widehat\theta=\{			 \widehat\pi_{\dte} (X) , \widehat \mu_{\dte}(X)  ,  {\widehat\eta}_{\DTEone} (X) , \widehat\omega_{\DTEone}(X) , {\widehat\delta}_{\DTEone} (X) \}$.  Let $\theta^*$ be the probability limit of $ \widehat\theta$.     
      Let 
\begin{align*}
    \varphi(O;  \widehat\theta) &=\mathbb{P}_n\left[    \dfrac{(-1)^{1-Z_1}}{\widehat\pi_{\dte} (Z_1,X){\widehat\delta}_{\DTEone}(X)}\begin{Bmatrix}
 D_2Y_1-{\widehat\eta}_{\DTEone}(X)-D_1D_2\widehat\omega_{\DTEone}(X)\\+  \widehat\mu_{\dte}(X) \widehat\omega_{\DTEone}(X) 
    \end{Bmatrix}+\widehat\omega_{\DTEone}(X) \right].
\end{align*}

By the empirical process theory, we have
\begin{align}
 \notag   \P_n \{	\varphi(O;  \widehat\theta)\} - \P\{	 \varphi(O; \theta^*) \} & = (\P_n - \P) \varphi(O;  \widehat\theta) + \P \left\{ \varphi(O; \widehat \theta) - \varphi(O; \theta^*) \right\}\\ \notag  
&= (\P_n - \P) \varphi(O; \theta^*) + \P \left\{ \varphi(O; \widehat\theta) - \varphi(O; \theta^*) \right\} + o_{p}(1) , \notag   
\end{align}
where the first equality is followed by Assumption \ref{assump:regular}(a).  
We next analyze the  term $ \P \left\{ \varphi(O; \widehat\theta) - \varphi(O; \theta ) \right\}
$.  Following the derivation of the bias formula, we have
\begin{align*}
&\P\left[\frac{(-1)^{1-Z_1}}{\widehat\pi_{\dte}(Z_1,X){{\widehat\delta}_{\DTEone}(X)}}\left\{Y_1D_2-{{\widehat\eta}}_{\DTEone}(X)-D_1D_2{\widehat\omega}_{\DTEone}(X) + {\widehat\mu}_{\dte}(X) {\widehat\omega}_{\DTEone}(X)\right\}+{\widehat\omega}_{\DTEone}(X)-\omega_{\DTEone}\right]\\
&=\P\begin{bmatrix}\dfrac{(-1)^{1-Z_1}}{\widehat\pi_{\dte}(Z_1,X){{\widehat\delta}_{\DTEone}(X)}}\begin{Bmatrix}
\P\left(Y_1D_2\mid Z_1,X\right) - {{\widehat\eta}}_{\DTEone}(X) \\\addlinespace[1mm] -
  \P\left(D_1D_2\mid Z_1,X\right){\widehat\Delta}_{\DTEone} (X)\\\addlinespace[1mm]+{\widehat\mu}_{\dte}(X){\widehat\omega}_{\DTEone}(X)
\end{Bmatrix}+ {\widehat\omega}_{\DTEone}(X)-\omega_{\DTEone}(X)\end{bmatrix}\\
&=\P\begin{bmatrix}\dfrac{(-1)^{1-Z_1}}{\widehat\pi_{\dte}(Z_1,X) {{\widehat\delta}_{\DTEone}(X) }}\begin{Bmatrix}
\eta_{\DTEone}(X) +\delta_{{\DTEone}}(X) \omega_{\DTEone}(X) Z_1- {{\widehat\eta}}_{\DTEone}(X) \\ -\mu_{\dte}(X){\widehat\omega}_{\DTEone}(X)-\delta_{{\DTEone}}(X)Z_1{\widehat\omega}_{\DTEone}(X) \\  +{\widehat\mu}_{\dte}(X){\widehat\omega}_{\DTEone}(X)
\end{Bmatrix}+{\widehat\omega}_{\DTEone}(X) -\omega_{\DTEone}(X)\end{bmatrix}\\
&=\P\begin{pmatrix}\dfrac{(-1)^{1-Z_1}}{\widehat\pi_{\dte}(Z_1,X){{\widehat\delta}_{\DTEone}(X)}}\begin{bmatrix}
\eta_{\DTEone}(X)-{{\widehat\eta}}_{\DTEone}(X)\\+\left\{{\widehat\mu}_{\dte}(X)-\mu_{\dte}(X)\right\}{\widehat\omega}_{\DTEone}(X)\\+\delta_{{\DTEone}}(X)Z_1\{\omega_{\DTEone}(X)-{\widehat\omega}_{\DTEone}(X)\}
\end{bmatrix} +{\widehat\omega}_{\DTEone}(X)-\omega_{\DTEone}(X)\end{pmatrix} \\
&=\P\left[\frac{(-1)^{1-Z_1}\delta_{{\DTEone}}(X)Z_1}{\widehat\pi_{\dte}(Z_1,X){{\widehat\delta}_{\DTEone}(X)}}\left\{\omega_{\DTEone}(X)-{\widehat\omega}_{\DTEone}(X)\right\}-\left\{\omega_{\DTEone}(X)-{\widehat\omega}_{\DTEone}(X)\right\}\right]\\&\;\;\;\;\;\;\;+\P\begin{bmatrix}\left\{\dfrac{(-1)^{1-Z_1}}{\widehat\pi_{\dte}(Z_1,X){{\widehat\delta}_{\DTEone}}(X)}-\dfrac{(-1)^{1-Z_1}}{\pi_{\dte}(Z_1,X){{\widehat\delta}_{\DTEone}(X)}}\right\}\\\times\left[\left\{\eta_{\DTEone}(X)-{{\widehat\eta}}_{\DTEone}(X)\right\}+\left\{{\widehat\mu}_{\dte}(X)-\mu_{\dte}(X)\right\}{\widehat\omega}_{\DTEone}(X)\right]\end{bmatrix}\\
&=\P\left\{\frac{(-1)^{1-Z_1}\delta_{{\DTEone}}(X)Z_1}{\widehat\pi_{\dte}(Z_1,X){{\widehat\delta}_{\DTEone}}(X)}\left\{\omega_{\DTEone}(X)-{\widehat\omega}_{\DTEone}(X)\right\}-\left\{\omega_{\DTEone}(X)-{\widehat\omega}_{\DTEone}(X)\right\}\right\}\\&\;\;\;\;\;\;\;+\P\begin{bmatrix}\left\{\dfrac{(-1)^{1-Z_1}}{\widehat\pi_{\dte}(Z_1,X){{\widehat\delta}_{\DTEone}}(X)}-\dfrac{(-1)^{1-Z_1}}{\pi_{\dte}(Z_1,X){{\widehat\delta}_{\DTEone}}(X)}\right\}\\\times\left[\left\{\eta_{\DTEone}(X)-{{\widehat\eta}}_{\DTEone}(X)\right\}+\left\{{\widehat\mu}_{\dte}(X)-\mu_{\dte}(X)\right\}{\widehat\omega}_{\DTEone}(X)\right]\end{bmatrix}\\
&=\P\left[\left\{\frac{Z_1\delta_{{\DTEone}}(X)}{\widehat\pi_{\dte}(Z_1,X){{\widehat\delta}_{\DTEone}}(X)}-\frac{Z_1\delta_{{\DTEone}}(X)}{\pi_{\dte}(Z_1,X)\delta_{{\DTEone}}(X)}\right\}\times\left\{\omega_{\DTEone}(X)-{\widehat\omega}_{\DTEone}(X)\right\}\right]\\&\;\;\;\;\;\;\;+\P\begin{pmatrix}\left\{\dfrac{(-1)^{1-Z_1}}{\widehat\pi_{\dte}(Z_1,X){{\widehat\delta}_{\DTEone}}(X)}-\dfrac{(-1)^{1-Z_1}}{\pi_{\dte}(Z_1,X){{\widehat\delta}_{\DTEone}}(X)}\right\}\\\times\left[\left\{\eta_{\DTEone}(X)-{{\widehat\eta}}_{\DTEone}(X)\right\}+\left\{{\widehat\mu}_{\dte}(X)-\mu_{\dte}(X)\right\}{\widehat\omega}_{\DTEone}(X)\right]\end{pmatrix}\\
& =\P\left[\left(\frac1{\widehat\pi_{\dte}(Z_1,X){{\widehat\delta}_{\DTEone}(X)}}-\frac1{\pi_{\dte}(Z_1,X)\delta_{\DTEone}(X)}\right)\delta_{{\DTEone}} (X)Z_1\left\{{\omega}_{\DTEone}(X)-\widehat\omega_{\DTEone}(X)\right\}\right]\\&\;\;\;\;\;\;\;+\P\begin{pmatrix}\left\{\dfrac{(-1)^{1-Z_1}}{\widehat\pi_{\dte}(Z_1,X){{\widehat\delta}_{\DTEone}(X)}}-\dfrac{(-1)^{1-Z_1}}{\pi_{\dte}(Z_1,X){{\widehat\delta}_{\DTEone}(X)}}\right\}\\\times\left[\left\{\eta_{\DTEone}(X)-{{\widehat\eta}}_{\DTEone}(X)\right\}+\left\{{\widehat\mu}_{\dte}(X)-\mu_{\dte}(X)\right\}{\widehat\omega}_{\DTEone}(X)\right]\end{pmatrix}\\
&=\P\left\{\left(\frac{\pi_{\dte}(Z_1,X)\delta_{{\DTEone}}(X)-\widehat\pi_{\dte}(Z_1,X){{\widehat\delta}_{\DTEone}(X)}}{\widehat\pi_{\dte}(Z_1,X){{\widehat\delta}_{\DTEone}(X)}\pi_{\dte}(Z_1,X)\delta_{\DTEone}(X)}\right)\delta_{{\DTEone}}(X) Z_1\left\{\omega_{\DTEone}(X)-{\widehat\omega}_{\DTEone}(X)\right\}\right\}\\&\;\;\;\;\;\;\;+\P\begin{pmatrix}\left\{\dfrac{(-1)^{1-Z_1}}{\widehat\pi_{\dte}(Z_1,X){{\widehat\delta}_{\DTEone}(X)}}-\dfrac{(-1)^{1-Z_1}}{\pi_{\dte}(Z_1,X){{\widehat\delta}_{\DTEone}(X)}}\right\}\\\times\left[\left\{\eta_{\DTEone}(X)-{{\widehat\eta}}_{\DTEone}(X)\right\}+\left\{{\widehat\mu}_{\dte}(X)-\mu_{\dte}(X)\right\}{\widehat\omega}_{\DTEone}(X)\right]\end{pmatrix}\\
&=\P\left\{\frac{\delta_{{\DTEone}}(X)Z_1}{\widehat\pi_{\dte}(Z_1,X){{\widehat\delta}_{\DTEone}(X)}\pi_{\dte}(Z_1,X)\delta_{\DTEone}(X)}\begin{bmatrix}
    \left\{\pi_{\dte}(Z_1,X)\delta_{\DTEone}(X)-\widehat\pi_{\dte}(Z_1,X){{\widehat\delta}_{\DTEone}(X)}\right\} \\
    \times\left\{\omega_{\DTEone}(X)-{\widehat\omega}_{\DTEone}(X)\right\}
\end{bmatrix}\right\}\\&\;\;\;\;\;\;\;+\P\left(\frac{(-1)^{1-Z_1}}{{{\widehat\delta}_{\DTEone}(X)}}\frac{\pi_{\dte}(Z_1,X)-\widehat\pi_{\dte}(Z_1,X)}{\pi_{\dte}(Z_1,X)\widehat\pi_{\dte}(Z_1,X)}\times\left[\left\{\eta_{\DTEone}(X)-{{\widehat\eta}}_{\DTEone}(X)\right\}+\left\{{\widehat\mu}_{\dte}(X)-\mu_{\dte}(X)\right\}{\widehat\omega}_{\DTEone}(X)\right]\right), 
\end{align*}
where the fourth equality holds due to the fact
\begin{align*}  \P\begin{pmatrix} \dfrac{(-1)^{1-Z_1}}{\pi_{\dte}(Z_1,X){{\widehat\delta}_{\DTEone}(X)}} \left[\left\{\eta_{\DTEone}(X)-{{\widehat\eta}}_{\DTEone}(X)\right\}+\left\{{\widehat\mu}_{\dte}(X)-\mu_{\dte}(X)\right\}{\widehat\omega}_{\DTEone}(X)\right]\end{pmatrix}=0
 \end{align*}
the sixth equality hold due to 
$(-1)^{1-Z_1}\times Z_1=Z_1$.

By the Cauchy–Schwarz inequality and Assumption \ref{assump:regular}(a), (c), and (d), it follows that for some constant $C$, we have 
\begin{align*}
     \P_n&  \{	\varphi(O;  \widehat\theta)\} - \P\{	 \varphi(O; \theta^*) \}  \\&\leq  \begin{Bmatrix}
   C {\big\|\omega_{\DTEone}(X)-{\widehat\omega}_{\DTEone}(X)\big\|}_2{\big\|\pi_{\dte}(Z_1,X)\delta_{\DTEone}(X)-\widehat\pi_{\dte}(Z_1,X){{\widehat\delta}_{\DTEone}(X)}\big\|}_2\\\addlinespace[1mm]+C{\big\|\pi_{\dte}(Z_1,X)-\widehat\pi_{\dte}(Z_1,X)\big\|}_2 {\big\|\mu_{\dte}(X)-{\widehat\mu}_{\dte}(X)\big\|}_2 \\\addlinespace[1mm]+C{\big\|\pi_{\dte}(Z_1,X)-\widehat\pi_{\dte}(Z_1,X)\big\|}_2  {\big\|{{\widehat\eta}}_{\DTE}(X)-\eta_{\DTEone}(X)\big\|}_2 
\end{Bmatrix}.
\end{align*}
So we have \begin{align*}
          \P_n \{	\varphi(O;  \widehat\theta)\} - \P\{	 \varphi(O; \theta^*) \}   = 
          O_p\begin{pmatrix}\addlinespace[0.25mm]
    n^{-1/2} + \lVert\omega_{\DTEone}-{\widehat\omega}_{\DTEone}\rVert_2 \lVert\pi_{\dte}\delta_{\DTEone}-\widehat\pi_{\dte}{\widehat\delta}_{\DTEone}\rVert_2 \\\addlinespace[1mm]+ \lVert\pi_{\dte}-\widehat\pi_{\dte}\rVert_2 \lVert\mu_{\DTEone}-
{\widehat\mu}_{\DTEone}\rVert_2
+\lVert\pi_{\dte}-\widehat\pi_{\dte}\rVert_2 \lVert{\widehat\eta}_{\DTEone}-\eta_{\DTEone}\rVert_2 \\\addlinespace[0.25mm]
\end{pmatrix}.  
\end{align*}
By the empirical process theory, we have $ \P_n \{	\varphi(O;  \widehat\theta)\} - \P\{	 \varphi(O; \theta^*) \} =o_p(1)$. This completes the proof.
     \subsection{The proof of Theorem \ref{THM: NON-MARCHINE-LEARNING}(ii)}
     The previous proof follows a similar logic in Section \ref{proff:thm3-1}. 
  By the Cauchy–Schwarz inequality and Assumption \ref{assump:regular}(a), (b), and (c), it follows that for some constant $C$, we have 
\begin{align*}
     \P_n&  \{	\varphi(O;  \widehat\theta)\} - \P\{	 \varphi(O; \theta^*) \}  \\&\leq  \begin{Bmatrix}
   C {\big\|\omega_{\DTEone}(X)-{\widehat\omega}_{\DTEone}(X)\big\|}_2{\big\|\pi_{\dte}(Z_1,X)\delta_{\DTEone}(X)-\widehat\pi_{\dte}(Z_1,X){{\widehat\delta}_{\DTEone}(X)}\big\|}_2\\\addlinespace[1mm]+C{\big\|\pi_{\dte}(Z_1,X)-\widehat\pi_{\dte}(Z_1,X)\big\|}_2 {\big\|\mu_{\dte}(X)-{\widehat\mu}_{\dte}(X)\big\|}_2 \\\addlinespace[1mm]+C{\big\|\pi_{\dte}(Z_1,X)-\widehat\pi_{\dte}(Z_1,X)\big\|}_2  {\big\|{{\widehat\eta}}_{\DTE}(X)-\eta_{\DTEone}(X)\big\|}_2 
\end{Bmatrix} \\&=o_p(n^{-1/2}).
\end{align*}
Additionally imposing Assumption  \ref{assump:regular}(b) ,  we have \begin{align*} 
          \P_n \{	\varphi(O;  \widehat\theta)\} - \P\{	 \varphi(O; \theta^*) \}   = (\P_n - \P) \varphi(O; \theta^*)    + o_p(n^{-1/2}) = (\P_n - \P) \varphi(O; \theta)  + o_p(n^{-1/2}).
\end{align*}
 This completes the proof. 
\section{The proof of Lemma \ref{thm:solution-unique}}
 \begin{proof} 
    We first show that for any integrable function $v(X)$, $$E\{Z_1 \phi(1 , X) v(X)\}= E\{v(X)\}$$ holds if and only if $\psi_1(1 , X)=\pi^{-1}_{\dte}(1 , X)$. The sufficient part is obvious. We prove the necessary part. Let $v(X)=\exp \left(a^{\T} X \cdot i\right)$ be the test function, where $a \in \mathbb{R}^r$. By assumption,
\begin{align*}
\begin{aligned}
 E&\left\{\exp \left(a^{\T} X \cdot i\right)\right\}=E\left\{Z_1\psi(1 , X) \exp \left(a^{\T} X \cdot i\right)\right\}\\& = E\left[Z_1\left\{\psi_1(1 , X)-\pi^{-1}_{\dte}(1 , X)\right\} \exp \left(a^{\T} X \cdot i\right)\right]+E\left\{Z_1\pi^{-1}_{\dte}(1 , X) \exp \left(a^{\T} X \cdot i\right)\right\} .
\end{aligned}
\end{align*}
By the tower property of conditional expectation, $E\left\{Z_1\pi^{-1}_{\dte}(1 , X) \exp \left(a^{\T} X \cdot i\right)\right\}= E\left\{\exp \left(a^{\T} X \cdot i\right)\right\}$. Then,
\begin{align*}
0=E\left[Z_1\left\{\psi (1 , X)-\pi^{-1}_{\dte}(1 , X)\right\} \exp \left(a^{\T} X \cdot i\right)\right]=E\left[\{\pi_{\dte}(1 , X)\psi(1 , X)-1\} \exp \left(a^{\T} X \cdot i\right)\right]
\end{align*}
holds for all $a \in \mathbb{R}^r$. Due to the uniqueness of Fourier transform we can obtain $\pi_{\dte}(1 , X)\psi(1 , X)-1=0$, which implies $\psi (1 , X)=\pi^{-1}_{\dte}(1 , X)$ almost surely.    The results of $E\{(1-Z_1) w(0 , X) v(X)\}= E\{v(X)\}$ can be similarly established. 

Given $\psi (Z_1 , X)=\pi^{-1}_{\dte}(Z_1 , X)$, 
we next show that 
\begin{align*}
    E\{\phi(X) v(X)\}&=E\{ (-1)^{1-Z_1}D_1D_2 \psi(Z_1  ,X) v(X)\}\\&=E\{ (-1)^{1-Z_1}D_1D_2  \pi^{-1}_{\dte}(Z_1 , X)v(X)\}\\&=E\{ Z_1D_1D_2  \pi^{-1}_{\dte}(1 , X)v(X)-(1-Z_1) D_1D_2  \pi^{-1}_{\dte}(0 , X)v(X)\}
    \\&=E\left[\{ E(D_1D_2 \mid Z=1,X)  -E(D_1D_2 \mid Z=0,X)  \} v(X)\right]
    \\&=E \{ \delta_{\DTEone}(X)v(X)  \} .
\end{align*}Let $v(X)=\exp \left(a^{\T} X \cdot i\right)$ be the test function, where $a \in \mathbb{R}^r$. Then  we have,  \begin{align*}
0&=E\left[ \left\{ \delta_{\DTEone}(X)-\phi(X)\right\} \exp \left(a^{\T} X \cdot i\right)\right] .
\end{align*}Due to the uniqueness of Fourier transform we can obtain $ \delta_{\DTEone}(X)-\phi(X)=0$, which implies $ \delta_{\DTEone}(X)=\phi(X)$ almost surely. 
\end{proof} 

\section{Regular conditions for nonparmateric estimator}
To establish the large sample properties of $\widehat{\Delta}_{\DTEone}^\np$, we impose the following regular conditions:
\begin{condition}
\label{assumption:regular condition}
    We consider the following conditions:

 (a) $E\{\delta_{\DTEone}^{-1}(X)\}<\infty$ and $E\{\delta_{\DTEone}^{-1}(X)Y_1^2\}<\infty$.
 
  (b) The support $\mathcal{X}$ of the $r$-dimensional covariate $X$ is a Cartesian product of $r$ compact intervals.

 (c)  There exist two positive constants $\bar{C}$ and $\underline{C}$ such that $$ 
0<\underline{C} \leq \alpha_{\min }\left(E\left\{v_{\minK}(X) v_{\minK}^{\T}(X)\right\}\right) \leq \alpha_{\max }\left(E\left\{v_{\minK}(X) v_{\minK}^{\T}(X)\right\}\right) \leq \bar{C}<\infty,$$  
where $\alpha_{\max }\left(E\left\{v_{\minK}(X) v_{\minK}^{\T}(X)\right\}\right)$ and $ \alpha_{\min }\left(E\left\{v_{\minK}(X) v_{\minK}^{\T}(X)\right\} \right)$ denote the largest and smallest eigenvalues of $E\left\{v_{\minK}(X) v_{\minK}^{\T}(X)\right\}$, respectively.

 (d)  There exist three positive constants $\infty>\eta_1>\eta_2>1>$ $\eta_3>0$ such that $\eta_2 \leq \pi^{-1}_{\dte}(z_1 , x) \leq \eta_1$ and $-\eta_3 \leq \delta_{\DTEone}(x) \leq \eta_3, \forall(z_1, x) \in$ $\{0,1\} \times \mathcal{X}$.

 (e)  There exist $\alpha_{\minK}^{\star}, \beta_{\minK}^{\star}$, and $\gamma_{\minK}^{\star}$ in $\mathbb{R}^{\minK}$ and $\beta>0$ such that for any $z_1 \in\{0,1\}$,
\begin{gather*} 
\sup _{x \in \mathcal{X}} \mid \dot m^{-1} (\delta_{\DTEone}(x))-\gamma_{\minK}^{\star\T}v_{\minK}(x) \mid=O_p(K^{-\beta}),\\
\sup _{x \in \mathcal{X}} \mid \dot h^{-1}(\pi^{-1}_{\dte}(z_1,x))-z_1  \alpha_{\minK}^{\star\T} v_{\minK}(x)-(1-z_1)   \beta_{\minK}^{\star\T}v_{\minK}(x) \mid=O_p(K^{-\beta}) .
\end{gather*}

 (f) $\zeta(K)^4 K^3 / n \rightarrow 0$ and $\sqrt{n} K^{-\beta} \rightarrow 0$, where $\zeta(K)= \sup _{x \in \mathcal{X}}\left\|v_{\minK}(x)\right\|$ and $\|\cdot\|$ is the usual Frobenius norm defined by $\|A\|=$ $\sqrt{\operatorname{tr}\left(A A^{\T}\right)}$ for any matrix $A$.
\end{condition} 
\section{The proof of Proposition \ref{eq:asymp-var}}
The logic of this proof is entirely consistent with   Theorem 2 in \citet{ai2022simple}, and we omit it here.
\section{The proof of Proposition  \ref{thm:iden-ite}}
 
We now verify the results for Proposition  \ref{thm:iden-ite} (a),
\begin{gather*}
    E\bigg\{  \frac{(-1)^{1-Z_1} D_2 Y_1}{ \pi_{\ite}(Z_1,Z_2,X) \delta_{\ITEone}(Z_2,X) } \mid Z_2,X   \bigg\} = \frac{ E(D_2 Y_1\mid Z_2,X,Z_1=1) - E(D_2 Y_1 \mid Z_2,X,Z_1=0) }{ \delta_{\ITEone}(Z_2,X)},\\
    E\bigg\{  \frac{ (-1)^{1-Z_1} (1-D_2) Y_1 }{ \pi_{\ite}(Z_1,Z_2,X) \delta_{\ITEzero}(Z_2,X)} \mid Z_2,X \bigg\} = \frac{ E( (1-D_2) Y_1\mid Z_2,X,Z_1=1  ) - E((1-D_2) Y_1 \mid Z_2,X,Z_1=0) }{ \delta_{\ITEzero}(Z_2,X)},
\end{gather*}
which implies 
\begin{align*}
\omega_{\ITE}(Z_2,X) 
= &E\{ Y_1(1,d) -Y_1(0,d)\mid Z_2,X \}  \\
=  &E\bigg\{  \frac{(-1)^{1-Z_1} \mathbb{I}( D_2=d) Y_1}{ \pi_{\ite}(Z_1,Z_2,X) \delta_{\ITE}(Z_2,X) }  \mid Z_2,X   \bigg\},
\end{align*}
and Proposition  \ref{thm:iden-ite} (a) holds. Next, we verify Proposition  \ref{thm:iden-ite} (b),
\begin{align*}
   & E\bigg[   \frac{(-1)^{1-Z_1} }{\pi_{\ite}(Z_1,Z_2,X) } \mathbb{I}(D_2=d) \{ Y_1 - D_1 \omega_{\ITE}(Z_2,X) \}  \mid Z_2,X  \bigg] \\
  =&   E\{ \mathbb{I}( D_2=d) Y_1\mid Z_2,X,Z_1=1 \} - E\{ \mathbb{I}(D_2=d) Y_1 \mid Z_2,X,Z_1=0 \} - \delta_{\ITE}(Z_2,X) \omega_{\ITE}(Z_2,X)\\
  =&0.
\end{align*}
Finally, we directly verify Proposition  \ref{thm:iden-ite} (c),
\begin{align*}
 & E[  \mathbb{I}(D_2=d) \{ Y_1 - D_1 \omega_{\ITE}(Z_2,X) \} - \eta_{\ITE}(Z_2,X) + \mu_{\ITE}(Z_2,X) \omega_{\ITE}(Z_2,X) \mid Z_2,X,Z_1] \\
 =& \eta_{\ITE}(Z_2,X) + Z_1 \omega_{\ITE}(Z_2,X) \delta_{\ITE}(Z_2,X) - E\{ \mathbb{I}(D_2=d) D_1 \mid Z_2,X \} \omega_{\ITE}(Z_2,X) - \eta_{\ITE}(Z_2,X) \\
 &+ \mu_{\ITE}(Z_2,X) \omega_{\ITE}(Z_2,X) \\
 =&  Z_1 \omega_{\ITE}(Z_2,X) \delta_{\ITE}(Z_2,X)  + \mu_{\ITE}(Z_2,X) \omega_{\ITE}(Z_2,X) - \omega_{\ITE}(Z_2,X) (  E\{ \mathbb{I}(D_2=d) D_1 \mid Z_2,X,Z_1=0 \} \\
 &+ Z_1 [ E\{ \mathbb{I}(D_2=d) D_1 \mid Z_2,X,Z_1=1 \} - E\{ \mathbb{I}(D_2=d) D_1 \mid Z_2,X,Z_1=0 \}  ] )\\
 =&0,
\end{align*}
where we use the fact
\begin{align*}
    &E\{  \mathbb{I}(D_2=d) Y_1 \mid Z_2,X,Z_1 \} \\
    = &E\{  \mathbb{I}(D_2=d) Y_1 \mid Z_2,X,Z_1=0\} + Z_1 [ E\{  \mathbb{I}(D_2=d) Y_1 \mid Z_2,X,Z_1=1\} - E\{  \mathbb{I}(D_2=d) Y_1 \mid Z_2,X,Z_1=0\}  ] \\
    =& \eta_{\ITE}(Z_2,X) + Z_1 \delta_{\ITE}(Z_2,X) \omega_{\ITE}(Z_2,X),
    \end{align*}
and
    \begin{align*}
    &E\{  \mathbb{I}(D_2=d) D_1 \mid Z_2,X,Z_1 \} \\
    = &E\{  \mathbb{I}(D_2=d) D_1 \mid Z_2,X,Z_1=0\} + Z_1 [ E\{  \mathbb{I}(D_2=d) D_1 \mid Z_2,X,Z_1=1\} - E\{  \mathbb{I}(D_2=d) D_1 \mid Z_2,X,Z_1=0\}  ]\\
    = &\mu_{\ITE}(Z_2,X) + Z_1 \delta_{\ITE}(Z_2,X). 
\end{align*}

\section{The proof for identification of interaction effect}
The proof follows the similar logic in \eqref{eq:wald-estimand}.
 \section{Estimation procedure } 
\label{sec:par-est} 
For ease of reference, we rewrite some of the necessary notations below. 
 Let $\pi_{\dte}(Z_1,X)=\pr(Z_1\mid X)$,  $\mu_{\DTEone}(X)=E(D_{1}D_2\mid Z_1=0 ,X)$,   $\eta_{\DTEone}(X)=E(Y_1D_2\mid Z_1=0,X)$,  $\delta_{\DTEone}(X)= E(D_{1} D_2 \mid Z_{{1}}=1,X)-E(D_{1} D_2 \mid Z_{{1}}=0,X) $.  
Consider the working parametric models $$ \{\pi_{\dte}(Z_1, X; \xi_1),\mu_\DTEone(X; \xi_2),  \eta_\DTEone(X; \xi_3),  {\delta} _\dte( D_1, Z_1, X; \xi_4),   \omega_\DTEone(X; \xi_5)\} $$ indexed by finite-dimensional parameters $ \xi = (\xi_1^\T, \ldots, \xi_5^\T)^\T $.   Under Assumptions \ref{assump:iv} and \ref{assump:no-interaction}, a three-step procedure to estimate the nuisance parameters is as follows:

\begin{itemize}
    \item[]  {\it Step} 1: Solve the score equation 
  $ \mathbb{P}_n\{  G_1(O;  \xi_1 ) \}=0
   $ 
    to obtain $ \widehat \xi_1 $, where 
       \[
    G_1(O;  \xi_1 ) :=  
     \dfrac{\partial \log \pr( Z_1\mid    X;  \xi_1  ) }{\partial  \xi_1    }.  
    \] 

  \item[] {\it Step} 2: Solve the score equation 
  $ \mathbb{P}_n\{  G_2(O;  \xi_2 ) \}=0
   $ 
    to obtain $ \widehat{\xi}_2 $, where
       \[
    G_2(O;   \xi_2) :=  
     \dfrac{\partial \log   \pr( D_1=1, D_2=1\mid   Z_1=0, X;  \xi_2  )   }{\partial  {\xi}_2   }.   
    \] 
    
  \item[] {\it Step} 3: Solve  estimating  equation 
  $ \mathbb{P}_n\{  G_3(O;  \xi_3 ) \}=0
   $ 
    to obtain $ \widehat{\xi}_3 $, where
       \[
    G_3(O;   \xi_3) :=  
(1-Z_1)\{Y_1D_1 - \eta_{\DTEone}(X;\xi_3)\}
    \] 

 \item[]  {\it Step} 4:  Solve estimating equation $
   \mathbb{P}_n\{G_4(O; \widehat{\xi}_1,\widehat{\xi}_2, {\xi}_4)\}=0$ 
    to obtain $  \widehat{\xi}_4$, where
    \[
    G_4(O; \widehat{\xi}_1, \widehat{\xi}_1, {\xi}_4) := A_1(X) \left\{ D_1D_2 - \delta_{\DTEone}(X; \xi_4) Z_1 - \mu_\DTEone(X; \widehat\xi_2)\right\} \frac{ (- 1)^{1-Z_1}}{\pi_{\dte}(Z_1,X;\widehat\xi_1)} = 0,
    \]
where  $ A_1( X) $ is a user-speciﬁed vector function of the same dimension as $  \xi_5$.  
     \item[]  {\it Step} 5:  Solve estimating equation $
   \mathbb{P}_n\{G_5(O; \widehat{\xi}_1, \widehat{\xi}_2,\widehat{\xi}_3, {\xi}_5)\}=0$ 
    to obtain $  \widehat{\xi}_5$, where
    \[
    G_5(O; \widehat{\xi}_1,\widehat{\xi}_2, \widehat{\xi}_3, {\xi}_5) := A_2(X) \left\{ \begin{pmatrix}
        Y_1D_2 - \omega_{\DTEone}(X; \xi_5) D_1D_2 - \eta_\DTEone(X; \widehat\xi_3)\\\addlinespace[1mm] +\mu_\DTEone(X; \widehat\xi_2) \omega_{\DTEone}(X; \xi_5) 
    \end{pmatrix}\right\} \frac{ (- 1)^{1-Z_1}}{\pi_{\dte}(Z_1,X;\widehat\xi_1)} = 0,
    \] 
where  $ A_2( X) $ is a user-speciﬁed vector function of the same dimension as $  \xi_5 $.   
\item[]  {\it Step} 6: Derive the plug-in estimator $\widehat{\Delta}_{\DTEone}$ for $\Delta_{\DTEone}$: $$\widehat{\Delta}_{\DTEone} =\begin{array}{c} \mathbb{P}_n\left[\begin{array}{c}\dfrac{(-1)^{1-Z_1}}{\widehat\pi_{\dte} (Z_1,X;\widehat\xi_1){\widehat\delta}_{\DTEone}(X;\widehat\xi_4)}\begin{Bmatrix}
    D_2Y_1- {\widehat\eta}_{\DTEone}(X;\widehat\xi_3)\\-D_1D_2\widehat\omega_{\DTEone}(X;\widehat\xi_5)\\+  \widehat\mu_{\DTEone}(X;\widehat\xi_2) \widehat\omega_{\DTEone}(X;\widehat\xi_5)
\end{Bmatrix}+\widehat\omega_{\DTEone}(X;\widehat\xi_5) \end{array}\right] \end{array}.$$
\end{itemize}

\end{document}